\newtheorem{theorem}{Theorem}
\newtheorem{lemma}[theorem]{Lemma}
\newtheorem{corollary}[theorem]{Corollary}
\newtheorem{observation}[theorem]{Observation}
\newtheorem{definition}{Definition}
\newtheorem{conjecture}{Conjecture}
\title{Combinatorial Depth Measures for Hyperplane Arrangements}
\author{Patrick Schnider \\ Department of Computer Science,
        ETH Z\"{u}rich, Switzerland \\ {\tt patrick.schnider@inf.ethz.ch}
        \and
        Pablo Sober\'{o}n \\ Department of Mathematics, Baruch College, City University of New York, USA \\
        Department of Mathematics, The Graduate Center, City University of New York, USA \\ {\tt psoberon@gc.cuny.edu}} 
\date{}
\begin{document}

\maketitle

\begin{abstract}
Regression depth, introduced by Rousseeuw and Hubert in 1999, is a notion that measures how good of a regression hyperplane a given query hyperplane is with respect to a set of data points. Under projective duality, this can be interpreted as a depth measure for query points with respect to an arrangement of data hyperplanes. The study of depth measures for query points with respect to a set of data points has a long history, and many such depth measures have natural counterparts in the setting of hyperplane arrangements. For example, regression depth is the counterpart of Tukey depth. Motivated by this, we study general families of depth measures for hyperplane arrangements and show that all of them must have a deep point. Along the way we prove a Tverberg-type theorem for hyperplane arrangements, giving a positive answer to a conjecture by Rousseeuw and Hubert from 1999. We also get three new proofs of the centerpoint theorem for regression depth, all of which are either stronger or more general than the original proof by Amenta, Bern, Eppstein, and Teng. Finally, we prove a version of the center transversal theorem for regression depth.
\end{abstract}

\section{Introduction}\label{sec:introduction}

%\patrick{This is just a vague sketch of the results. Feel free to change anything in the structure and notation. Also, if you can think of better names for the considered depth measures, feel free to change them :)}

A central topic in combinatorial geometry and computational geometry is the study of structural properties of finite families of points in Euclidean spaces.  Studying which sets can be separated from others by hyperplanes is a natural question, which leads us to study combinatorial properties of convex sets.  Classic results, such as Tverberg's theorem \cite{Tverberg:1966tb} and Rado's centerpoint theorem \cite{Rado:1946ud} follow from this line of thought.

In some cases, instead of being provided our data as a finite set of points in $\mathbb{R}^d$, we might receive it as a set of hyperplanes.  Understanding which results for families of points transfer to families of hyperplanes is a natural question.

Given a hyperplane arrangement $A$ in $\mathbb{R}^d$ and a point $q$, we first consider the depth of $q$ with respect to $A$ as follows.

\begin{definition}
The regression depth of a query point $q$ with respect to hyperplane arrangement $A$, denoted by $\text{RD(A,q)}$, is the minimum number of hyperplanes in $A$ intersected by or parallel to any ray emanating from $q$.
\end{definition}

Note that if $q$ lies on a hyperplane $H$, then any ray emanating from $q$ intersects $H$.  Regression depth has been widely studied \cite{Amenta, Fulek:2009vz, Rousseeuw1999, Rousseeuw1999a, Kreveld2008}.  In this manuscript we provide new structural results for regression depth, related to Tverberg's theorem and \textit{enclosing depth}.  In particular, given a finite arrangement $A$ of hyperplanes in $\mathbb{R}^d$, we might measure the depth of a point $q$ in $A$ in several different ways, so we study general properties of depth measures with respect to arrangements of hyperplanes.  This follows a similar approach recently taken for depth measures with respect to finite families of points \cite{Enclosing}.

Given an arrangement $A$ of $n$ hyperplanes, the existence of points with regression depth at least $n/(d+1)$ has been established by Amenta, Bern, Eppstein, and Teng \cite{Amenta}, and later by Mizera \cite{Mizera} as well as Karasev \cite{Karasev:2008jl}.  This can be considered a hyperplane version of Rado's centerpoint theorem \cite{Rado:1946ud}.  We give three new proofs of the existence of points with large regression depth.  First, we prove that the centerpoint theorem for regression depth is the consequence of a Tverberg-type theorem, confirming a conjecture of Rousseeuw and Hubert \cite{Rousseeuw1999}.  

\begin{restatable}{theorem}{tverberg}\label{thm:tverberg}
Let $r,d$ be positive integers and $A$ be an arrangement of at least $(r-1)(d+1)+1$ hyperplanes in $\mathbb{R}^d$.  Then, there exists a point $q$ in $\mathbb{R}^d$ and a partition of $A$ into $r$ parts such that $q$ has positive regression depth with respect to each of the $r$ parts.
\end{restatable}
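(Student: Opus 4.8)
The plan is to translate positive regression depth into a statement about the origin lying in a convex hull of reoriented normal vectors, and then to produce the partition by Sarkaria's tensor trick together with B\'{a}r\'{a}ny's colorful Carath\'{e}odory theorem. First I would establish a reformulation of regression depth itself. For a point $q$ and a hyperplane $H=\{x:\langle a,x\rangle=b\}$ with $q\notin H$, let $n_H(q)=\mathrm{sign}(\langle a,q\rangle-b)\cdot a$ be the normal of $H$ pointing toward the open side that contains $q$; a ray from $q$ in direction $v$ then meets $H$ or is parallel to $H$ precisely when $\langle n_H(q),v\rangle\le 0$. Collecting, for a subarrangement $B$, the vector $n_H(q)$ for every $H\in B$ with $q\notin H$ together with one copy of the origin for every $H\in B$ with $q\in H$, we obtain a multiset $N(B,q)\subset\mathbb{R}^d$, and the dichotomy just described (together with Gordan's theorem / linear programming duality to deal with directions parallel to a hyperplane) shows that $\mathrm{RD}(B,q)$ is exactly the halfspace (Tukey) depth of the origin with respect to $N(B,q)$. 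In particular, $q$ has \emph{positive} regression depth with respect to $B$ if and only if $0\in\mathrm{conv}\,N(B,q)$, so the theorem becomes: there is a point $q$ and a partition $A=A_1\sqcup\cdots\sqcup A_r$ with $0\in\mathrm{conv}\,N(A_j,q)$ for every $j$.

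For the main step I would lift each $H_i=\{x:\langle a_i,x\rangle=b_i\}$ to $\widehat a_i=(a_i,-b_i)\in\mathbb{R}^{d+1}$ and a candidate query point $q$ to $\widehat q=(q,1)$, so that $n_{H_i}(q)$ is the projection to the first $d$ coordinates of $\mathrm{sign}\langle\widehat a_i,\widehat q\rangle\cdot\widehat a_i$, and then run Sarkaria's construction. Fix vectors $v_1,\dots,v_r\in\mathbb{R}^{r-1}$ forming a regular simplex centred at the origin, so that $\sum_j v_j=0$ is the only linear dependence among them, and for each $i\in[n]$ form the colour class $C_i=\{(\sigma a_i,1)\otimes v_j:\ j\in[r],\ \sigma\in\{-1,+1\}\}\subset\mathbb{R}^{(d+1)(r-1)}$, which contains the origin in its convex hull. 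Since $n=(r-1)(d+1)+1=(d+1)(r-1)+1$, the colorful Carath\'{e}odory theorem supplies a rainbow choice $\{(\sigma_i a_i,1)\otimes v_{j(i)}\}_{i\in[n]}$ with $0$ in its convex hull, say $\sum_i\lambda_i(\sigma_i a_i,1)\otimes v_{j(i)}=0$ with $\lambda_i\ge 0$ and $\sum_i\lambda_i=1$. Grouping this relation according to the value of $j$ and using the single dependence $\sum_j v_j=0$ forces $\sum_{i:\,j(i)=j}\lambda_i(\sigma_i a_i,1)$ to be independent of $j$; the last coordinate then shows that each part $A_j=\{i:j(i)=j\}$ is non-empty and carries $\lambda$-mass $1/r$, and the first $d$ coordinates show that some common point $p\in\mathbb{R}^d$ lies in $\mathrm{conv}\{\sigma_i a_i:i\in A_j\}$ for every $j$.

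The remaining — and main — obstacle is to reconcile this output with the reformulation of the first paragraph: one needs the common point $p$ to be the origin, and the sign vector $(\sigma_i)$ to be realizable by an actual point $q$, i.e.\ to be a covector of the arrangement $A$ (with the hyperplanes through $q$ absorbed as ``free'' colour classes that may land in any part). I expect to handle this by building the position of $q$ into the colorful step rather than letting it emerge unconstrained: for example, by running the argument separately over the finitely many covectors of $A$; or by a parametric/continuity argument that follows the rainbow choice as $q$ is moved through the cells of the arrangement; or by a constrained version of the colorful Carath\'{e}odory theorem in which the admissible colours are prescribed by a fixed orientation. Since the bound $(r-1)(d+1)+1$ is exactly tight, the hyperplanes incident to $q$ must be distributed among the $r$ parts with care, and carrying out this bookkeeping is where the real work of the proof lies.
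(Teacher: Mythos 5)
Your first paragraph is fine and matches the paper's own dictionary (Observation~\ref{obs:duality}): positive regression depth of $q$ with respect to a part is equivalent to $0$ lying in the convex hull of the normals reoriented toward $q$'s side, equivalently $q\in\mathrm{conv}\{p(h)\}$ for the $q$-dependent dual points. The Sarkaria/colorful-Carath\'eodory step in your second paragraph is also internally correct as stated: it yields a partition into $r$ nonempty parts, a sign vector $(\sigma_i)$, and a single point $p$ with $p\in\mathrm{conv}\{\sigma_i a_i : i\in A_j\}$ for every $j$. But this is not the statement you need, and the discrepancy is not bookkeeping --- it is the entire content of the theorem. What you need is that the signs $\sigma_i$ form the sign pattern of an actual cell of the arrangement (a realizable covector, determined by the location of $q$) \emph{and} that the point playing the role of $p$ is the origin; since the vectors $\sigma_i a_i$ carry no information about the offsets $b_i$, the ``common point'' produced by the tensor trick cannot encode a location $q$ in the primal space, and colorful Carath\'eodory is blind to which sign vectors are realizable. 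None of your three suggested fixes resolves this coupling: running the argument per covector gives an output sign vector that need not agree with the covector you fixed (so there is no obvious fixed-point or parity argument to close the loop), and no constrained colorful Carath\'eodory theorem of the required form is available. This self-referential dependence of the point configuration on $q$ is precisely why the general case was open (only $d=2$ and prime-power $r$, by topological methods, were known), so the proposal stops exactly where the real proof has to begin.

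For comparison, the paper resolves the coupling variationally, \`a la Roudneff: for each partition $\pi$ into $r$ parts of size at most $d+1$, define $f_\pi(q)$ as the radius of the smallest ball centered at $q$ meeting $\mathrm{conv}\,X_j(q)$ for every part, where $X_j(q)\subseteq A^*_q$ is the $q$-dependent dual point set; one takes a partition minimizing $\min_q f_\pi$ and a minimizing point $p$, shows via a gradient/perturbation analysis that at a nonzero minimum the ball is tangent to exactly $d+1$ parts whose tangent hyperplanes form a simplex around $p$, and then reassigns one point to produce a partition with strictly smaller minimum, a contradiction. The continuity of $q\mapsto A^*_q$ is what lets the argument track the dependence on $q$ that your combinatorial approach cannot see. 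If you want to salvage your route, you would need a genuinely new ingredient (e.g.\ a colorful Carath\'eodory theorem constrained to realizable sign patterns), and you should expect it to be at least as hard as the variational proof.
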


This was previously known when $d=2$ \cite{Rousseeuw1999} or when $r$ is a prime power \cite{Karasev:2011jv, Karasev:2014ex}. The version for prime powers by Karasev holds with a slightly more restrictive version of regression depth. Based on this result, we define the \textit{hyperplane Tverberg depth} of a point.

%Let $A$ be an arrangement of hyperplanes in $\mathbb{R}^d$.
%An \emph{$r$-partition} of $A$ is a partition of $A$ into $r$ pairwise disjoint subsets $A_1,\ldots A_r\subseteq A$ such that there is a point $q$ for which $\text{RD}(A_i,q)\geq 1$ for all $i\in\{1,\ldots,r\}$.
%In this case we say that the intersection of the partition contains $q$.

\begin{definition}
The hyperplane Tverberg depth of a query point $q$ with respect to hyperplane arrangement $A$, denoted by $\text{HTvD(A,q)}$, is the maximum $r$ such that there is a partition of $A$ into $r$ parts such that $q$ has positive regression depth with respect to each part.
\end{definition}

Our other two proofs are topological, and each also has stronger consequences.  One proof based on a topological version of Helly's theorem shows the existence of points of high \textit{open regression depth}, which is a slightly weaker measure of depth introduced in Section \ref{sec:Helly}.  The last proof, based on properties of vector bundles, works for regression depth in families of weighted arrangements.

Another way to measure the depth of a point with respect to a hyperplane arrangement is via $k$-enclosures.  We say that an arrangement $A$ \emph{$k$-encloses} a query point $q$ if $A$ can be partitioned into $d+1$ pairwise disjoint subsets $A_1,\ldots,A_{d+1}$, each of size $k$, such that for every choice $h_1\in A_1,\ldots,h_{d+1}\in A_{d+1}$ we have that $\text{RD}(\{h_1,\ldots, h_{d+1}\},q)\geq 1$.

\begin{definition}
The hyperplane enclosing depth of a query point $q$ with respect to a hyperplane arrangement $A$, denoted by $\text{HED(A,q)}$, is the maximum $k$ such that there is a sub-arrangement of $A$ which $k$-encloses $q$.
\end{definition}

Given a finite hyperplane arrangement $A$, we prove the existence of points with high hyperplane enclosing depth with respect to $A$.  In particular, our lower bound is linear in $|A|$.  The existence of points with large enclosing depth for families of points has been established by Pach \cite{Pach:1998vx} and by Fabila-Monroy and Huemer \cite{FabilaMonroy2017} (see \cite{Enclosing} for improved constants).

One striking generalization of Rado's centerpoint theorem is the central transversal theorem, proven independently by Dolnikov and by \v{Z}ivaljevi\'c and Vre\'cica \cite{Dolnikov, Zivaljevic}.  In Section \ref{sec:center-transversal} we prove an analogue for hyperplane arrangements.  Given a hyperplane arrangement $A$ in $\mathbb{R}^d$ and a linear subspace $L$ in $\mathbb{R}^d$, we denote by $A \cap L$ the restriction of $A$ to $L$.  In Theorem \ref{thm:center-transversal}, we show that given $d-k+1$ different arrangements of hyperplanes in $\mathbb{R}^d$, there exists a $k$-dimensional linear subspace $L$ such that the restrictions of each arrangement to $L$ share a point with high regression depth.

In particular, just as the central transversal theorem generalizes the ham sandwich theorem, Theorem \ref{thm:center-transversal} has the following corollary.

\begin{corollary}
Let $A_1, \dots, A_d$ be $d$ hyperplane arrangements in $\mathbb{R}^d$.  There exists a line $\ell$ through the origin in $\mathbb{R}^d$ and a point $q \in \ell$ such that each of the two rays in $\ell$ starting from $q$ intersects at least $|A_i|/2$ hyperplanes of $A_i$, for each $i=1,\dots, d$.
\end{corollary}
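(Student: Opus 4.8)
The plan is to apply Theorem~\ref{thm:center-transversal} with $k=1$ and then unwind what regression depth means for a point lying on a line. When $k=1$ we have $d-k+1=d$, so Theorem~\ref{thm:center-transversal} applies to the $d$ given arrangements $A_1,\dots,A_d$ and produces a one-dimensional linear subspace $\ell$, i.e.\ a line through the origin, together with a point $q\in\ell$ such that, for every $i$, the point $q$ has regression depth at least $|A_i\cap\ell|/2$ with respect to the restricted arrangement $A_i\cap\ell$ (the $k=1$ case of the quantitative bound in Theorem~\ref{thm:center-transversal}, whose denominator is $k+1=2$). Now I would translate this. The restriction $A_i\cap\ell$ is an arrangement of points on the line $\ell$, one for every hyperplane of $A_i$ not parallel to $\ell$, and from $q$ there are exactly two rays $R^+,R^-$ inside $\ell$; every point of $\ell$ lies on exactly one of the two closed rays, with the single exception of $q$, which lies on both (consistently with the convention that a hyperplane through $q$ counts as being intersected by every ray from $q$). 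Hence the regression depth of $q$ with respect to $A_i\cap\ell$ equals $\min\{|A_i\cap\ell\cap R^+|,\,|A_i\cap\ell\cap R^-|\}$, and each ray meets at least as many hyperplanes of $A_i$ as it contains points of $A_i\cap\ell$. Combining these facts, each of the two rays in $\ell$ starting from $q$ intersects at least $|A_i\cap\ell|/2$ hyperplanes of $A_i$.

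What is left is to upgrade $|A_i\cap\ell|$ to $|A_i|$, i.e.\ to handle hyperplanes of some $A_i$ that happen to be parallel to the line $\ell$ returned by the theorem; a generic line is parallel to no hyperplane, but we do not get to choose $\ell$, so I would argue by perturbation. Perturb each $A_i$ to an arrangement $A_i^{(n)}\to A_i$ in general position, apply the previous step to the $A_i^{(n)}$ to obtain lines $\ell_n$ through the origin and points $q_n$, pass to a subsequence along which $\ell_n$ converges in the compact space of lines through the origin to a line $\ell$, and then show that $q_n$ stays bounded and converges, along a further subsequence, to a point $q\in\ell$. In the limit the intersection points witnessing the bounds for the $A_i^{(n)}$ converge to points on the closed rays from $q$ inside $\ell$, or to $q$ itself, and in every case the limiting hyperplane of $A_i$ is met by the corresponding ray; since the $|A_i|$ perturbed hyperplanes converge to the $|A_i|$ distinct hyperplanes of $A_i$, each ray meets at least $|A_i|/2$ distinct hyperplanes of $A_i$, as desired. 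I expect this limiting step to be the main obstacle: one must rule out that $q_n$ escapes to infinity, equivalently that $\ell$ ends up parallel to a hyperplane of some $A_i$, and must keep the limiting hyperplanes distinct so that the count does not collapse; by contrast, the reduction to Theorem~\ref{thm:center-transversal} and the unwinding of the definitions are routine.
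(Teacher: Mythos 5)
Your opening step is the paper's entire proof, but you misquote the key bound: Theorem~\ref{thm:center-transversal} with $k=1$ gives $\text{RD}(A_i,q,\ell)\geq |A_i|/2$, with $|A_i|$ the size of the \emph{full} arrangement, not $|A_i\cap\ell|/2$. This is possible because $\text{RD}(A_i,q,L)$ is defined at the start of Section~\ref{sec:center-transversal} as the minimum number of hyperplanes of $A_i$ intersected by \emph{or parallel to} a ray in $L$ emanating from $q$, so hyperplanes of $A_i$ parallel to $\ell$ are counted for both rays (in the extreme case where all hyperplanes are parallel to $L$, the depth is $|A_i|$). Read with this convention --- the same convention as in the definition of regression depth, and the way the corollary's ``intersects'' is intended --- the corollary is immediate from the $k=1$ case and there is nothing to upgrade; that one-line deduction is the paper's derivation.

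The second half of your argument is therefore aimed at a strictly stronger, literal-intersection statement, and the step you yourself flag as the main obstacle (ruling out that $q_n$ escapes to infinity, equivalently that the limiting line is parallel to hyperplanes of some $A_i$) is not just unproved but cannot be filled: take $d=2$, $A_1$ the horizontal lines $y=1,\ldots,y=n$ and $A_2$ the horizontal lines $y=-1,\ldots,y=-n$. For a non-horizontal line $\ell$ through the origin, the lines of $A_1$ meet $\ell$ only at positive parameters and those of $A_2$ only at negative parameters, so a ray whose complementary ray meets $\lceil n/2\rceil$ lines of $A_1$ forces $q$ so high that the other requirement (the opposite ray meeting $\lceil n/2\rceil$ lines of $A_2$) fails; for the horizontal $\ell$, every hyperplane is parallel to $\ell$ and neither ray literally intersects any of them. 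So the literal version is false, and in a perturbation of this configuration the balanced lines $\ell_n$ degenerate to the horizontal direction while the common deep points $q_n$ necessarily run off to infinity, so the limit your compactness argument needs does not exist. The fix is simply to quote the theorem with the correct bound $|A_i|/2$ and the built-in treatment of parallel hyperplanes, and to delete the perturbation step.
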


The corollary above is similar to mass partition results for families of hyperplanes with segments \cite{Bereg:2015cz, RoldanPensado2022}, and to projective versions of the central transversal theorem \cite{Karasev:2014ex}.

\section{Correspondence to depth measures for point sets}
For an arrangement $A$ and a query point $q$, we define the \emph{dual of $A$ at $q$}, denoted by $A^*_q$, as follows.
For each hyperplane $h\in A$, let $p(h)$ be the unique point on $h$ that is closest to $q$.
We define $A^*_q$ as the set formed by all these points, that is, $A^*_q:=\{p(h)\mid h\in A\}$.
Note that if $q$ lies on $k$ hyperplanes, then those $k$ dual points coincide with $q$ in $A^*_q$.

Using this duality, for every depth measure $\rho$ on point sets we can define a corresponding depth measure $\rho^*$ on hyperplane arrangements and vice versa, by setting $\rho^*(A,q)=\rho(A^*_q,q)$.
We have the following observation.

\begin{observation}\label{obs:duality}
\begin{enumerate}
    \item a ray $r$ emanating from $q$ intersects a hyperplane $h$ if and only if the half-space $r^{\perp}$ defined by the hyperplane thorugh $q$ orthogonal to $r$, oriented such that it contains $r$, contains $p(h)$;
    \item the point $q$ has positive regression depth with respect to $h_1,\ldots,h_n$ if and only if it is in the convex hull of $p(h_1),\ldots,p(h_n)$.
    \item the point $q$ lies in the simplex defined by $h_1\ldots,h_{d+1}$ if and only if it is in the interior of the convex hull of $p(h_1),\ldots,p(h_{d+1})$.
\end{enumerate}
\end{observation}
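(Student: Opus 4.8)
All three statements rest on one short computation together with the separating hyperplane theorem, so I would record the computation first. Fix $q$, and for $h\in A$ write $h=\{x\in\mathbb{R}^d:\langle x,n\rangle=c\}$ with $\lVert n\rVert=1$. Since $p(h)$ is the orthogonal projection of $q$ onto $h$ we have $p(h)=q+(c-\langle q,n\rangle)n$, hence for every vector $u$
\[
\langle p(h)-q,\,u\rangle=(c-\langle q,n\rangle)\,\langle n,u\rangle .
\]
For \textit{(1)}, parametrise the ray as $r=\{q+tu:t\ge 0\}$ with $\lVert u\rVert=1$, so that $r^{\perp}=\{x:\langle x-q,u\rangle\ge 0\}$ is the closed half-space bounded by the hyperplane through $q$ orthogonal to $u$ and containing $r$. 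The ray meets $h$ exactly when $t\langle n,u\rangle=c-\langle q,n\rangle$ has a solution $t\ge 0$: if $\langle n,u\rangle\ne 0$ this holds iff $(c-\langle q,n\rangle)\langle n,u\rangle\ge 0$, i.e.\ iff $\langle p(h)-q,u\rangle\ge 0$ by the identity, i.e.\ iff $p(h)\in r^{\perp}$; and if $\langle n,u\rangle=0$ the ray is parallel to $h$ and the identity places $p(h)$ on $\partial r^{\perp}$, which is consistent since the definition of regression depth counts hyperplanes parallel to a ray together with those it crosses. So \textit{(1)} is just careful bookkeeping with the displayed identity.

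For \textit{(2)}, by \textit{(1)} the condition $\text{RD}(\{h_1,\dots,h_n\},q)\ge 1$ says precisely that $\max_i\langle p(h_i)-q,u\rangle\ge 0$ for every unit vector $u$; if $q=\sum_i\lambda_i p(h_i)$ is a convex combination this follows from $\sum_i\lambda_i\langle p(h_i)-q,u\rangle=0$, and if $q\notin\operatorname{conv}(p(h_1),\dots,p(h_n))$ a strictly separating hyperplane produces a $u$ with $\langle p(h_i)-q,u\rangle<0$ for all $i$. For \textit{(3)} the same equivalence in strict form identifies $q\in\operatorname{int}\operatorname{conv}(p(h_1),\dots,p(h_{d+1}))$ with ``$\max_i\langle p(h_i)-q,u\rangle>0$ for all $u\ne 0$'' (using a supporting hyperplane of the hull at $q$ in the boundary case). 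One direction is then geometric: if $q$ lies in the interior of the simplex $\sigma$ cut out by $h_1,\dots,h_{d+1}$ (in general position), every ray from $q$ leaves the bounded set $\sigma$ through a facet and so crosses the corresponding $h_i$ transversally at a strictly positive parameter, whence $\langle p(h_i)-q,u\rangle>0$. For the converse, if $q$ lies on some $h_j$ then $p(h_j)=q$ is one of the $d+1$ points, so $q$ is a vertex of their hull (or the hull is lower-dimensional) and not interior to it; and if $q$ lies on no $h_i$ but outside $\sigma$, write the inner half-space of $h_i$ as $\{x:\epsilon_i(\langle x,n_i\rangle-c_i)<0\}$ with $\epsilon_i\in\{\pm1\}$ and set $s_i=\operatorname{sign}(c_i-\langle q,n_i\rangle)\in\{\pm1\}$, so that $q\in\operatorname{int}\sigma$ would mean $s_i=\epsilon_i$ for every $i$. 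By the identity $\langle p(h_i)-q,u\rangle>0\iff\langle s_in_i,u\rangle>0$, so $q\in\operatorname{int}\operatorname{conv}(p(h_i))$ iff $s_1n_1,\dots,s_{d+1}n_{d+1}$ positively span $\mathbb{R}^d$. These arise from the outward facet normals $\nu_i=\epsilon_in_i$ of $\sigma$ by flipping signs on the index set $\{i:s_i\ne\epsilon_i\}$, which is nonempty ($q$ is strictly outside some $h_j$) and a proper subset (no point lies strictly outside every facet hyperplane of a simplex); since any $d$ of $\nu_1,\dots,\nu_{d+1}$ are linearly independent, the linear dependences among them form a line, spanned by the Minkowski relation $\sum_i\operatorname{vol}(F_i)\,\nu_i=0$ with all coefficients positive, and flipping signs on a nonempty proper subset of the indices turns this into a dependence of the $s_in_i$ with coefficients of both signs, so they do not positively span $\mathbb{R}^d$ and $q\notin\operatorname{int}\operatorname{conv}(p(h_i))$. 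This finishes \textit{(3)}.

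The whole thing is close to an exercise, so there is no serious obstacle; if anything needs care it is the second half of \textit{(3)}, where one must check that interiority of $q$ relative to the simplex and relative to the dual convex hull coincide exactly, right down to the boundary, which is precisely the positive-spanning property of the facet normals of a simplex (the Minkowski relation). The minor discrepancy in \textit{(1)} for hyperplanes parallel to the ray is harmless and is already built into the definition of regression depth.
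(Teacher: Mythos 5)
Your proposal is correct. The paper does not actually prove this statement: it is presented as an Observation, treated as immediate from the definition of $p(h)$ as the orthogonal projection of $q$ onto $h$, so there is no argument in the paper to compare against; what you have done is supply the verification the authors leave implicit, and you do it via exactly the mechanism the duality is built on, namely the identity $\langle p(h)-q,u\rangle=(c-\langle q,n\rangle)\langle n,u\rangle$ together with separating/supporting hyperplanes. Two of the details you make explicit are worth keeping: in (1), the case of a ray parallel to $h$, where $p(h)$ lies on the boundary of $r^{\perp}$ although the ray misses $h$ — this mismatch is indeed absorbed by the convention that regression depth counts hyperplanes parallel to the ray, and your use of the closed inequality $\max_i\langle p(h_i)-q,u\rangle\ge 0$ in (2) handles it correctly; and in (3), the observation that the statement must refer to the open simplex, since $q\in h_j$ forces $p(h_j)=q$ to be a vertex (or the hull to be degenerate), so $q$ cannot be interior to the dual hull. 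Your converse argument for (3) — reducing interiority of $q$ in the dual hull to positive spanning of the sign-flipped normals $s_in_i$, and ruling that out because the dependences among the facet normals of a simplex form a line spanned by an all-positive (Minkowski) relation, while the index set of flipped signs is nonempty and proper — is valid; the same conclusion can be phrased marginally more directly by noting that a strictly positive dependence among the $s_in_i$ forces all products $s_i\epsilon_i$ to have the same sign, leaving only the two cases ``$q$ in the open simplex'' and ``$q$ strictly outside every facet hyperplane,'' the latter being impossible. No gaps beyond the standard facts you cite.
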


The three depth measures for hyperplane arrangements defined in Section \ref{sec:introduction} all have natural corresponding depth measures for point sets that follow immediately from Observation \ref{obs:duality}. For regression depth, the corresponding depth measure is \emph{Tukey depth} ($\text{TD}$), which is defined as the minimum number of data points contained in any closed half-space containing the query point $q$ \cite{Tukey}. For hyperplane Tverberg depth we get Tverberg depth ($\text{TvD}$), which is defined as the maximum $r$ for which there exists an $r$-partition of the data points containing the query point $q$ in their intersection. Finally, for hyperplane enclosing depth, we get \emph{enclosing depth} ($\text{ED}$), which is defined as the maximum $k$ for which there exists a subset of the data points that $k$-encloses the query point $q$ \cite{Enclosing}.

\begin{corollary}\label{cor:depth_under_duality}
Let $A$ be an arrangement of hyperplanes in general position in $\mathbb{R}^d$ and let $q$ be a query point. Then
\begin{enumerate}
    \item $\text{RD}(A,q)=\text{TD}(A^*_q,q)$;
    \item $\text{HTvD}(A,q)=\text{TvD}(A^*_q,q)$;
    \item $\text{HED}(A,q)=\text{ED}(A^*_q,q)$.
\end{enumerate}
\end{corollary}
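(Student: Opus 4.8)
The plan is to prove each of the three identities by transporting the relevant combinatorial structure on $A$ through the map $h \mapsto p(h)$ and then reading off the statement from Observation~\ref{obs:duality}. The one place where general position is genuinely used is to guarantee that this map is injective: if $p(h_1) = p(h_2)$ for $h_1 \neq h_2$, then both hyperplanes pass through this common point and are orthogonal to the same vector $q - p(h_1)$, hence they coincide unless $q \in h_1 \cap h_2$; since at most one hyperplane of $A$ passes through a generic point $q$, injectivity follows. Consequently $A^*_q$ is a genuine $|A|$-element set, and subfamilies of $A$ correspond bijectively to subsets of $A^*_q$, so no bookkeeping about coincident dual points is needed.

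For part~(1), I would first record that, for a ray $r$ emanating from $q$ with direction $u$, a hyperplane $h$ is intersected by or parallel to $r$ exactly when $\langle u,\, q - p(h)\rangle \le 0$; by part~1 of Observation~\ref{obs:duality} this says precisely that $p(h)$ lies in the closed half-space $\overline{r^{\perp}}$ bounded by the hyperplane through $q$ orthogonal to $u$ and containing $r$. Hence the number of hyperplanes counted in the definition of $\text{RD}(A,q)$ for the ray $r$ equals $|A^*_q \cap \overline{r^{\perp}}|$, and taking the minimum over all rays from $q$ gives the minimum of $|A^*_q \cap H|$ over all closed half-spaces $H$ whose bounding hyperplane passes through $q$. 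To finish, I would observe that the minimum defining $\text{TD}(A^*_q,q)$ is already attained by such a half-space: if $q$ lies in the interior of a closed half-space $H$, sliding its bounding hyperplane until it contains $q$ only removes points and therefore does not increase the count. This yields $\text{RD}(A,q) = \text{TD}(A^*_q,q)$.

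For parts~(2) and~(3) the translation is immediate from part~2 of Observation~\ref{obs:duality}, which states that $q$ has positive regression depth with respect to a subfamily $\{h_1,\dots,h_m\} \subseteq A$ if and only if $q \in \operatorname{conv}\{p(h_1),\dots,p(h_m)\}$. For~(2), a partition of $A$ into $r$ parts, each giving $q$ positive regression depth, corresponds via $h \mapsto p(h)$ exactly to an $r$-partition of $A^*_q$ with $q$ in the convex hull of each part, so the maxima $\text{HTvD}(A,q)$ and $\text{TvD}(A^*_q,q)$ agree. For~(3), the condition that pairwise disjoint $k$-subsets $A_1,\dots,A_{d+1}$ of $A$ $k$-enclose $q$ — i.e. every transversal $(h_1,\dots,h_{d+1})$ satisfies $\text{RD}(\{h_1,\dots,h_{d+1}\},q) \ge 1$ — becomes, under the injective map, the condition that the pairwise disjoint $k$-subsets $p(A_1),\dots,p(A_{d+1})$ of $A^*_q$ have every transversal simplex containing $q$, which matches the definition of $k$-enclosing for point sets; maximizing over $k$ gives $\text{HED}(A,q) = \text{ED}(A^*_q,q)$.

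The only real obstacle is the short argument in part~(1) that the Tukey minimum is realized by a half-space whose bounding hyperplane passes through $q$, together with the identification of such half-spaces with the orthogonal half-spaces $\overline{r^{\perp}}$ of rays from $q$; the rest is a direct dictionary translation through Observation~\ref{obs:duality}, with general position entering only to make $h \mapsto p(h)$ injective.
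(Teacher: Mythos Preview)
Your proposal is correct and follows the paper's approach: the paper gives no explicit proof of this corollary, stating only that it follows immediately from Observation~\ref{obs:duality}, and your argument is precisely the careful unpacking of that claim. The extra details you supply---that the Tukey minimum can be restricted to half-spaces whose boundary contains $q$, and the injectivity of $h \mapsto p(h)$ under general position---are natural and correct elaborations of what the paper leaves implicit.
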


\section{Axioms for hyperplane depth}
Let $A^{\mathbb{R}^d}$ denote the family of all finite arrangements of hyperplanes in $\mathbb{R}^d$.
A depth measure for hyperplanes is a function $\rho:(A^{\mathbb{R}^d},\mathbb{R}^d)\rightarrow\mathbb{R}_{\geq 0}$ which assigns to each pair $(A,q)$ consisting of a hyperplane arrangement $A$ and a query point $q$ a value, which describes how deep the query point $q$ lies within the arrangement $A$. A depth measure is called \emph{combinatorial} if it is the same for all points in a face of $A$.
Similar to \cite{Enclosing}, we introduce some axioms, that reasonable depth measures for hyperplane arrangements should satisfy.

We say that a combinatorial depth measure for hyperplanes is \emph{super-additive} if it satisfies the following four conditions.

\begin{enumerate}
    \item[(i)] for all $A\in A^{\mathbb{R}^d}$ and $q\in\mathbb{R}^d$ and any hyperplane $h$ we have $|\rho(A,q)-\rho(A\cup\{h\},q)|\leq 1$,
    \item[(ii)] for all $A\in A^{\mathbb{R}^d}$ we have $\rho(A,q)=0$ if $q$ is in an unbounded cell of $A$,
    \item[(iii)] for all $A\in A^{\mathbb{R}^d}$ we have $\rho(A,q)\geq 1$ if $q$ is in a bounded cell or if $q$ lies on a hyperplane of $A$,
    \item[(iv)] for any disjoint subsets $A_1,A_2\subseteq A$ and $q\in\mathbb{R}^d$ we have $\rho(A,q)\geq \rho(A_1,q)+\rho(A_2,q)$.
\end{enumerate}

\begin{observation}
Regression depth and hyperplane Tverberg depth are super-additive, but hyperplane enclosing depth is not.
\end{observation}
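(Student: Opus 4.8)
The plan is to verify the four axioms directly for regression depth, deduce the case of hyperplane Tverberg depth from it, and then exhibit an arrangement on which hyperplane enclosing depth violates an axiom. Regression depth is combinatorial because, as $q$ ranges over a fixed face of $A$ and $u$ is a fixed direction, the set of hyperplanes of $A$ that the ray from $q$ in direction $u$ meets or is parallel to does not change: parallelism depends only on $u$, while being met depends only on $u$ and on which open side of each hyperplane contains $q$, and the latter is constant on a face. For (i), adjoining a hyperplane $h$ changes, for each ray, the number of hyperplanes met or parallel to it by $0$ or $1$; since $\mathrm{RD}$ is the minimum of this quantity over all rays, it cannot decrease and increases by at most $1$ (evaluate on a ray realizing $\mathrm{RD}(A,q)$). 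For (ii), if $q$ lies in an unbounded cell $C$, a ray from $q$ into the recession cone of $C$ stays inside $C$ and so meets no hyperplane, whence $\mathrm{RD}(A,q)=0$. For (iii), a ray from a point of a bounded cell must leave that cell and so meets its relative boundary, which lies in $\bigcup A$, and a ray from a point $q$ on a hyperplane $h$ meets $h$ at its own endpoint; in both cases every ray meets at least one hyperplane, so $\mathrm{RD}(A,q)\ge 1$. For (iv), if $A_1,A_2\subseteq A$ are disjoint then for every ray $r$ the number of hyperplanes of $A$ met or parallel to $r$ is at least the sum of the corresponding numbers for $A_1$ and $A_2$; minimizing over $r$ gives super-additivity.

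For hyperplane Tverberg depth, combinatoriality follows because the faces of $A$ refine those of any sub-arrangement, so whether a class $P$ of a partition satisfies $\mathrm{RD}(P,q)>0$ is constant on each face of $A$, and hence so are the collection of admissible partitions and $\mathrm{HTvD}(A,q)$. Axiom (iv) holds by concatenating a maximal admissible partition of $A_1$ with one of $A_2$ and adjoining the remaining hyperplanes of $A$ to any one class, which does not lower its regression depth. Axiom (iii) holds because, when $\mathrm{RD}(A,q)\ge 1$, the one-class partition $\{A\}$ already certifies $\mathrm{HTvD}(A,q)\ge 1$. Axiom (ii) is inherited from super-additivity of $\mathrm{RD}$: if $\mathrm{RD}(A,q)=0$, then every class of every partition has regression depth $0$, so no partition is admissible and $\mathrm{HTvD}(A,q)=0$. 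For (i), adjoining $h$ to a class of a maximal admissible partition of $A$ shows $\mathrm{HTvD}$ cannot drop, while deleting $h$ from its class in a maximal admissible partition of $A\cup\{h\}$ spoils at most that one class, so $\mathrm{HTvD}$ falls by at most $1$.

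For hyperplane enclosing depth, the obstruction is that a $k$-enclosure of $q$ uses exactly $(d+1)k$ hyperplanes, so $\mathrm{HED}(A,q)=0$ whenever $|A|<d+1$, regardless of $q$. Hence for any $d\ge 1$, taking $A=\{h\}$ to be a single hyperplane and $q\in h$ gives $\mathrm{HED}(A,q)=0$ even though $q$ lies on a hyperplane of $A$, contradicting axiom (iii); so $\mathrm{HED}$ is not super-additive. This is not merely an artefact of tiny arrangements: as in the point-set case one can place two simplices around $q$ whose combined $2(d+1)$ facet hyperplanes admit no regrouping into $d+1$ classes, each of size $2$, all of whose transversals enclose $q$, so super-additivity (iv) itself can fail as well.

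The bulk of this is routine bookkeeping; the steps that warrant care are the two combinatoriality statements, which rest on the face structure of $A$ and of its sub-arrangements; the upper bound in axiom (i) for $\mathrm{HTvD}$; and, for regression depth, the ``or parallel to'' clause on non-generic arrangements, where the recession-ray argument for (ii) should be examined with a little care (for instance on a slab between two parallel hyperplanes). The real conceptual point, I expect, is the simple observation that $\mathrm{HED}$ cannot register positive depth with fewer than $d+1$ hyperplanes, which is precisely what breaks super-additivity.
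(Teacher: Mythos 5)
Your verification of axiom (ii) for regression depth is the one step that does not go through as written, and you half-noticed this yourself. A ray into the recession cone of an unbounded cell indeed meets no hyperplane, but it may be \emph{parallel} to some of them, and the definition of $\mathrm{RD}$ counts hyperplanes parallel to the ray. For the slab between two parallel hyperplanes the recession cone consists only of directions parallel to both hyperplanes, and one checks that $\mathrm{RD}(A,q)=1$ for every $q$ in the slab even though the slab is an unbounded cell; an unbounded prism cell in $\mathbb{R}^3$ bounded by three pairwise non-parallel planes with linearly dependent normals exhibits the same failure without any parallel pair. So axiom (ii) for $\mathrm{RD}$ (and hence your derived axiom (ii) for $\mathrm{HTvD}$) is not just unproved by the recession-ray argument, it is false for such arrangements; the observation implicitly needs the hyperplane normals in sufficiently general position, so that the recession cone of an unbounded cell contains a direction parallel to no hyperplane, or some convention resolving the ``parallel to'' clause. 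Flagging the slab as needing ``a little care'' is not a proof --- supplying that genericity hypothesis (or the argument that a suitable recession direction exists) is exactly the missing step. The rest of your axiom-checking for $\mathrm{RD}$ and $\mathrm{HTvD}$ --- combinatoriality on faces, (i), (iii), (iv), and the $\pm 1$ bound for $\mathrm{HTvD}$ --- is correct and is more detail than the paper records, since it asserts these two parts without proof.

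For the failure of super-additivity of hyperplane enclosing depth you take a genuinely different and cheaper route than the paper: you violate axiom (iii) with a single hyperplane, observing that a $k$-enclosure requires $(d+1)k\geq d+1\geq 2$ hyperplanes, so $\mathrm{HED}(\{h\},q)=0$ even when $q\in h$. That is a valid disproof of super-additivity. The paper instead exhibits a planar example (its Figure 1) in which condition (iv) itself fails: two sets of lines, each enclosing $q$ with $\mathrm{HED}=1$, whose union still has $\mathrm{HED}=1$. The paper's example is the stronger and more informative statement --- it shows the failure is not merely an artefact of arrangements too small to enclose anything --- and it is precisely the ``two simplices'' configuration you gesture at in your last step but do not verify; as it stands, that part of your write-up is an unproved claim, though your argument does not need it.
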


For hyperplane enclosing depth, an example with $\text{HED}(A_1,q)=\text{HED}(A_2,q)=\text{HED}(A,q)=1$ can be found in Figure \ref{fig:encl_additivity}.

%\begin{proof}
%It follows straight from the definition that all three measures satisfy conditions (i) and (ii).
%As for condition (iii), note that any point in a bounded cell or on a hyperplane has regression depth at least 1, and thus also the other two measures evaluate to at least 1. It remains to analyze condition (iv).
%For regression depth, let $r$ be a ray witnessing $\text{RD}(A,q)=k$, that is, a ray intersecting (or parallel to) exactly $k$ hyperplanes of $A$.
%Of these hyperplanes, $k_1$ are in $A_1$ and $k_2$ are in $A_2$, where $k_1+k_2=k$. Thus, $r$ is a witness that $\text{RD}(A_i,q)\leq k_i$ for $k\in\{1,2\}$, and thus $\text{RD}(A,q)\geq\text{RD}(A_1,q)+\text{RD}(A_2,q)$.
%For hyperplane Tverberg depth, note that any $r_1$-partition of $A_1$ and any $r_2$-partition of $A_2$, both whose intersections contain $q$, define an $(r_1+r_2)$-partition of $A$ whose intersections still contains $q$.
%For hyperplane enclosing depth, an example with $\text{HED}(A_1,q)=\text{HED}(A_2,q)=\text{HED}(A,q)=1$ can be found in Figure \ref{fig:encl_additivity}.
%\end{proof}

\begin{figure}
\centering
\includegraphics[scale=0.6]{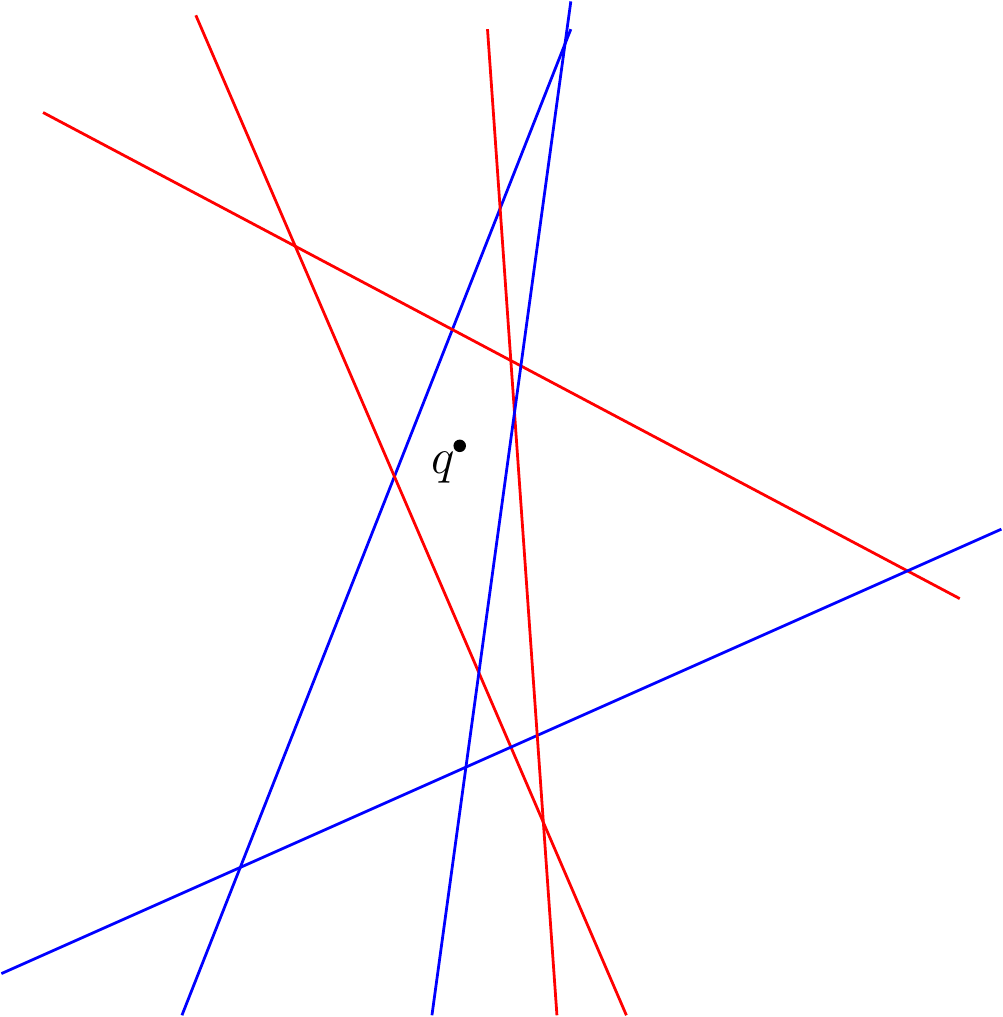}
\caption{Hyperplane enclosing depth does not satisfy condition (iv): the point $q$ has hyperplane enclosing depth 1 with respect to both the blue and the red lines, but its hyperplane enclosing depth with respect to the union of the two sets is still 1.}
\label{fig:encl_additivity}
\end{figure}

\begin{lemma}\label{lem:upper_bound}
Let $\rho$ be any combinatorial depth measure that satisfies conditions (i) and (ii). Then for all $A\in A^{\mathbb{R}^d}$ and $q\in\mathbb{R}^d$ we have $\rho(A,q)\leq\text{RD}(A,q)$.
\end{lemma}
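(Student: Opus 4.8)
The plan is to argue by contradiction using the defining property of regression depth together with the ``adding one hyperplane changes $\rho$ by at most $1$'' axiom (i) and the ``unbounded cell forces $\rho=0$'' axiom (ii). Fix $A$ and $q$, and let $k=\text{RD}(A,q)$. By definition of regression depth there is a ray $\gamma$ emanating from $q$ that is intersected by or parallel to exactly $k$ hyperplanes of $A$; call this set of $k$ hyperplanes $B$, so that the remaining arrangement $A\setminus B$ has the property that the ray $\gamma$ avoids every hyperplane in it (and is not parallel to any of them). The key geometric observation is then that $q$ lies in an \emph{unbounded} cell of $A\setminus B$: following $\gamma$ from $q$ to infinity never crosses a hyperplane of $A\setminus B$, so the entire ray $\gamma$ lies in the closed cell of $A\setminus B$ containing $q$, witnessing that this cell is unbounded. (If $q$ lies on a hyperplane of $A\setminus B$, then that hyperplane is intersected by or parallel to $\gamma$ — in fact by \emph{every} ray from $q$ — so it would have been counted in $B$; hence $q$ is in the interior of a cell of $A\setminus B$, and that cell is unbounded.)

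Now I would apply the axioms. By (ii), $\rho(A\setminus B, q)=0$. Since $B$ has exactly $k$ elements, writing $B=\{h_1,\dots,h_k\}$ and applying axiom (i) $k$ times along the chain
\[
A\setminus B,\quad (A\setminus B)\cup\{h_1\},\quad \dots,\quad (A\setminus B)\cup\{h_1,\dots,h_k\}=A,
\]
each step changes the value of $\rho$ by at most $1$, so $\rho(A,q)\le \rho(A\setminus B,q)+k = k = \text{RD}(A,q)$, which is exactly the claimed bound.

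One small point to make rigorous is the claim that the cell of $A\setminus B$ containing $q$ is genuinely unbounded in the sense required by axiom (ii): this is immediate once we know the whole ray $\gamma$ is contained in the closed cell, but I would phrase it carefully for the boundary case where $\gamma$ runs along a face of lower dimension — in that situation one can perturb $\gamma$ slightly, or simply note that an unbounded closed cell has unbounded interior, so $q$ together with a suitable nearby ray still certifies unboundedness. The main (and only) real obstacle is this geometric bookkeeping around degenerate positions of the ray; the rest is a routine telescoping application of axiom (i). Note that the lemma only uses axioms (i) and (ii), so in particular it applies to depth measures that are not super-additive.
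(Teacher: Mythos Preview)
Your proposal is correct and follows essentially the same approach as the paper: remove the $k$ hyperplanes witnessing $\text{RD}(A,q)=k$, observe that $q$ then lies in an unbounded cell so $\rho=0$ by (ii), and telescope back using (i). Your write-up is somewhat more careful than the paper's about the geometric claim that the cell is unbounded (and about excluding the case $q\in h$ for $h\in A\setminus B$); note, though, that you announce a proof ``by contradiction'' but then give a direct argument, so you should drop that phrase.
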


\begin{proof}
Let $\text{RD}(A,q)=k$. This means that there is a ray $r$ which intersects or is parallel to some $k$ hyperplanes of $A$. Removing these $k$ hyperplanes, we get a new arrangement $A'$ and we have $\text{RD}(A',q)=0$. In particular, $q$ is in an unbounded cell of $A'$ and thus also $\rho(A',q)=0$ by condition (ii). By condition (i) we have $\rho(A,q)\leq \rho(A',q)+k=k$.
\end{proof}

\begin{lemma}
Let $\rho$ be any combinatorial depth measure that satisfies conditions (iii) and (iv). Then for all $A\in A^{\mathbb{R}^d}$ and $q\in\mathbb{R}^d$ we have $\rho(A,q)\geq\text{HTvD}(A,q)$.
\end{lemma}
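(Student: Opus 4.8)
The plan is to peel off the parts of the Tverberg partition one at a time using super-additivity axiom~(iv), and to finish with axiom~(iii). Concretely, let $r=\text{HTvD}(A,q)$ and fix a partition $A=A_1\cup\dots\cup A_r$ into pairwise disjoint parts with $\text{RD}(A_i,q)\ge 1$ for every $i$. Put $C_j=A_j\cup\dots\cup A_r$; this is a subset of $A$, it splits into the disjoint subsets $A_j$ and $C_{j+1}$, and so axiom~(iv) gives $\rho(C_j,q)\ge\rho(A_j,q)+\rho(C_{j+1},q)$. Telescoping from $C_1=A$ down to $C_r=A_r$ yields $\rho(A,q)\ge\sum_{i=1}^{r}\rho(A_i,q)$. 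The same splitting (with the two disjoint subsets $B'$ and $B\setminus B'$) together with $\rho\ge 0$ also shows $\rho(B,q)\ge\rho(B',q)$ for every subarrangement $B'\subseteq B$. Hence it suffices to prove the single implication: if $\text{RD}(B,q)\ge 1$, then $\rho(B,q)\ge 1$.

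To prove that implication I would show that positive regression depth forces $q$ to lie on a hyperplane of $B$ or in a bounded cell of $B$, and then apply axiom~(iii). The cleanest route is through the dual picture: by Observation~\ref{obs:duality}(2), $\text{RD}(B,q)\ge 1$ means $q\in\text{conv}(B^*_q)$, and by Carath\'eodory's theorem there is a subarrangement $B'\subseteq B$ with $|B'|\le d+1$, affinely independent dual points, and $q$ lying in the simplex they span; since $\rho(B,q)\ge\rho(B',q)$ it is enough to handle $B'$. If $q$ is a vertex of this simplex, then $q=p(h)$ for some $h\in B'$, hence $q\in h$. If $|B'|=d+1$ and $q$ lies in the interior of the simplex, then by Observation~\ref{obs:duality}(3) the point $q$ lies in the bounded simplicial cell determined by $B'$. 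In both of these cases axiom~(iii) gives $\rho(B',q)\ge 1$, hence $\rho(B,q)\ge 1$; applying this to each part $A_i$ and combining with the first paragraph gives $\rho(A,q)\ge r=\text{HTvD}(A,q)$.

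The telescoping use of axiom~(iv) is routine. The point that requires care is the dichotomy ``positive regression depth $\Rightarrow$ bounded cell or on a hyperplane'': one must rule out the possibility that $q$ sits in an unbounded cell while still having positive regression depth, and must treat the intermediate and degenerate configurations (for instance subarrangements consisting of parallel hyperplanes) that are not covered by the two extreme cases above. This is precisely where the convention about rays parallel to a hyperplane has to be handled carefully --- it is, in effect, the assertion that regression depth itself satisfies axiom~(ii) --- and where Observation~\ref{obs:duality} together with Carath\'eodory, combined with the freedom to pass to a subarrangement supplied by axiom~(iv), does the real work.
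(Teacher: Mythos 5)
Your first paragraph is fine and coincides with the paper's argument: telescoping axiom (iv) over the parts gives $\rho(A,q)\ge\sum_{i}\rho(A_i,q)$, so the whole lemma reduces to the single claim that $\text{RD}(A_i,q)\ge 1$ forces $\rho(A_i,q)\ge 1$. That claim is exactly where your proposal is incomplete, and you say so yourself. Axiom (iii) applies only when $q$ lies on a hyperplane of the part or in a \emph{bounded} cell of the part, and your Carath\'eodory analysis establishes this hypothesis only in the two extreme cases ($q$ a vertex of the dual simplex, or $q$ interior to a full-dimensional simplex spanned by $d+1$ affinely independent dual points). The cases you leave open are not a formality: if $q$ lies in the relative interior of a face spanned by $m\le d$ dual points, the corresponding subarrangement $B'$ has at most $d$ hyperplanes and hence no bounded cells at all, and $q$ lies on none of its hyperplanes, so (iii) gives nothing for $B'$; nor does retreating to the whole part $A_i$ obviously help. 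Concretely, a part consisting of two parallel hyperplanes with $q$ strictly between them has $\text{RD}=1$ (every ray either meets one of them or is parallel to both), yet $q$ sits in an unbounded cell and on no hyperplane --- precisely the degenerate configuration you flag in your last paragraph and then defer. Since the implication ``positive regression depth $\Rightarrow$ on a hyperplane or in a bounded cell'' is the entire content of the lemma beyond the routine use of (iv), deferring it means the proof is not complete.

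For comparison, the paper's proof is your first paragraph plus a one-line application of (iii) to each part; it does not pass through the dual point set or Carath\'eodory at all, and it silently takes for granted the very dichotomy you could not establish. So your overall structure is the right one, but the step you call ``the real work'' really is the real work, and the parallel-hyperplane configuration shows it cannot be settled just by invoking (iii) as stated: one needs an additional argument or hypothesis (for instance, general position of the arrangement, under which every unbounded cell of every subarrangement has a full-dimensional recession cone and therefore admits an escaping ray parallel to no hyperplane, so positive regression depth does force the hypothesis of (iii)). Until that case analysis is actually carried out, the proposal has a genuine gap at its key step.
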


\begin{proof}
Let $\text{HTvD}(A,q)=k$. This means that there is a $k$-partition $A_1,\ldots,A_k$ such that $q$ has regression depth $\geq 1$ with respect to each part. By condition (iii) we have $\rho(A_i,q)\geq 1$ for each $A_i$. By condition (iv) we get $\rho(A,q)\geq\rho(A_1,q)+\ldots +\rho(A_k,q)\geq k$.
\end{proof}

\begin{lemma}
For all $A\in A^{\mathbb{R}^d}$ and $q\in\mathbb{R}^d$ we have $\text{HTvD}(A,q)\geq\frac{1}{d}\text{RD}(A,q)$.
\end{lemma}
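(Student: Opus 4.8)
The plan is to pass to the point-dual of Section~2 and greedily extract Carath\'eodory simplices. Set $k=\text{RD}(A,q)$ and assume $k\ge 1$ (otherwise there is nothing to prove). For a sub-arrangement $B\subseteq A$, Observation~\ref{obs:duality}(2) says $\text{RD}(B,q)\ge 1$ iff $q\in\text{conv}\{p(h):h\in B\}$. More generally, a ray $r$ from $q$ meets or is parallel to a hyperplane $h$ precisely when $p(h)$ lies in the closed halfspace bounded by the hyperplane through $q$ orthogonal to $r$ and containing $r$ (this extends Observation~\ref{obs:duality}(1): the boundary case is exactly ``$r$ parallel to $h$'', and if $q\in h$ then $p(h)=q$ lies in every such halfspace, matching the fact that every ray from $q$ meets $h$). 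Since every closed halfspace with $q$ on its boundary arises this way,
\[
\text{RD}(B,q)=\min\nolimits_{H}\bigl|\{h\in B: p(h)\in H\}\bigr|,
\]
the minimum ranging over closed halfspaces $H$ with $q\in\partial H$.

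Now iterate: starting from $B:=A$, while $\text{RD}(B,q)\ge 1$ pick an inclusion-minimal $S\subseteq B$ with $q\in\text{conv}\{p(h):h\in S\}$, record it, and replace $B$ by $B\setminus S$. By minimality, the dual points $\{p(h):h\in S\}$ are affinely independent and $q$ lies in the relative interior of their convex hull; in particular $|S|\le d+1$ and $\text{RD}(S,q)\ge 1$. When the loop stops we have recorded pairwise disjoint $S_1,\dots,S_m$, each of positive regression depth; moving the leftover (depth-$0$) hyperplanes into $S_1$ and invoking super-additivity keeps its depth positive, so $A$ splits into $m$ parts of positive regression depth, whence $\text{HTvD}(A,q)\ge m$. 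It remains to show $m\ge k/d$.

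The one geometric ingredient — and the crux — is that removing such a minimal $S$ lowers the regression depth by at most $d$, whereas condition (i) alone would only give $d+1$. Precisely: if $T\subset\mathbb{R}^d$ is affinely independent with $q$ in the relative interior of $\text{conv}(T)$, then every closed halfspace $H$ with $q\in\partial H$ has $|T\cap H|\le d$. To see this, write $q=\sum_{v\in T}\lambda_v v$ with all $\lambda_v>0$ and choose an affine functional $\ell$ with $H=\{\ell\ge 0\}$ and $\ell(q)=0$; the identity $0=\sum_{v}\lambda_v\ell(v)$ forces either $\ell(v)<0$ for some $v\in T$ (so $|T\cap H|\le |T|-1\le d$) or $\ell\equiv 0$ on $T$, i.e.\ $T\subseteq\partial H$, whence $|T|\le d$ by affine independence. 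Applying this to the affinely independent set $T=\{p(h):h\in S\}$ together with the displayed formula for $\text{RD}$ (note $|\{h\in S:p(h)\in H\}|=|T\cap H|$ since these points are distinct) gives $\text{RD}(B\setminus S,q)\ge\text{RD}(B,q)-d$ at every step.

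Consequently, after $i$ extractions the current depth is still at least $k-id$, so the loop performs at least $\lfloor(k-1)/d\rfloor+1$ extractions, a quantity that is $\ge k/d$. Hence $\text{HTvD}(A,q)\ge m\ge k/d=\tfrac1d\text{RD}(A,q)$. I expect the ``$-d$ rather than $-(d+1)$'' refinement to be the only real obstacle; the rest is bookkeeping with the super-additivity axioms and Carath\'eodory's theorem.
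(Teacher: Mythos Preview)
Your proof is correct and follows the same duality route as the paper, which identifies $\text{HTvD}(A,q)=\text{TvD}(A^*_q,q)$ and $\text{RD}(A,q)=\text{TD}(A^*_q,q)$ via Corollary~\ref{cor:depth_under_duality} and then simply cites the known point-set inequality $\text{TvD}(S,q)\ge\tfrac1d\,\text{TD}(S,q)$.  You go further and give a self-contained proof of that inequality by greedy Carath\'eodory extraction together with the key observation that a minimal simplex containing $q$ in its relative interior meets any closed halfspace with $q$ on its boundary in at most $d$ vertices; this is precisely the standard argument behind the cited result, so your proof is the paper's proof with the black box opened.
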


\begin{proof}
By Corollary \ref{cor:depth_under_duality} we have $\text{HTvD}(A,q)=\text{TvD}(A^*_q,q)$ and $\text{RD}(A,q)=\text{TD}(A^*_q,q)$. It is well known that for any point set $S$ in $\mathbb{R}^d$ and any query point $q$ we have $\text{TvD}(S,q)\geq\frac{1}{d}\text{TD}(S,q)$, see e.g.\ \cite{Amenta,Sariel,Rolnick}.
\end{proof}

Combining all of the above, we get

\begin{theorem}\label{thm:super-additive_measures}
Let $\rho$ be a super-additive depth measure for hyperplanes. Then for all $A\in A^{\mathbb{R}^d}$ and $q\in\mathbb{R}^d$ we have $\text{RD}(A,q)\geq\rho(A,q)\geq\text{HTvD}(A,q)\geq\frac{1}{d}\text{RD}(A,q)$.
\end{theorem}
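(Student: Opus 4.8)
The plan is to observe that Theorem~\ref{thm:super-additive_measures} is a direct assembly of the three lemmas that precede it, so the ``proof'' is essentially a chain of inequalities, and the only real content to verify is that these lemmas are applicable under the stated hypotheses. First I would recall that a super-additive depth measure satisfies all four conditions (i)--(iv) by definition. Conditions (i) and (ii) then let me invoke Lemma~\ref{lem:upper_bound} to conclude $\rho(A,q)\leq\text{RD}(A,q)$. Conditions (iii) and (iv) let me invoke the second lemma to conclude $\rho(A,q)\geq\text{HTvD}(A,q)$. Finally, the third lemma gives the unconditional bound $\text{HTvD}(A,q)\geq\frac{1}{d}\text{RD}(A,q)$, which holds for every arrangement and query point with no hypotheses on $\rho$ at all. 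Concatenating these three statements yields the displayed chain.

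The one subtlety worth flagging is that the third lemma is stated and proved (via Corollary~\ref{cor:depth_under_duality}) for arrangements in general position, since that corollary identifies $\text{HTvD}(A,q)$ and $\text{RD}(A,q)$ with their point-set duals $\text{TvD}(A^*_q,q)$ and $\text{TD}(A^*_q,q)$ only under a general-position assumption. So strictly speaking the cleanest statement of Theorem~\ref{thm:super-additive_measures} carries the same general-position caveat, or else one needs a perturbation argument to remove it. I expect this to be the main (and only) obstacle: everything else is a one-line citation of the earlier lemmas. In the write-up I would simply note that combining Lemma~\ref{lem:upper_bound}, the second lemma, and the third lemma gives the result, and add a remark that the lower bound $\text{HTvD}(A,q)\geq\frac1d\text{RD}(A,q)$ (and hence the full chain) extends to arbitrary arrangements by a standard limiting argument, perturbing $A$ into general position and using the combinatorial nature of the measures to pass to the limit.

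\begin{proof}
A super-additive depth measure $\rho$ satisfies conditions (i)--(iv) by definition. Applying Lemma~\ref{lem:upper_bound} (which uses only (i) and (ii)) gives $\rho(A,q)\leq\text{RD}(A,q)$. Applying the preceding lemma (which uses only (iii) and (iv)) gives $\rho(A,q)\geq\text{HTvD}(A,q)$. Finally, the preceding lemma gives $\text{HTvD}(A,q)\geq\frac{1}{d}\text{RD}(A,q)$, with no assumption on $\rho$. Chaining these three inequalities yields
\[
\text{RD}(A,q)\;\geq\;\rho(A,q)\;\geq\;\text{HTvD}(A,q)\;\geq\;\tfrac{1}{d}\,\text{RD}(A,q). \qedhere
\]
\end{proof}
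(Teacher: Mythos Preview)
Your proposal is correct and matches the paper's approach exactly: the paper simply writes ``Combining all of the above, we get'' before stating the theorem, with no further proof, so the content is precisely the concatenation of the three preceding lemmas that you describe. Your remark about the general-position hypothesis in Corollary~\ref{cor:depth_under_duality} is a legitimate caveat that the paper itself glosses over.
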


As we have seen above, not all depth measures are super-additive: hyperplane enclosing depth is an example of a measure that is not. To include more general depth measures, we define a second family of measures, defined by a weaker set of axioms. We call a combinatorial depth measure for hyperplanes \emph{enclosable} if it satisfies the following conditions.

\begin{enumerate}
    \item[(i)] for all $A\in A^{\mathbb{R}^d}$ and $q\in\mathbb{R}^d$ and any hyperplane $h$ we have $|\rho(A,q)-\rho(A\cup\{h\},q)|\leq 1$,
    \item[(ii)] for all $A\in A^{\mathbb{R}^d}$ we have $\rho(A,q)=0$ if $q$ is in an unbounded cell of $A$,
    \item[(iii')] for all $A\in A^{\mathbb{R}^d}$ we have $\rho(A,q)\geq k$ if $A$ $k$-encloses $q$,
    \item[(iv')] for all $A\in A^{\mathbb{R}^d}$ and $q\in\mathbb{R}^d$ and any hyperplane $h$ we have $\rho(A\cup\{h\},q)\geq\rho(A,q)$.
\end{enumerate}

\begin{observation}
    Regression depth, hyperplane Tverberg depth and hyperplane enclosing depth are all enclosable.
\end{observation}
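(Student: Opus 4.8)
The plan is to check, one measure at a time, the four conditions (i), (ii), (iii'), (iv') defining an enclosable depth measure, exploiting that most of the work is already done. For regression depth and hyperplane Tverberg depth, conditions (i) and (ii) are literally the first two super-additivity axioms, so they come for free from the preceding observation; condition (iv') is immediate from super-additivity axiom (iv) applied to the disjoint subsets $A$ and $\{h\}$ of $A\cup\{h\}$, together with $\rho\ge 0$. Hence for these two measures the only genuinely new point is (iii'). For hyperplane enclosing depth none of the four conditions is subsumed by an earlier statement, so I will verify all four directly, but each verification is short.

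For (iii') and regression depth I will argue by a ``one survivor per part'' pigeonhole. Write the $k$-enclosing partition as $A=A_1\sqcup\dots\sqcup A_{d+1}$ with $|A_i|=k$. If $\text{RD}(A,q)=m<k$, then some ray $r$ from $q$ meets-or-is-parallel-to only an $m$-element subset $S\subseteq A$; since $|A_i|=k>m\ge|A_i\cap S|$, every part $A_i$ contains an element outside $S$, and choosing one such element from each part yields a transversal $T=\{h_1,\dots,h_{d+1}\}$ disjoint from $S$, so $r$ witnesses $\text{RD}(T,q)=0$, contradicting the enclosing property. For hyperplane Tverberg depth I will use the same product structure the other way (``rows instead of columns''): fix a labelling $A_i=\{h_i^1,\dots,h_i^k\}$ of each part and set $B_j=\{h_1^j,\dots,h_{d+1}^j\}$ for $j=1,\dots,k$; the $B_j$ are pairwise disjoint, their union is $A$, and each $B_j$ is a transversal of the $A_i$'s, hence has positive regression depth, so $\text{HTvD}(A,q)\ge k$. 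For $\text{HED}$, condition (iii') is then trivial, since $A$ itself is a witnessing sub-arrangement.

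It remains to check (i), (ii) and (iv') for $\text{HED}$. For (ii): deleting hyperplanes only merges cells, so a point in an unbounded cell of $A$ lies in an unbounded cell of every sub-arrangement; in particular every $(d+1)$-element sub-arrangement has regression depth $0$ at $q$ by condition (ii) for regression depth, so no sub-arrangement $1$-encloses $q$ and $\text{HED}(A,q)=0$. For (iv'): any sub-arrangement of $A$ that $k$-encloses $q$ is still a sub-arrangement of $A\cup\{h\}$, so $\text{HED}$ cannot decrease when a hyperplane is added. For (i): by this monotonicity only the inequality $\text{HED}(A\cup\{h\},q)\le\text{HED}(A,q)+1$ needs proof; I will take a sub-arrangement $B\subseteq A\cup\{h\}$ that $k$-encloses $q$ with $k=\text{HED}(A\cup\{h\},q)$, observe we are done immediately if $h\notin B$, and otherwise delete $h$ from its part of the enclosing partition and one arbitrary hyperplane from each of the other $d$ parts, obtaining a sub-arrangement of $A$ that $(k-1)$-encloses $q$ (the transversal condition is inherited by subsets).

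I do not expect a serious obstacle. The one place needing care is the bookkeeping in (iii'): verifying that the ``one survivor per $A_i$'' choice and the row-sets $B_j$ interact with the partition $A=A_1\sqcup\dots\sqcup A_{d+1}$ exactly as claimed — this is where the equicardinality $|A_i|=k$ and the fact that the $A_i$ genuinely partition $A$ are used — and checking that the ``intersected by or parallel to'' clause in the definition of regression depth does not interfere with the pigeonhole step (it does not: all that is used is that the blocking set $S$ of the ray $r$ has fewer than $k$ elements).
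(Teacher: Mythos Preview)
Your verification is correct. The paper does not actually prove this observation; it is stated without proof, presumably left to the reader. Your argument fills in every step carefully and correctly: the reduction of (i), (ii), (iv') for $\mathrm{RD}$ and $\mathrm{HTvD}$ to the already-established super-additivity, the pigeonhole transversal argument for (iii') in the case of $\mathrm{RD}$, the ``row partition'' for (iii') in the case of $\mathrm{HTvD}$, and the direct checks for $\mathrm{HED}$ (including the deletion-of-one-per-part trick for (i)) are all sound. One small remark on your argument for (ii) for $\mathrm{HED}$: when you invoke ``condition (ii) for regression depth'' on a sub-arrangement, you are implicitly using that from an unbounded cell one can find a ray that is neither crossing nor parallel to any hyperplane (so that $\mathrm{RD}=0$ rather than merely ``$q$ lies in an unbounded cell''); this is fine since the set of escaping directions is open and the finitely many forbidden parallel directions can be avoided, but it is worth saying explicitly.
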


%\begin{proof}
%Conditions (i) and (ii) are the same as for super-additivity, and for all three measures it follows immediately from the definitions that they are satisfied. Condition (iii') is satisfied by definition for hyperplane enclosing depth, and condition (iv') follows from the fact that adding hyperplanes cannot destroy an enclosing subarrangement. For the other two measures, conditions (iii') and (iv') follow from the fact that they are super-additive. 
%\end{proof}

By Lemma \ref{lem:upper_bound}, any enclosable depth measure is bounded from above by regression depth.
On the other hand, it follows immediately from conditions (iii') and (iv') that any enclosable depth measure is bounded from below by hyperplane enclosing depth.
We finish this section by showing a lower bound for hyperplane enclosing depth.
In Theorem 17 in \cite{Enclosing} it was shown that there is a constant $c(d)$ such that for any point set $S$ in $\mathbb{R}^d$ and any query point $q$ we have $\text{ED}(S,q)\geq c\cdot \text{TD}(S,q)$.
Let now $q$ be a point of largest regression depth for a hyperplane arrangement $A$. We will see in Theorem \ref{thm:weighted_centerpoint} that $q$ has regression depth at least $\frac{|A|}{d+1}$. By Observation \ref{obs:duality}, this means $\text{TD}(A^*_q,q)\geq \frac{|A|}{d+1}$. By Theorem 17 in in \cite{Enclosing}, it follows that $\text{ED}(A^*_q,q)\geq \frac{c|A|}{d+1}$.
Using Observation \ref{obs:duality} again, we deduce the following:

\begin{theorem}
Let $A$ be an arrangement of hyperplanes in $\mathbb{R}^d$. There is a constant $c=c(d)$ such that there is a query point $q$ with hyperplane enclosing depth $\text{HED}(A,q)\geq \frac{c|A|}{d+1}$.
\end{theorem}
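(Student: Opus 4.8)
The plan is to reduce the statement to the corresponding result for point sets via the metric duality $A\mapsto A^*_q$, chaining together the centerpoint theorem for regression depth with the known lower bound on enclosing depth in terms of Tukey depth.

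First I would invoke Theorem~\ref{thm:weighted_centerpoint}: there is a query point $q\in\mathbb{R}^d$ with $\text{RD}(A,q)\geq\frac{|A|}{d+1}$. Passing to the dual point set, item~(2) of Observation~\ref{obs:duality} says that $q$ has positive regression depth with respect to a subcollection of hyperplanes exactly when $q$ lies in the convex hull of the corresponding points of $A^*_q$; combined with item~(1) this gives $\text{TD}(A^*_q,q)\geq\text{RD}(A,q)\geq\frac{|A|}{d+1}$. Next I would apply Theorem~17 of \cite{Enclosing}, which provides a constant $c=c(d)$ with $\text{ED}(S,q)\geq c\cdot\text{TD}(S,q)$ for every finite $S\subset\mathbb{R}^d$; taking $S=A^*_q$ yields $\text{ED}(A^*_q,q)\geq\frac{c|A|}{d+1}$. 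Finally, a witness for $\text{ED}(A^*_q,q)\geq k$ is a subset of $A^*_q$ split into $d+1$ pairwise disjoint parts of size $k$ such that every transversal $(d+1)$-tuple has $q$ in its convex hull; by item~(2) of Observation~\ref{obs:duality} again, the corresponding sub-arrangement of $A$, with the induced partition, $k$-encloses $q$. Hence $\text{HED}(A,q)\geq\text{ED}(A^*_q,q)\geq\frac{c|A|}{d+1}$, as claimed. The two depth identities and this last bijection between sub-arrangements and subsets of dual points are together exactly Corollary~\ref{cor:depth_under_duality}.

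The only delicate point, and the main obstacle, is that Corollary~\ref{cor:depth_under_duality} is stated for arrangements in general position, whereas the theorem assumes nothing about $A$. If $A$ is degenerate — for instance if several hyperplanes of $A$ pass through $q$, forcing their closest points to $q$ to collapse onto $q$ — then $h\mapsto p(h)$ is no longer injective and $A^*_q$ must be treated as a multiset. I would handle this by carrying out the argument with multisets throughout: the statement of Observation~\ref{obs:duality} uses no general-position hypothesis, and both its items~(1)--(2) and Theorem~17 of \cite{Enclosing} remain valid for multisets in $\mathbb{R}^d$, with Tukey and enclosing depth counted with multiplicity and the $k$-enclosing partition treating repeated points as distinct elements. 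Alternatively one can perturb $A$ to a general-position arrangement $A_\varepsilon$, apply the clean argument to get $q_\varepsilon$ with $\text{HED}(A_\varepsilon,q_\varepsilon)\geq\frac{c|A|}{d+1}$, and pass to a limit, using that $\text{HED}$ is combinatorial and that the defining inequalities $\text{RD}(\{h_1,\dots,h_{d+1}\},q)\geq 1$ are closed conditions; but the multiset route avoids all semicontinuity bookkeeping, so that is the one I would take.
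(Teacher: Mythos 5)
Your proposal is correct and follows essentially the same route as the paper: take a deepest point from Theorem~\ref{thm:weighted_centerpoint}, translate via Observation~\ref{obs:duality} to get $\text{TD}(A^*_q,q)\geq\frac{|A|}{d+1}$, apply Theorem~17 of \cite{Enclosing}, and translate back to $\text{HED}$. Your extra discussion of degeneracies (multisets when hyperplanes pass through $q$) is a reasonable refinement of a point the paper handles implicitly by invoking Observation~\ref{obs:duality} rather than the general-position Corollary~\ref{cor:depth_under_duality}, but it does not change the argument.
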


Combining all of the above, we get an analogue to Theorem \ref{thm:super-additive_measures}.

\begin{theorem}
Let $\rho$ be an enclosable depth measure for hyperplanes. Then for all $A\in A^{\mathbb{R}^d}$ and $q\in\mathbb{R}^d$ we have $\text{RD}(A,q)\geq\rho(A,q)\geq\text{HED}(A,q)\geq c\cdot\text{RD}(A,q)$.
\end{theorem}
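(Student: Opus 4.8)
The plan is to obtain the stated sandwich $\text{RD}(A,q)\ge\rho(A,q)\ge\text{HED}(A,q)\ge c\cdot\text{RD}(A,q)$ by chaining three inequalities, the first two of which are essentially immediate from the axioms and Lemma \ref{lem:upper_bound}, and the third of which is the pointwise, dual form of a known result for point sets. So the proof is mostly a matter of assembling pieces already in hand.

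First I would dispatch the leftmost inequality: an enclosable depth measure satisfies conditions (i) and (ii) by definition, which are exactly the hypotheses of Lemma \ref{lem:upper_bound}, so $\rho(A,q)\le\text{RD}(A,q)$. For the middle inequality $\rho(A,q)\ge\text{HED}(A,q)$, set $k=\text{HED}(A,q)$; if $k=0$ there is nothing to prove since $\rho\ge 0$, and otherwise choose, by definition of hyperplane enclosing depth, a sub-arrangement $A'\subseteq A$ that $k$-encloses $q$. Condition (iii') gives $\rho(A',q)\ge k$, and then adding the hyperplanes of $A\setminus A'$ back one at a time, condition (iv') shows the value of $\rho$ never decreases, so $\rho(A,q)\ge\rho(A',q)\ge k$. (This is the step the text calls ``immediate''; I would spell out this one-line induction on $|A\setminus A'|$ for completeness.) For the rightmost inequality $\text{HED}(A,q)\ge c\cdot\text{RD}(A,q)$, I would transport Theorem 17 of \cite{Enclosing} through the duality $h\mapsto p(h)$: for an arrangement in general position, Corollary \ref{cor:depth_under_duality} identifies $\text{HED}(A,q)=\text{ED}(A^*_q,q)$ and $\text{RD}(A,q)=\text{TD}(A^*_q,q)$, while Theorem 17 of \cite{Enclosing} provides a constant $c=c(d)$ with $\text{ED}(S,q)\ge c\cdot\text{TD}(S,q)$ for every finite point set $S\subseteq\mathbb{R}^d$ and every $q$; composing these three facts yields the claim with the same constant $c(d)$. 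Combining the three inequalities gives the theorem, in exact analogy with Theorem \ref{thm:super-additive_measures}; if one additionally wants the concrete lower bound, one plugs in $\text{RD}(A,q)\ge |A|/(d+1)$ from Theorem \ref{thm:weighted_centerpoint} at a deepest point $q$.

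The one place that needs genuine care, and which I expect to be the main obstacle, is the general-position hypothesis in Corollary \ref{cor:depth_under_duality}. For an arbitrary arrangement the dual point set $A^*_q$ may be degenerate (for instance when $q$ lies on several hyperplanes, several dual points collapse onto $q$), so the clean identities $\text{HED}=\text{ED}$ and $\text{RD}=\text{TD}$ need not hold on the nose. I would handle this either by replacing those identities with the corresponding one-sided inequalities still valid via Observation \ref{obs:duality}, or by a perturbation argument: move the hyperplanes of $A$ slightly into general position, apply the general-position case, and take a limit, tracking how $\text{RD}$ and $\text{HED}$ change under the perturbation (condition (i) gives the needed Lipschitz control for $\rho$, and $\text{RD}$, $\text{HED}$ are combinatorial). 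Everything apart from this limiting step is a formal combination of the preceding lemmas and axioms.
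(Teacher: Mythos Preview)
Your proposal is correct and follows essentially the same route as the paper: the paper also derives the chain by invoking Lemma~\ref{lem:upper_bound} for the upper bound, conditions (iii') and (iv') for $\rho\ge\text{HED}$, and Theorem~17 of \cite{Enclosing} transported through the duality of Observation~\ref{obs:duality}/Corollary~\ref{cor:depth_under_duality} for $\text{HED}\ge c\cdot\text{RD}$. Your explicit attention to the general-position hypothesis in Corollary~\ref{cor:depth_under_duality} goes slightly beyond what the paper spells out; note that Observation~\ref{obs:duality}(1) and (2) are stated without a general-position assumption, so the identities $\text{RD}(A,q)=\text{TD}(A^*_q,q)$ and $\text{HED}(A,q)=\text{ED}(A^*_q,q)$ in fact hold for arbitrary arrangements (treating $A^*_q$ as a multiset), which disposes of the obstacle you anticipated without needing a perturbation argument.
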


In particular, all combinatorial depth measures for hyperplanes that we consider in this paper are constant factor approximations of regression depth. In the next three sections, we give three lower bounds for the depth of a deepest point. In Section \ref{sec:Tverberg} we give a lower bound for hyperplane Tverberg depth, in Section \ref{sec:Helly} a slightly stronger bound for regression depth, and in Section \ref{sec:weighted} we give a lower bound for super-additive depth measures with contractible depth regions in the more general setting of weighted arrangements.

\section{A first lower bound: Hyperplane Tverberg Depth}\label{sec:Tverberg}

In this section we prove an analogue of Tverberg's theorem for hyperplane arrangements, resolving a conjecture by Rousseeuw and Hubert from 1999 \cite{Rousseeuw1999}. Our proof is inspired by the proof of Tverberg's theorem by Roudneff \cite{Roudneff}, see also \cite{Tverberg_Survey}.

\tverberg*

\begin{proof}
Let $\pi$ be a partition of $A$ into $r$ parts, each of size at most $d+1$. Note that $\pi$ can have at most $d$ parts of size $\leq d$. Define the following function $f_{\pi}:\mathbb{R}^d\rightarrow\mathbb{R}_{\geq 0}$: for each point $q\in\mathbb{R}^d$, consider the point set $A^*_q$. The partition $\pi$ induces a partition of this point set into parts $X_1(q),\ldots,X_r(q)$. Let $B(q)$ be the smallest ball centered at $q$ which for every part intersects the convex hull, and define $f_{\pi}(q)$ as the radius of this ball.
%Clearly, $B(q)$ is tangent to at least one of the convex hulls.
As the map which for a hyperplane $h$ assigns to a point in $\mathbb{R}^d$ the closest point on $h$ is continuous as a function of $q$, the function $f_{\pi}$ is also continuous. Further, the function goes to infinity along any ray, so it attains a minimum.
Denote by $C(q)$ the set of parts whose convex hulls $B(q)$ is tangent to. By general position, we may assume that $|C(q)|\leq d+1$.

Let now $\pi$ be a partition which minimizes $\min f_{\pi}$ and let $p$ be a point where $f_{\pi}$ attains its minimum. If $f_{\pi}(p)=0$, then by Observation \ref{obs:duality}, $p$ is the desired point.
So, assume that $f_{\pi}(p)>0$. For each $X_j(q)$ let $y_j(q)$ denote the unique point in $\text{conv}X_j(q)$ that minimizes the distance to $q$, i.e., $d(q,\text{conv}X_j)=||q-y_j(q)||$, and define $Y_j(q)\subset X_j(q)$ as the unique subset for which $y_j(q)$ lies in the relative interior of $\text{conv}(Y_j(q))$.
In particular we can write $f_{\pi}(q)=\frac{1}{|C(q)|}\sum_{X_j\in C(q)} ||q-y_j(q)||$, and its gradient as $\nabla f_{\pi}(q)=\frac{1}{|C(q)|}\sum_{X_j\in C(q)} (q-y_j(q))$.
As $f_{\pi}$ is minimized at $p$, we have $\nabla f_{\pi}(p)=0$.

We claim that $C(p)$ consists of exactly $d+1$ parts and that no $d$ of the corresponding vectors $(p-y_j(p))$ lie in a common hyperplane with $p$.
Assume for the sake of contradiction that the latter is not the case, that is, that there is a hyperplane $h$ containing all except possibly one of the vectors $(p-y_j(p))$. Let $\ell$ be a line through $p$ that is orthogonal to $h$. Note that all except possibly one of the affine hulls $\text{aff}Y_j(p)$ for $X_j(p)\in C(p)$ are parallel to $\ell$. If there is a single vector not in $h$, then this vector induces a direction on $\ell$.
Move $p$ a distance $\varepsilon$ in the opposite direction. If all vectors are in $h$, then move $p$ along $\ell$ in any direction. Call the resulting point $p'$. We can choose $\varepsilon$ small enough that $C(p')=C(p)$. Let $h'$ be the hyperplane through $p'$ that is parallel to $h$ and let $h^+$ be its side containing $p$. Consider now the point $y_j(p')$ for some $X_j(p)\in C(p)$. This point is in the relative interior of the points in $Y_j(p')$. Let $a(p')\in Y_j(p')$ and let $a(p)$ be the corresponding point in $Y_j(p)$. If $a(p)$ is on the same side of $h$ as $p'$, then $d(a(p'),p')<d(a(p),p)$ and if $a(p)$ is on the other side then $d(a(p'),p')>d(a(p),p)$, see Figure \ref{fig:tverberg_moving}.
In particular, The affine subspace $Y_j(p')$ is not parallel to $\ell$ and the vector $(p'-y_j(p'))$ points into $h^+$.
As this holds for any $X_j\in C(p)$, then also the gradient $\nabla f_{\pi}(p')=\frac{1}{|C(p)|}\sum_{X_j\in C(p)} (p'-y_j(p'))$ points into $h^+$, and as this is the side that contains $p$, this means that $p$ cannot be a local minimum, which is a contradiction to the choice of $p$.
It follows that any $d$ of the vectors $(p-y_j(p))$ are linearly independent, and thus we need at least $d+1$ of them to have $\nabla f_{\pi}(p)=0$.

Thus, the ball $B(p)$ is tangent to exactly $d+1$ convex hulls, and the $d+1$ tangent hyperplanes form a simplex containing $p$ in its interior.
As there are at most $d$ parts of size $\leq d$, there must be a point $v$ in some $X_j$ such that $B(p)$ still intersects the convex hull of $X_j\setminus\{v\}$. This point must lie on the same side as $p$ of some other tangent hyperplane, say of $X_i$. Then adding $v$ to $X_i$ gives a new partition $\pi'$ in which $B(p)$ intersects the interior of the convex hull of $X_i$. In particular, due to the arguments above, $p$ is not a minimum of $f_{\pi'}$, and thus $\min f_{\pi'}<\min f_{\pi}$. This is a contradiction to the choice of $\pi$, showing that $\min f_{\pi}=0$.
\end{proof}

\begin{figure}
\centering
\includegraphics[scale=0.8]{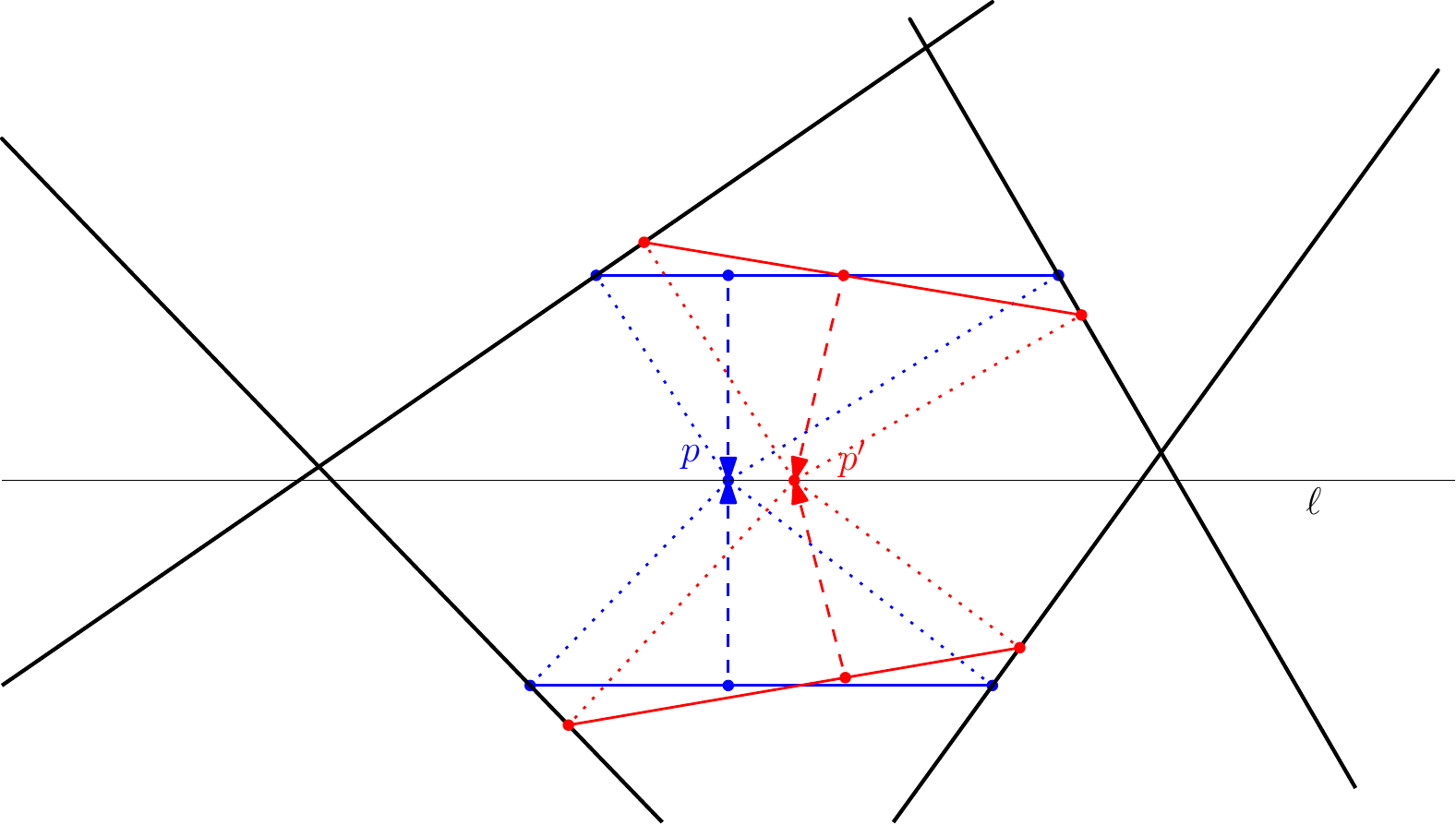}
\caption{Moving $p$ to $p'$ the affine hulls of $Y_j$ are not parallel to $\ell$ anymore.}
\label{fig:tverberg_moving}
\end{figure}

From Theorem \ref{thm:tverberg}, for any super-additive depth measure the existence of a point with depth at least $\frac{|A|}{d+1}$ follows using Theorem \ref{thm:super-additive_measures}.

The existence of a Tverberg theorem for regression depth naturally leads us to ask about a colorful version of such a result.

\begin{conjecture}
Let $r, d$ be a positive integer and $A_1, \dots, A_{d+1}$ be sets of $r$ hyperplanes each in $\mathbb{R}^d$.  There exists a partition of their union into $r$ sets $B_1, \dots B_r$ such that $|A_i \cap B_j| = 1$ for every $i \in [d+1], j\in [r]$ and a point $q$ such that $q$ has positive regression depth for each $B_j$.
\end{conjecture}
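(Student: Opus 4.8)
The plan is to run the same continuous‑minimization scheme that proves Theorem~\ref{thm:tverberg}, but restricted to rainbow partitions. By a standard perturbation and limiting argument we may assume the $A_i$ are pairwise disjoint and the whole union is in general position, so that $|A_1\cup\dots\cup A_{d+1}| = r(d+1)$. Call a partition $\pi = (X_1,\dots,X_r)$ of the union \emph{rainbow} if $|A_i\cap X_j| = 1$ for all $i\in[d+1]$, $j\in[r]$; every part of a rainbow partition then has size exactly $d+1$. For a rainbow $\pi$ and $q\in\mathbb{R}^d$ define $f_\pi(q)$ exactly as in the proof of Theorem~\ref{thm:tverberg}: pass to the dual point set $A^*_q$, let $X_1(q),\dots,X_r(q)$ be the induced parts, let $B(q)$ be the smallest ball centered at $q$ meeting $\mathrm{conv}(X_j(q))$ for every $j$, and set $f_\pi(q)$ to be its radius. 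As before $f_\pi$ is continuous and proper (it grows along every ray), hence attains a minimum, and by Observation~\ref{obs:duality} a rainbow $\pi$ with $\min f_\pi = 0$ yields the desired point $q$. So it suffices to show that the rainbow partition minimizing $\min f_\pi$ achieves the value $0$.

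Pick such a rainbow $\pi$, let $p$ attain $\min f_\pi$, and suppose $f_\pi(p) > 0$. The local analysis of the proof of Theorem~\ref{thm:tverberg} goes through verbatim: $\nabla f_\pi(p) = 0$ together with the general‑position perturbation argument forces $B(p)$ to be tangent to exactly $d+1$ parts $X_{j_1},\dots,X_{j_{d+1}}$, whose tangent hyperplanes bound a simplex $\Delta$ with $p$ in its interior; and since $f_\pi(p) > 0$ and the $d+1$ dual points of each part are affinely independent, the essential subset $Y_{j_l}(p)\subseteq X_{j_l}(p)$ (the minimal subset whose hull contains the foot point $y_{j_l}(p)$) is a proper subset, so $X_{j_l}$ contains a removable hyperplane $v$ with $y_{j_l}(p)\in\mathrm{conv}(X_{j_l}(p)\setminus\{v\})$. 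Exactly as in that proof, because the far‑side halfspaces of the $d+1$ facets of a simplex have empty common intersection, the dual point of any such $v\in X_{j_1}$ lies strictly on the $p$‑side of the tangent hyperplane of some other tangent part $X_{j_i}$, $i\neq 1$.

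The new ingredient is that we cannot simply move $v$ into $X_{j_i}$: to keep the partition rainbow we must swap $v$ with the hyperplane $w\in X_{j_i}$ of the same color. In the resulting rainbow partition $\pi'$, the part $X_{j_1}$ only acquires a vertex, so its distance to $p$ does not increase; the part $X_{j_i}$ loses $w$ and gains $v$. If $w\notin Y_{j_i}(p)$, then $y_{j_i}(p)$ still lies in $\mathrm{conv}(X_{j_i}(p)\setminus\{w\})$, the segment from (the dual point of) $v$ to $y_{j_i}(p)$ dips into the open ball $B(p)$, so $d\big(p,\mathrm{conv}((X_{j_i}(p)\setminus\{w\})\cup\{v\})\big) < f_\pi(p)$; hence $X_{j_i}$ drops out of the tangent family, only $d$ tangent parts remain, their foot vectors $p-y_{j}(p)$ are linearly independent, so $\nabla f_{\pi'}(p)\neq 0$ and $\min f_{\pi'} < \min f_\pi$, contradicting the choice of $\pi$. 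Thus the whole argument reduces to a purely combinatorial \textbf{selection lemma}: one can always choose the tangent part $X_{j_1}$, the removable hyperplane $v$, and hence the target part $X_{j_i}$, so that the same‑colored hyperplane $w$ in $X_{j_i}$ is not essential there.

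I expect this selection lemma to be the main obstacle, and it is the reason the statement is only a conjecture: the coloring may force every admissible rainbow swap to eject an essential hyperplane $w\in Y_{j_i}(p)$ from its part, in which case removing $w$ can push $X_{j_i}$ strictly farther from $p$ and the swap need not decrease $\min f_\pi$. This is precisely the phenomenon that makes colorful Tverberg genuinely harder than Tverberg; indeed the statement is the hyperplane‑arrangement analogue of the Bárány--Larman colorful Tverberg conjecture, open for point sets in general dimension. Two routes around the obstacle seem worth pursuing: (a) keep the minimization approach but replace $\min f_\pi$ by a finer potential — e.g. the multiset $\{d(p,\mathrm{conv}(X_j(p)))\}_j$ ordered lexicographically, or a secondary count of $|C(p)|$ — and prove that \emph{some} rainbow swap always strictly improves it; this still needs an exchange argument matching essential hyperplanes across the $d+1$ tangent parts and the $d+1$ colors, which is the crux. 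Or (b) abandon minimization and set up an equivariant test map over a join of chessboard complexes, mimicking the topological proof of colorful Tverberg; here the extra difficulty is that the duality $A\mapsto A^*_q$ depends on $q$, so the target sphere (or vector bundle) and the $\mathfrak{S}_r$‑action must be assembled over the base of all admissible $q$ at once, and showing the resulting equivariant obstruction vanishes (say when $r$ is prime) would be the hard step.
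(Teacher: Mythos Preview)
The statement you are addressing is stated in the paper as a \emph{conjecture}, not a theorem; the paper offers no proof and explicitly leaves it open, only remarking that Karasev's stronger simplex version has been disproved while the regression-depth version has not. So there is no ``paper's own proof'' to compare against.

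Your write-up is not a proof either, and you say so yourself: the attempt to run the minimization argument of Theorem~\ref{thm:tverberg} over rainbow partitions runs into exactly the obstacle you isolate. In the uncolored proof one can always find a part of size $d+1$ that is \emph{not} tangent and strip a vertex from it; here every part already has size $d+1$, and the only rainbow-preserving move is a same-color swap between two tangent parts. Your reduction to the ``selection lemma'' (pick the removable $v$ and the target part so that the ejected same-color $w$ is inessential) is the genuine crux, and there is no reason to believe it is always satisfiable --- this is precisely the colorful-Tverberg difficulty, and the minimization scheme of Roudneff is not known to prove even the point-set colorful Tverberg theorem. So route~(a) in your proposal is, as far as anyone knows, stuck at the same point.

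Your route~(b) is closer to what is actually known: Karasev's prime-power results for hyperplane Tverberg go through equivariant topology, and the natural approach to the colorful version would be a chessboard-complex test map in that spirit. But the dependence of the duality $A\mapsto A^*_q$ on $q$ that you flag is real, and no one has carried this out; the paper simply records the problem as open. In short, your analysis correctly locates the gap and correctly explains why neither the paper nor you close it.
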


In the plane, Karasev conjectured, provided the hyperplanes are in general position, such a partition could be found so that $q$ was in the simplex determined by each $B_j$, since his Tverberg-type results for hyperplanes hold in that setting \cite{Karasev:2008jl}.  However, his conjecture and its natural extensions to $\mathbb{R}^d$ have been disproved \cite{Carvalho2022, Lee2018}.  Yet, those counterexamples do not disprove the regression depth version, in which the $q$ can be in the simplex determined by each $B_j$ or the union of the hyperplanes making $B_j$.

\section{A second lower bound: topological Helly theorem}\label{sec:Helly}

In this section, we give a proof for the centerpoint theorem for regression depth based on one of the first topological versions of Helly's theorem, which states that given a finite family $F$ of subsets of $\mathbb{R}^d$ with the property that for any $d+1$ or fewer of them their intersection is non-empty and contractible, there is a point in the intersection of all families \cite{TopoHelly}. In fact, this method proves a stronger statement: we will show that for an arrangement $A$ in general position, there is always as point in a cell of $A$ of regression depth $\lceil\frac{|A|-d}{d+1}\rceil$. As we will see, this implies that there is always a point of regression depth $\lfloor\frac{|A|}{d+1}\rfloor+1$.

The basic idea is the following: given an arrangement $A$ of hyperplanes, consider some direction $\ell$, and for every point $q$ in $\mathbb{R}^d$ compute how many hyperplanes of $A$ the open ray with direction $\ell$ emanating from $q$ intersects. Denote this number by $\ell(q)$. Define $R_A(k,\ell)$ as the set of points where $\ell(q)\geq k$. As $A$ is finite, there are only finitely many different such regions. If we can show that for $k=\lceil\frac{|A|-d}{d+1}\rceil$ the intersection of any $d+1$ or fewer such regions is contractible, then the existence of a deep point as claimed above follows from the topological Helly theorem.
In fact, our arguments will show that any non-empty depth region is contractible. 

There is however one technical issue: the depth regions of regression depth are in general not contractible. Consider three lines in the plane that form a triangle. The regression depth is 1 on any line or in the interior of the triangle, but it is 2 on the three corners, where two of the lines intersect. So, the region of depth 2 consists of three isolated points and is thus not contractible.

If we however look only at the $2$-dimensional cells of a planar line arrangement, then it is easy to show that the closure of the union of cells of depth at least $k$ is contractible: no cell can be completely surrounded by cells of larger depth, as any ray witnessing depth $k$, that is, intersecting exactly $k$ lines, also witnesses that the other cells it intersects all have depth smaller than $k$.

%Going to higher dimensions, in order to circumvent the issues on the lower-dimensional faces of the hyperplane arrangement, we define a new measure, which we call \emph{truncated regression depth}, denoted by $\text{RD'}$ as follows: let $A$ be an arrangement of hyperplanes in $\mathbb{R}^d$. We first slightly perturb $A$ to get an arrangement $A'$ in general position. In particular, in any $k$-dimensional affine subspace at most $d-k$ of the hyperplanes intersect. Reversing this perturbation induces a surjective map $\pi$ of the faces of $A'$ to the faces of $A$. For any face $F$ of $A$, we call $\pi^{-1}(A)$ the faces \emph{perturbed} from $F$. Note that if $A$ was already in general position, then $\pi$ is a bijection.

To overcome this issue, we define a new measure, which we call \emph{open regression depth}, denoted by $\text{RD'}$ as follows: let $A$ be an arrangement of hyperplanes in $\mathbb{R}^d$. We first slightly perturb $A$ to get an arrangement $A'$ in general position. In particular, in any $k$-dimensional affine subspace at most $d-k$ of the hyperplanes intersect. Reversing this perturbation induces a surjective map $\pi$ of the faces of $A'$ to the faces of $A$. For any face $F$ of $A$, we call $\pi^{-1}(A)$ the faces \emph{perturbed} from $F$. Note that if $A$ was already in general position, then $\pi$ is a bijection.

Consider now the perturbed arrangement $A'$. For any point $q\in\mathbb{R}^d$, define the open regression depth with respect to the perturbed arrangement as the minimum number of hyperplanes of $A$ that any ray emanating from $q$ crosses or is parallel to, where a ray crosses a hyperplane if there is a point in the relative interior of the ray that is also on the hyperplane.
In other words, the open regression depth for perturbed arrangement is just the regression depth, where we do not count the hyperplanes that $q$ lies on. The depth regions of open regression depth in a perturbed arrangement are the unions of cells with large enough depth, with lower-dimensional faces added whenever they are incident to only deep enough cells.

In order to extend the definition to the original arrangement, we define the open regression depth of a query point $q$ in some face $F_q$ of the arrangement $A$ as $\text{RD'}(A,q):=\max_{q'\in F\in F_q}\{\text{RD'}(A',q)\}$, that is, as the maximum open regression depth of any point in one of the faces perturbed from $F_q$. Note that we can perturb the arrangement in a deterministic way, ensuring that the open regression depth is well defined. The following lemma follows immediately from the definition:

\begin{lemma}
For any arrangement of hyperplanes $A$ and any query point $q$, we have $\text{RD'}(A,q)\leq\text{RD}(A,q)$.
\end{lemma}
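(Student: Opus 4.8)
The plan is to prove the stronger pointwise statement that $\text{RD'}(A',q')\le \text{RD}(A,q)$ for every point $q'$ lying in a face of $A'$ that is perturbed from the face $F_q$ of $A$ containing $q$. Since $\text{RD'}(A,q)$ is by definition exactly the maximum of $\text{RD'}(A',q')$ over these points $q'$, the lemma follows immediately.

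For the pointwise bound I would start from a ray $r$ emanating from $q$ that realizes $\text{RD}(A,q)=k$, and let $S\subseteq A$ be the set of hyperplanes that $r$ meets or is parallel to, so $|S|=k$. Every hyperplane of $A$ through $q$ lies in $S$, since every ray from $q$ meets it, and a face of an arrangement is either contained in or disjoint from each of its hyperplanes, so $S$ contains every hyperplane that contains $F_q$. I would then replace the direction of $r$ by a \emph{generic} one: a small rotation of the direction can neither pull a missed hyperplane onto the forward ray (the crossing point with the spanned line stays strictly behind $q$) nor drop a hyperplane through $q$, so after a sufficiently small rotation the new ray still meets or is parallel to only hyperplanes of $S$, and its direction $v$ can be taken with $\langle n_h, v\rangle$ bounded away from $0$ for every normal $n_h$ of a hyperplane $h\in A$. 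Writing $g_h$ for a linear functional with zero set $h$ normalized so that $g_h(q)>0$, the statement ``$r$ misses $h$'' for $h\in A\setminus S$ says exactly that the derivative of $g_h$ in direction $v$ is positive.

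Now I would take $q'$ in a face perturbed from $F_q$ and consider the ray $r'$ from $q'$ in direction $v$. For $h\in A\setminus S$ the hyperplane $h$ does not contain $F_q$, so — by the meaning of ``perturbed from $F_q$'' — the point $q'$ lies on the same side of the perturbed hyperplane $h'$ as $F_q$ lies of $h$; hence the matching functional $g_{h'}$ is positive at $q'$, and since $v$ is generic and the perturbation small, the derivative of $g_{h'}$ along $v$ is still positive, so the line spanned by $r'$ meets $h'$ strictly behind $q'$ and $r'$ neither crosses (in the relative-interior sense of $\text{RD'}$) nor is parallel to $h'$. On the other hand, a hyperplane $h'\in A'$ through $q'$ arises from a hyperplane of $A$ containing $F_q$, hence from $S$, and genericity of $v$ makes $r'$ leave $q'$ transversally, so such an $h'$ meets $r'$ only at its endpoint and is not counted. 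Altogether $r'$ meets, is parallel to, or crosses only hyperplanes arising from $S$, so $\text{RD'}(A',q')\le |S|=k$.

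The part that I expect to need genuine care is making precise the two features of the deterministic perturbation that the argument uses: (1) that a face perturbed from $F_q$ lies on the same side of $h'$ as $F_q$ lies of $h$ for every $h\in A$ not containing $F_q$ — i.e.\ the perturbation only resolves the zero entries of sign vectors and never flips a nonzero one; and (2) that the perturbation is small enough, relative to the finitely many hyperplane normals of $A$ and the chosen generic direction $v$, that the signs of $\langle n_h,v\rangle$ and of the derivatives of the $g_{h'}$ are preserved. Both hold for any sufficiently small generic perturbation, but they carry whatever real content the lemma has; with them in hand the rest is the elementary fact that shifting the tail of a ray to the correct side of a hyperplane it already points away from keeps it pointing away. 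One could alternatively split the pointwise bound as $\text{RD'}(A',q')\le\text{RD}(A',q')$ (immediate, since $\text{RD'}$ counts a subset of what $\text{RD}$ counts) together with $\text{RD}(A',q')\le\text{RD}(A,q)$, but proving the latter needs essentially the same bookkeeping.
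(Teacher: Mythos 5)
The paper itself gives no proof of this lemma (it is dismissed as following immediately from the definition), so there is nothing to match line by line; your overall structure is the right reading of the definition and is sound: bound $\text{RD'}(A',q')$ for every $q'$ in a face perturbed from $F_q$ by exhibiting one witness ray whose counted hyperplanes all come from the set $S$ realizing $\text{RD}(A,q)=k$, and then invoke that $\text{RD'}(A,q)$ is a maximum over such $q'$. Two side remarks: the transversality discussion is unnecessary, since any hyperplane of $A'$ through $q'$ comes from a hyperplane containing $F_q$, hence already lies in $S$ and may be counted without hurting the bound $\le|S|$; and your property (1) (the perturbation resolves zero entries of sign vectors but never flips nonzero ones) is indeed the correct formalization of ``perturbed from $F_q$'' and holds for any sufficiently small perturbation, with ``sufficiently small'' depending only on $A$.

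The genuine gap is in your condition (2), and it is a quantifier-order problem: in the definition of $\text{RD'}$ the perturbed arrangement $A'$ is fixed once, deterministically, before the query point $q$ is given, whereas you ask the perturbation to be small relative to the direction $v$, which you chose after $q$ (and after an arbitrary realizing ray). The margins $\langle n_h,v\rangle$ for $h\notin S$ can be arbitrarily small depending on $q$ and on which realizing ray you start from, so for a fixed small real perturbation your particular $v$ may pick up a perturbed copy of a missed hyperplane and only certify $\text{RD'}(A',q')\le k+1$. The repair is a uniformity observation that makes ``small'' depend on $A$ alone: orienting each $n_h$ so that $F_q$ lies on its positive side, the set $\{\sigma_h n_h\mid h\notin S\}$ is one of finitely many subsets of $\{\pm n_h\mid h\in A\}$ admitting a direction with strictly positive inner product against all its elements (your $v_0$ shows this one does, since a missed hyperplane is neither crossed nor parallel); each such feasible subset has a direction with margin bounded below by some $\delta(A)>0$, so choosing the witness direction for $r'$ with this uniform margin, any perturbation of the (unit) normals smaller than $\delta(A)$ preserves all the needed strict signs, uniformly in $q$. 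Equivalently, one may read the paper's ``slight perturbation'' as a symbolic/infinitesimal one, under which your sign-preservation claims hold verbatim. With that repair the rest of your argument goes through and the lemma follows.
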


In particular, proving the existence of deep points for open regression depth implies the existence of deep points for regression depth.
Note, however, that open regression depth is not super-additive: it does not satisfy condition (iii).
We will now prove the existence of deep points for open regression depth using the approach sketched above.
We show that we have the necessary ingredients to apply the topological Helly theorem, starting with the contractability of the relevant regions. Recall that we defined the regions $R_A(k,\ell)$ as the set of points where $\ell(q)\geq k$ for a hyperplane arrangement $A$ and a direction $\ell$, where we considered the relevant ray to be open, that is, not containing $q$. Also recall that as $A$ is finite, is is sufficient to restrict our attention to finitely many directions, and we may assume that these directions are $d$-wise linearly independent, that is, any $d$ of them span a $d$-dimensional cone.

Our proof of contractability requires some algebraic topology, in particular the concept of Mayer-Vietoris sequences in homology theory. We refer to the many excellent books on algebraic topology for the background, e.g.\ \cite{Bredon,Hatcher}.
To show the contractability of some topological space $X$, by the theorems of Whitehead and Hurewicz, it is sufficient to show that $\pi_1(X)=0$, as well as $\Tilde{H}_i(X)=0$ for all $i\geq 0$. In our case, if a loop cannot be contracted, then this has to be because of some family of faces of the arrangement blocking any contraction. But then this family also defines a generator for $H_1$, so we are working with tame enough spaces where vanishing first homology implies simply connectedness. In particular, it suffices to show that all reduced homologies vanish to show that $X$ is contractible. In other words, it is enough to show that $X$ is a \emph{homology cell}.

\begin{lemma}\label{lem:homology_cell}
Let $A_1,\ldots,A_m$ be open subsets of $\mathbb{R}^d$. Assume that each set is a homology cell and that the union of any $d$ of them is a homology cell. Then $\bigcap A:=\bigcap_{i=1}^m A_i$ is either empty or a homology cell.
\end{lemma}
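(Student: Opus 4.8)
The statement looks like a Helly-type assertion where the usual hypothesis "every $\le d+1$ of the sets have contractible intersection" has been replaced, via De Morgan / nerve duality, by a hypothesis on unions. The plan is to prove it by induction on $m$, using a Mayer--Vietoris argument to pass from the union hypothesis to control of intersections. The key observation is that if I set $B_i = A_1 \cap \cdots \cap A_{i-1} \cap A_{i+1} \cap \cdots$ — no, more directly: I want to reduce to the classical nerve/Helly situation. Recall the standard homological Helly theorem (e.g.\ in the form used for the topological Helly theorem \cite{TopoHelly}): if $A_1,\dots,A_m$ are open sets in $\mathbb{R}^d$ such that every intersection of at most $d+1$ of them is a homology cell (in particular nonempty), then $\bigcap_i A_i$ is a homology cell. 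I would like to derive our statement from a \emph{dual} version of this, where unions play the role of intersections. So first I would record (or cite) the homological nerve lemma in the following shape: for open sets $U_1,\dots,U_k$ in $\mathbb{R}^d$, the reduced homology of $\bigcup U_i$ is computed by a spectral sequence (the Mayer--Vietoris / nerve spectral sequence) whose $E_1$ page involves the homologies of the multiple intersections $U_{i_1}\cap\cdots\cap U_{i_j}$; and dually, the reduced homology of $\bigcap U_i$ is controlled by the homologies of the multiple \emph{unions} $U_{i_1}\cup\cdots\cup U_{i_j}$.

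Concretely, the plan is: apply the generalized Mayer--Vietoris (Čech-to-derived) spectral sequence to the cover of $\bigcup_{i=1}^m A_i$ by the $A_i$, but turned around — consider the spectral sequence whose $E_1^{p,q}$ term is $\bigoplus_{|S|=p+1} \widetilde H_q\!\left(\bigcup_{i\in S} A_i\right)$ converging to $\widetilde H_{q-p}(\bigcap_{i=1}^m A_i)$. (This is the "dual" or "inclusion--exclusion" Mayer--Vietoris sequence; it exists because finite intersections and unions of open sets in $\mathbb{R}^d$ are nice enough — they have the homotopy type of CW complexes.) The hypotheses say: every single $A_i$ is a homology cell, so the $p=0$ column contributes nothing to reduced homology; and every union of at most $d$ of them is a homology cell, so the columns $p=1,\dots,d-1$ also vanish in reduced homology. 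Therefore the only possibly-nonzero entries of $E_1$ in total degree $\le d-1$ come from $p \ge d$, i.e.\ from unions of at least $d+1$ sets. But a union of $\ge d+1$ open subsets of $\mathbb{R}^d$ — here I would invoke the key dimension fact: by a theorem on the homology of unions in $\mathbb{R}^d$ (the open cover of $\mathbb{R}^d$-dimension $d$ has "nerve-homological dimension" $\le d-1$), $\widetilde H_q\big(\bigcup_{i\in S} A_i\big)$ for $|S|=p+1 \ge d+1$ vanishes in the degrees $q$ that would matter — more carefully, one shows the total contribution to $\widetilde H_j(\bigcap A_i)$ for $j \ge 0$ is zero. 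Since $\mathbb{R}^d$ has covering dimension $d$, Mayer--Vietoris spectral sequence arguments give that the relevant reduced homology already vanishes, and chasing the spectral sequence yields $\widetilde H_j(\bigcap_{i=1}^m A_i) = 0$ for all $j\ge 0$, so $\bigcap A_i$ is a homology cell (once we know it is nonempty, which follows from the same argument applied one dimension down, or from connectivity considerations).

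An alternative, more elementary route I would try in parallel, avoiding the full spectral sequence: induct on $m$, and at each step use the Mayer--Vietoris sequence for $X = (A_1\cap\cdots\cap A_{m-1})$ and $Y = A_m$, noting $X \cup Y$ and $X$ and $Y$ and $X\cap Y = A_1\cap\cdots\cap A_m$ are related by
\[
\cdots \to \widetilde H_j(X\cap Y) \to \widetilde H_j(X)\oplus \widetilde H_j(Y) \to \widetilde H_j(X\cup Y) \to \cdots
\]
The trouble is that the inductive hypothesis as stated is about unions of the \emph{original} $A_i$, not about the mixed object $X\cup Y = (A_1\cap\cdots\cap A_{m-1})\cup A_m$, so one needs to strengthen the inductive statement — presumably to something like: for every subfamily, every "union of at most $d$ members" hypothesis propagates. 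Reconciling the bookkeeping so that the inductive hypothesis is exactly strong enough to feed the Mayer--Vietoris sequence at the next step, without becoming vacuous or circular, is \textbf{the main obstacle}; this is where the bound "$d$ of them" (rather than $d+1$) must be used, exploiting that $\mathbb{R}^d$ cannot be covered in a homologically nontrivial way by fewer pieces than its dimension dictates. I expect the cleanest writeup to go through the dual Mayer--Vietoris spectral sequence together with the covering-dimension input, with the elementary induction serving as a sanity check on small cases.
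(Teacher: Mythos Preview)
Your proposal correctly diagnoses the architecture of the argument but stops short of the key technical step.  Your ``alternative, more elementary route'' is exactly the paper's approach: induction on $m$ together with Mayer--Vietoris.  You correctly observe that the naive induction fails because the term $X \cup Y = (A_1 \cap \cdots \cap A_{m-1}) \cup A_m$ is a mixed object not covered by the stated hypothesis, and you correctly conclude that the inductive statement must be strengthened.  But you then label this ``the main obstacle'' and leave it unresolved.  That strengthening \emph{is} the proof.  The paper's device is to track all the ``cap--cup'' sets
\[
S_j \;=\; (A_1 \cap \cdots \cap A_{j-1}) \cap (A_j \cup \cdots \cup A_m), \qquad 1 \le j \le m,
\]
and prove inductively that every such $S_j$ is a homology cell.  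Two applications of Mayer--Vietoris (once to $A\cap B$ and $A\cap C$, once to $A$ and $B\cup C$, with $A = A_1\cap\cdots\cap A_{j-1}$, $B=A_j$, $C=A_{j+1}\cup\cdots\cup A_m$) yield the degree-shifting isomorphisms $H_k(S_j) \cong H_{k-1}(S_{j+1})$ and $H_{k+1}(S_{j-1}) \cong H_k(S_j)$, hence a chain
\[
H_k\!\left(\textstyle\bigcap A\right) = H_k(S_m) \cong H_{k+1}(S_{m-1}) \cong \cdots \cong H_{k+m-1}\!\left(\textstyle\bigcup A\right).
\]
For $m \le d$ the right-hand side vanishes by the union hypothesis; for $m=d+1$ it vanishes because open subsets of $\mathbb{R}^d$ have no homology in degree $\ge d$; larger $m$ then follows by induction.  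This chain is, morally, the ``dual Mayer--Vietoris spectral sequence'' you gesture at in your first approach, but that sequence is not an off-the-shelf tool you can cite --- constructing it and verifying the claimed convergence $E_1^{p,q} \Rightarrow \widetilde H_{q-p}(\bigcap A_i)$ is essentially the work done by the cap--cup filtration.  So as written, neither route in your proposal is complete: the spectral-sequence route asserts the existence of a nonstandard object without building it, and the elementary route identifies the obstacle without supplying the strengthened inductive invariant that overcomes it.
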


\begin{proof}
Assume that $\bigcap A$ is not empty. We want to show that it is a homology cell.
We prove this statement by induction on the number $m$ of subsets. In fact, we will prove the following stronger statement: for any $1\leq m$, the subset $S_j=(A_1\cap\ldots\cap A_{j-1})\cap(A_j\cup\ldots\cup A_m)$ is a homology cell. We call such a subset a \emph{cap-cup set} For $m=1$ the statement is trivial. For $m=2$ consider the Mayer-Vietoris sequence
\[H_k(A_1\cup A_2)\rightarrow H_{k-1}(A_1\cap A_2)\rightarrow H_{k-1}(A_1)\oplus H_{k-1}(A_2).\]
As $A_1$, $A_2$, as well as $A_1\cup A_2$ are homology cells by assumption, it follows that $A_1\cap A_2$ is also a homology cell.

For the general case of a subset $S_j=(A_1\cap\ldots\cap A_{j-1})\cap(A_j\cup\ldots\cup A_m)$, write $A:=(A_1\cap\ldots\cap A_{j-1})$, $B:=A_j$ and $C:=(A_{j+1}\cup\ldots\cup A_m)$. Note that $S_j=A\cap(B\cup C)=(A\cap B)\cup(A\cap C)$. Further note that $A\cap B\cap C=S_{j+1}$ and $A\cup B\cup C=S_{j-1}$.

Consider first the Mayer-Vietoris sequence for the two sets $A\cap B$ and $A\cap C$:
\[H_k(A\cap B)\oplus H_k(A\cap C)\rightarrow H_k((A\cap B)\cup(A\cap C))\rightarrow H_{k-1}(A\cap B\cap C)\rightarrow H_{k-1}(A\cap B)\oplus H_{k-1}(A\cap C).\]
Assuming that $A\cap B$ and $A\cap C$, both of which are cap-cup sets, are homology cells, it would follow that $H_k(S_j)\simeq H_{k-1}(S_{j+1})$.
Consider now the Mayer-Vietoris sequence for the sets $A$ and $B\cup C$:
\[H_{k+1}(A)\oplus H_{k+1}(B\cup C)\rightarrow H_{k+1}(A\cup B\cup C)\rightarrow H_k(A\cap(B\cup C))\rightarrow H_{k}(A)\oplus H_{k}(B\cup C).\]
Assuming that $A$ and $B\cup C$, both of which are cap-cup sets, are homology cells, it would follow that $H_{k+1}(S_{j-1})\simeq H_{k}(S_{j})$.
In particular, assuming that all cap-cup sets defined by fewer than $m$ sets are homology cells, we get a chain of isomorphisms
\[H_k\left(\bigcap A\right)=H_k(S_m)\simeq H_{k+1}(S_{m-1})\simeq\ldots\simeq H_{k+j}(S_{m-j})\simeq\ldots\simeq H_{k+m-1}\left(\bigcup A\right).\]
As the union of any $d$ sets is a homology cell, for $m\leq d$ we get the claimed result by induction.
Consider now the case $m=d+1$. As we consider subsets of $\mathbb{R}^d$, we have $H_j(\bigcup A)=0$ for $j\geq d$.
From the chain of isomorphisms we get that $H_k(\bigcap A)\simeq H_{k+d}(\bigcup A)$, and thus $H_k(\bigcap A)=0$ for $k\geq 0$.
By assumption $\bigcap A$ is not empty, so this implies that $\bigcap A$ has the homology of a point, that is, $H_k(\bigcap A)=0$ for all $k\in\mathbb{Z}$. In particular, $\bigcap A$ is a homology cell and thus, by the chain of isomorphisms, so are all cap-cup sets defined by $d+1$ subsets.
The statement for $m>d+1$ now again follows inductively.
\end{proof}

We can apply this result to our setting.

\begin{lemma}\label{lem:contractible}
Let $\ell_1,\ldots,\ell_m$ be directions in $d$-wise general position in $\mathbb{R}^d$, let $A$ be a hyperplane arrangement and let $R(k):=\bigcap_{i=i}^m R_A(k,\ell)$. If $R(k)\neq\emptyset$ then $R(k)$ is contractible.
\end{lemma}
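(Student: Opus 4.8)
The plan is to apply Lemma~\ref{lem:homology_cell} to the family $\{R_A(k,\ell_i)\}_{i=1}^m$ of open subsets of $\mathbb{R}^d$: it then suffices to prove that (a) each $R_A(k,\ell_i)$ is a homology cell, and (b) the union of any $d$ of them is a homology cell. As discussed in the paragraph preceding Lemma~\ref{lem:homology_cell}, for the tame spaces arising here a homology cell is automatically contractible, so this yields the statement. I will also assume throughout, at no cost, that no hyperplane of $A$ is parallel to any of the finitely many directions $\ell_i$: a hyperplane parallel to $\ell_i$ is counted by \emph{every} open ray in direction $\ell_i$, so it merely shifts the threshold $k$ by a constant. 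Finally, the hypothesis $R(k)\neq\emptyset$ guarantees that each $R_A(k,\ell_i)$ is non-empty, and that every $d$-fold intersection among them is non-empty.

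For (a), fix $\ell=\ell_i$ and, for each hyperplane $h\in A$, let $H_h^{\ell}$ be the open half-space bounded by $h$ consisting of the points from which the open ray in direction $\ell$ meets $h$; this is a genuine open half-space because $h$ is not parallel to $\ell$. Then $\ell(q)=\#\{h\in A: q\in H_h^{\ell}\}$, so
\[
R_A(k,\ell)=\bigcup_{S\subseteq A,\ |S|=k}\ \bigcap_{h\in S}H_h^{\ell}.
\]
Writing each $H_h^{\ell}$ by an inequality whose normal has positive inner product with $\ell$, any point sufficiently far in the direction $-\ell$ lies in all of the $H_h^{\ell}$ at once; fix such a point $p$. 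Then $p$ lies in every convex set $\bigcap_{h\in S}H_h^{\ell}$, so for any $q\in R_A(k,\ell)$ the segment $[p,q]$ lies in whichever of these convex sets contains $q$, hence in $R_A(k,\ell)$. Thus $R_A(k,\ell)$ is star-shaped with respect to $p$, hence contractible.

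Part (b) is the main obstacle. Let $U=R_A(k,\ell_{r_1})\cup\dots\cup R_A(k,\ell_{r_d})$ for some $d$ of the directions; by $d$-wise general position these are linearly independent, so after a linear automorphism of $\mathbb{R}^d$ (which changes neither the homeomorphism type of $U$ nor the nature of the construction) I may assume $\ell_{r_j}=e_j$. By the description in (a), $R_A(k,e_j)$ is then the open region $\{x_j<g_j(\hat x_j)\}$ below the graph of the continuous piecewise-linear function $g_j$ whose value at $\hat x_j$ is the $k$-th largest among the $x_j$-coordinates of the hyperplanes not parallel to $e_j$ above that point, so $U$ is a union of $d$ such open subgraph regions. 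Since $U$ is open in $\mathbb{R}^d$ we automatically have $\widetilde H_i(U)=0$ for $i\geq d$, and by the tameness remark it is enough to kill $\widetilde H_i(U)$ for $0\le i\le d-1$. Two routes present themselves. The first is to cover $U$ by the convex pieces $\bigcap_{h\in S}H_h^{e_j}$ ($j\in[d]$, $|S|=k$): every finite intersection of these is again an intersection of half-spaces, hence convex and contractible when non-empty, so this is a good cover and $U$ is homotopy equivalent to its nerve; one then has to show this nerve has no low-dimensional homology by analysing which subfamilies of the $H_h^{e_j}$ have a common point — the obvious obstruction is that $H_h^{e_j}$ and $H_h^{e_{j'}}$ are opposite (hence disjoint) half-spaces for one hyperplane $h$ and $j\neq j'$, but infeasibility may also be global and, by Helly's theorem in $\mathbb{R}^d$, is detected on subfamilies of size at most $d+1$. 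The second route is to construct an explicit deformation of $U$, sweeping each point along a vector field that simultaneously increases all $d$ of the depths $\ell_{r_1}(\cdot),\dots,\ell_{r_d}(\cdot)$ and hence pushes $U$ into the common deep region; the existence of such a field — crossing each hyperplane only in the direction that deepens all $d$ counts at once — again relies on $e_1,\dots,e_d$ spanning $\mathbb{R}^d$, which is exactly where $d$-wise general position is indispensable. Either way, the genuine content is ruling out the low-dimensional homology of the union in (b); granting that, Lemma~\ref{lem:homology_cell} delivers that $R(k)=\bigcap_{i=1}^m R_A(k,\ell_i)$ is a homology cell, hence contractible, whenever it is non-empty.
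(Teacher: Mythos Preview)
Your reduction via Lemma~\ref{lem:homology_cell} and your argument for part~(a) are correct and follow exactly the paper's strategy. The gap is part~(b): you correctly identify the union $U$ of $d$ of the regions as ``the main obstacle'', but you do not prove it is a homology cell. Both routes you sketch (nerve of the convex cover; a depth-increasing vector field) are left as outlines, and your closing phrase ``granting that'' concedes that the proof is incomplete precisely where the content lies.

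The paper's argument for~(b) is a one-liner that is close to your second route but asks for less. Since $\ell_{r_1},\dots,\ell_{r_d}$ are linearly independent, the cone $C$ they span has non-empty interior; pick any direction $\ell_0\in -C$. The paper asserts that translating any point of $U$ in direction $\ell_0$ keeps it in $U$, so $U$ deformation-retracts along $\ell_0$ and is contractible. The point you miss is that one does \emph{not} need a direction (or vector field) that increases all $d$ depth counts simultaneously --- that is generally impossible, since crossing a single hyperplane $h$ can raise $\ell_{r_j}(\cdot)$ for some $j$ while lowering it for others. What the paper uses is only that the \emph{maximum} of the $d$ counts does not drop below $k$ along the $\ell_0$-ray, i.e.\ that one stays in the union. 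Because you were looking for the stronger monotone invariant, (b) looked harder to you than the paper treats it. (The paper's justification of this invariance is admittedly terse, and making it fully rigorous --- in particular handling the instants where the moving point lies on a hyperplane, and using the hypothesis $R(k)\neq\emptyset$ --- requires some care; but the organizing idea is the single direction $\ell_0\in -C$, not a nerve computation.)
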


In particular, the depth regions, that is, the intersections of $R_A(k,\ell)$ over all considered directions $\ell$ is contractible.

\begin{proof}
As in our setting homology cells are contractible, by Lemma \ref{lem:homology_cell} it suffices to show that the union of any $d$ regions $R_A(k,\ell_1),\ldots ,R_A(k,\ell_d)$ is contractible. Denote this union by $U(k)$ and let $n=|A|$. Let $C\subsetneq \mathbb{R}^d$ be the cone spanned by the $d$ directions and let $\ell_0$ be a direction in $-C$. In particular, moving from any point in $U(k)$ in direction $\ell_0$ we never leave $U(k)$. Thus, $U(k)$ is contractible.
\end{proof}

The final property that we need in order to apply the topological Helly theorem is that the intersection of any $d+1$ or fewer regions is non-empty.

\begin{lemma}\label{lem:intersections}
Let $\ell_1,\ldots,\ell_m$ be directions in $\mathbb{R}^d$, $m\leq d+1$, let $A$ be a hyperplane arrangement and let $R:=\bigcap_{i=i}^m R_A(k,\ell)$ for $k\leq\lceil\frac{|A|-d}{d+1}\rceil$. Then $R\neq\emptyset$.
\end{lemma}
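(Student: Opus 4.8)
The plan is to rephrase the lemma as a statement about separating a point from the directions $\ell_i$ by the hyperplanes of $A$, dispose of the easy case by a Borsuk--Ulam-type argument, and attack the critical case $m=d+1$ via a balancing argument that is pinned down exactly by the bound $\lceil\frac{|A|-d}{d+1}\rceil$.

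\textbf{Reformulation and reductions.} Write $n=|A|$, orient each $h=\{x:\langle n_h,x\rangle=c_h\}\in A$ by a choice of normal $n_h$, and observe that the open ray from $q$ in direction $\ell_i$ crosses $h$ exactly when $q$ and $\ell_i$ lie on opposite sides of $h$, i.e.\ when $\operatorname{sign}(\langle n_h,q\rangle-c_h)\neq\operatorname{sign}\langle n_h,\ell_i\rangle$. Thus $q\in R_A(k,\ell_i)$ iff at least $k$ hyperplanes of $A$ ``separate'' $q$ from $\ell_i$ in this sense. I would first reduce to the case that the $\ell_i$ are generic (none parallel to a hyperplane of $A$, any $d$ of them linearly independent) --- this is exactly the situation in the application, and in general it follows by perturbing the directions and a limiting argument since the regions are closed unions of faces --- and to the case $m=d+1$, since padding fewer directions by extra generic ones only shrinks the intersection.

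\textbf{The case $m\le d$.} For $w\in S^{d-1}$, letting $q$ tend to infinity along $w$ turns ``$h$ separates $q$ from $\ell_i$'' into ``$n_h^{\perp}$ separates $w$ from $\ell_i$''; write $N_i(w)$ for the number of such $h$. Since every $h$ separates exactly one of $w,-w$ from $\ell_i$, we have the identity $N_i(w)+N_i(-w)=n$, and since $2(k-1)<n$ whenever $k\le\lceil\frac{n-d}{d+1}\rceil$, the set $B_i=\{w:N_i(w)\le k-1\}$ contains no pair of antipodal points. If $B_1,\dots,B_m$ covered $S^{d-1}$ with $m\le d$ antipode-free sets, this would contradict the Lusternik--Schnirelmann--Borsuk theorem (applied to a suitable closed refinement of the $B_i$, using that $A$ and the $\ell_i$ are generic); so some $w$ avoids all $B_i$, and then a sufficiently far point $tw$ lies in $\bigcap_iR_A(k,\ell_i)$.

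\textbf{The case $m=d+1$.} Now $d+1$ antipode-free sets can cover $S^{d-1}$, so points at infinity need not suffice --- a suitable arrangement of $n$ nearly-parallel hyperplanes, together with $d+1$ directions whose components along the common normal have mixed signs, forces the deep point to be a genuine cell of $A$ --- and the bound must enter essentially. The combinatorial core is a balancing argument: processing the hyperplanes one at a time, each $h$, through its sign pattern $(\operatorname{sign}\langle n_h,\ell_i\rangle)_{i\in[d+1]}$, lets us (by prescribing on which side of $h$ the target point should lie) increase the $d+1$ separation counts $N_i$ for one of two complementary index sets; always choosing the set containing the currently smallest count yields, after all $n$ steps, $\min_iN_i\ge\lfloor\frac{n}{d+1}\rfloor=\lceil\frac{n-d}{d+1}\rceil\ge k$. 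Moreover the patterns are constrained: if $\sum_ic_i\ell_i=0$ is the linear dependence of the $d+1$ generic directions, then $\sum_ic_i\langle n_h,\ell_i\rangle=0$ forces the pattern of each $h$ to differ from both $(\operatorname{sign}c_i)_i$ and its antipode.

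\textbf{Main obstacle.} The delicate point is converting this abstract balancing into a statement about \emph{cells} of $A$, since an arbitrary sign vector need not be realized by a cell. I would do this by running the greedy balancing along a path through adjacent cells of $A$ --- starting from the cell at infinity in direction $-\ell_1$ (where $N_1=n$) and crossing one hyperplane at a time toward the currently deficient directions --- and showing that the running minimum of the $N_i$ never has to drop below $k$; equivalently, one can try to rule out the extremal covering of $S^{d-1}$ permitted by bare Lusternik--Schnirelmann--Borsuk by a $\mathbb{Z}_2$-equivariant (degree) argument that exploits the two forbidden sign patterns. This realization step, where the full strength of $k\le\lceil\frac{n-d}{d+1}\rceil$ is needed, is where the real work lies; the general-position reductions and the case $m\le d$ are routine.
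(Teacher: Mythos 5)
The gap is in the case $m=d+1$, which is the only case where the hypothesis $k\leq\lceil\frac{|A|-d}{d+1}\rceil$ is actually needed, and it is exactly the part you leave unexecuted. Your greedy balancing argument operates on abstract sign vectors: prescribing, hyperplane by hyperplane, on which side the target point should lie produces a sign pattern in $\{+,-\}^{|A|}$, but only $O(|A|^d)$ of the $2^{|A|}$ sign patterns are realized by cells of the arrangement, so the balanced pattern need not correspond to any point of $\mathbb{R}^d$. You acknowledge this (``the delicate point is converting this abstract balancing into a statement about cells''), but the two proposed fixes --- walking through adjacent cells while controlling the running minimum, or a $\mathbb{Z}_2$-equivariant degree argument ruling out the extremal covering --- are stated as intentions (``I would do this by\ldots'', ``one can try to\ldots''), not carried out, and it is not clear either of them works as described; moreover the auxiliary claim that the greedy process alone guarantees $\min_i N_i\geq\lfloor |A|/(d+1)\rfloor$ is also not proved. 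So as written the proposal only establishes the case $m\leq d$ (where your Lusternik--Schnirelmann--Borsuk covering argument is correct but heavier than needed: since any $d$ of the directions are linearly independent, the corresponding $\leq d$ half-spaces automatically have a common point).

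For comparison, the paper handles the critical case with a short incidence count rather than any balancing or equivariant topology. For each $\ell_i$ one takes a half-space bounded by a hyperplane orthogonal to $\ell_i$ so that translating far enough in the appropriate direction lands inside $R_A(k,\ell_i)$; if these half-spaces meet, one translates into the common intersection and is done. If they do not meet (possible only for $m=d+1$ by the $d$-wise independence), one obtains a single witness point $q$ from which the $d+1$ closed rays in directions $-\ell_i$ each meet strictly more than $|A|-k$ hyperplanes. Since a hyperplane not through $q$ can meet at most $d$ of these rays and, by general position, at most $d$ hyperplanes pass through $q$, counting ray--hyperplane incidences gives $(|A|-k)(d+1)<(|A|-d)d+d(d+1)=(|A|+1)d$, hence $k>\lceil\frac{|A|-d}{d+1}\rceil$, a contradiction. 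This single double-counting step is what replaces the entire unresolved ``realization'' portion of your plan; without it (or a completed substitute), the proposal does not prove the lemma.
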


\begin{proof}
For every direction $\ell_i$ let $h_i$ be a hyperplane orthogonal to $\ell_i$ which bounds a half-space $h_i^+$ that contains $R_A(k,\ell_i)$. In paticular, for any point $p$ in $h_i^+$, moving $p$ in direction $\ell_i$, we eventually enter $R_A(k,\ell_i)$ and never leave it again.
Thus, if all these half-spaces have a common intersection, then this intersection can be translated to lie in $R$, showing $R\neq\emptyset$.
So, assume that these half-spaces have an empty intersection. As we assumed that any $d$ of our directions are linearly independent, this can only happen for $m=d+1$. In this case, we find a point $q\in \mathbb{R}^d$ such that the $d+1$ (closed) rays emanating from $q$ with directions $-\ell_i$ all intersect strictly more than $|A|-k$ hyperplanes of $A$. Each hyperplane that does not contain $q$ can intersect at most $d$ of the rays, and by the general position assumption, at most $d$ hyperplanes contain $q$. Thus, if $x$ denotes the number of intersections between rays and hyperplanes, we have
\[(|A|-k)(d+1)<x\leq (|A|-d)d+d(d+1)=(|A|+1)d.\]
Rearranging this and using that all numbers are integers gives $k>\lceil\frac{|A|-d}{d+1}\rceil$, which is a contradiction to the assumption, showing that $R\neq\emptyset$.
\end{proof}

Now we have all the ingredients that are necessary for the topological Helly theorem, and we deduce the following

\begin{corollary}
For every hyperplane arrangement $A$ in $\mathbb{R}^d$ there is a point $q\in\mathbb{R}^d$ for which the open regression depth is $\text{RD'}(A,q)\geq \lceil\frac{|A|-d}{d+1}\rceil$.
\end{corollary}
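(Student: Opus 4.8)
The plan is to assemble the three preceding lemmas into an application of the topological Helly theorem of \cite{TopoHelly}. Fix $k=\lceil\frac{|A|-d}{d+1}\rceil$. Since $A$ is finite, the open ray emanating from a point $q$ in a fixed direction $\ell$ intersects a number of hyperplanes of $A$ that depends only on the face of $A$ (after the deterministic perturbation) containing $q$, and changes only finitely often as $\ell$ varies; so there are finitely many distinct regions $R_A(k,\ell)$, and we may fix a finite set of representative directions $\ell_1,\dots,\ell_m$, chosen to be $d$-wise linearly independent. Each $R_A(k,\ell_i)$ is an open subset of $\mathbb{R}^d$ (it is a union of open cells of the perturbed arrangement, i.e.\ the open regression depth regions). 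The goal is to produce a point in $\bigcap_{i=1}^m R_A(k,\ell_i)$, since such a point has $\ell_i(q)\ge k$ for every direction $\ell_i$, hence for every direction, hence $\text{RD'}(A,q)\ge k$.

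First I would check the hypotheses of the topological Helly theorem for the family $\{R_A(k,\ell_i)\}_{i=1}^m$: (a) the intersection of any $d+1$ or fewer of the $R_A(k,\ell_i)$ is non-empty, and (b) the intersection of any $d+1$ or fewer of them is contractible. Hypothesis (a) is exactly Lemma \ref{lem:intersections} (any subcollection of at most $d+1$ of the directions is itself a collection of directions to which the lemma applies, with the same bound $k\le\lceil\frac{|A|-d}{d+1}\rceil$). Hypothesis (b) is exactly Lemma \ref{lem:contractible}, whose proof via Lemma \ref{lem:homology_cell} applies verbatim to any subcollection, since a subcollection of $d$-wise generic directions is still $d$-wise generic; note also that when the intersection $R$ in question is empty for a given subcollection, there is nothing to check for that subcollection. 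With (a) and (b) in hand, the topological Helly theorem gives a point $q\in\bigcap_{i=1}^m R_A(k,\ell_i)$.

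Then I would translate this into the depth statement: by the choice of the representative directions, $q\in\bigcap_{i=1}^m R_A(k,\ell_i)$ means every open ray from $q$ crosses at least $k$ hyperplanes of $A$, and since rays parallel to a hyperplane are handled the same way (a parallel ray is a limit of crossing rays, or one includes parallelism in the count by continuity of $\ell\mapsto\ell(q)$ on the relevant face), we get $\text{RD'}(A,q)\ge k=\lceil\frac{|A|-d}{d+1}\rceil$.

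The main obstacle is purely bookkeeping rather than conceptual: one must be careful that ``finitely many representative directions'' genuinely captures all directions, i.e.\ that if $\ell(q)\ge k$ for the chosen $\ell_1,\dots,\ell_m$ then $\ell(q)\ge k$ for every $\ell$. This follows because, for a fixed point $q$ lying in a fixed (open or lower-dimensional) face of the perturbed arrangement, the set of directions $\ell$ with $\ell(q)<k$ is open: crossing numbers only drop under small perturbations of $\ell$ (a ray can cease to cross a hyperplane but a generic small perturbation cannot create a new crossing with a hyperplane not already met), so the minimum over all directions is attained on a suitable finite subset of directions, and one simply includes those in $\{\ell_1,\dots,\ell_m\}$ while preserving $d$-wise genericity by an infinitesimal further perturbation (which does not change the integer-valued crossing counts). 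Given Lemmas \ref{lem:homology_cell}, \ref{lem:contractible}, and \ref{lem:intersections}, the rest is immediate.
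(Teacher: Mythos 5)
Your proposal is correct and follows exactly the paper's route: the corollary is deduced by applying the topological Helly theorem to the finitely many regions $R_A(k,\ell_i)$, with non-emptiness of intersections of at most $d+1$ of them supplied by Lemma \ref{lem:intersections} and contractibility supplied by Lemma \ref{lem:contractible} (via Lemma \ref{lem:homology_cell}). The additional bookkeeping you supply about reducing to finitely many $d$-wise independent representative directions is precisely the reduction the paper itself invokes before stating these lemmas, so nothing is genuinely different.
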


In particular, by the definition of open regression depth, if the arrangement $A$ is in general position, such a point can be chosen in a cell of $A$.
It remains to show that we can get even deeper points for regression depth.

\begin{lemma}\label{lem:open_to_regression}
For every hyperplane arrangement $A$ in $\mathbb{R}^d$ there is a point $q\in\mathbb{R}^d$ for which the regression depth is $\text{RD}(A,q)\geq \lfloor\frac{|A|}{d+1}\rfloor+1$.
\end{lemma}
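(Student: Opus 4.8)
The plan is to bootstrap the deep-cell statement just proved—for an arrangement in general position there is a cell of regression depth $\lceil(|A|-d)/(d+1)\rceil = \lfloor|A|/(d+1)\rfloor$—using the elementary identity
\[\text{RD}(A,q)=|J|+\text{RD}(A\setminus J,q),\]
where $J$ is the set of hyperplanes of $A$ passing through $q$. This holds because every ray emanating from $q$ has its endpoint on each hyperplane of $J$ and hence meets all of them. So the extra unit of depth should come from exhibiting a query point that lies on one or two hyperplanes of $A$ while still being deep with respect to the remaining hyperplanes.

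Set $m:=\lfloor|A|/(d+1)\rfloor$. First I would reduce to the case $A$ in general position; this step deserves care, since passing to a general-position perturbation is not innocuous (it can merge or create incidences). For such $A$ I would then split on the residue of $|A|$ modulo $d+1$. If $d+1\nmid|A|$: applied to $A\setminus\{h\}$, the deep-cell statement gives a cell of depth at least $\lceil(|A|-1-d)/(d+1)\rceil=m$; if $h$ can be chosen so that it passes through the interior of such a cell $c'$, then any $q$ in the relative interior of $c'\cap h$ lies on exactly the one hyperplane $h$, and the identity yields $\text{RD}(A,q)=1+\text{RD}(A\setminus\{h\},q)\ge 1+m$. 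If $d+1\mid|A|$, deleting one hyperplane only guarantees cells of depth $m-1$ in $A\setminus\{h\}$, so I would instead delete two hyperplanes $h_1,h_2$: now $A\setminus\{h_1,h_2\}$ has a cell of depth at least $\lceil(|A|-2-d)/(d+1)\rceil=m-1$, and if the flat $h_1\cap h_2$ can be made to meet the interior of such a cell, a point $q$ there lies on exactly $h_1$ and $h_2$ and we get $\text{RD}(A,q)=2+\text{RD}(A\setminus\{h_1,h_2\},q)\ge 2+(m-1)=m+1$.

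The step I expect to be the main obstacle is the piercing claim in each case: that the hyperplane(s) to be deleted can be chosen so that the deep region of the residual arrangement is genuinely cut through, so that the point produced lies on exactly one (respectively two) hyperplanes of $A$ and inherits the full residual depth. My approach would be to consider the family of deep regions $R_{\ge m}(A\setminus\{h\})$ (respectively of the flats $h_1\cap h_2$ against deep cells of two-hyperplane deletions) as the deleted hyperplanes range over $A$: these regions are nonempty, bounded, and organized around the deep region of $A$ itself, and the boundary of each consists of pieces of the hyperplanes of $A$; a counting argument on how the deleted hyperplanes bound a deepest cell of $A$ should force one of them to cut through a depth-$m$ region of the corresponding deletion—or else one finds a depth-$(m+1)$ cell outright and is already done. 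Finally I would note that the scheme genuinely uses $d\ge2$: for $d=1$ the claimed bound $\lfloor|A|/2\rfloor+1$ need not hold (it already fails for two distinct points on a line), so the one-dimensional situation must be excluded or handled separately.
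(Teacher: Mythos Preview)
Your proposal has a genuine gap: the ``piercing claim'' is where all the work lies, and you have not given an argument for it---only a gesture toward ``a counting argument on how the deleted hyperplanes bound a deepest cell of $A$.'' A priori, when you delete $h$, the depth-$m$ region of $A\setminus\{h\}$ could sit entirely on one side of $h$. Note that removing a hyperplane can only \emph{decrease} open regression depth, so the depth-$m$ region of $A$ is contained in that of $A\setminus\{h\}$; but this inclusion goes the wrong way for your purposes, since you need $h$ to meet the larger region, and knowing $h$ bounds the smaller one gives you nothing. The two-hyperplane case has the same difficulty, compounded by the requirement that the codimension-two flat $h_1\cap h_2$ pierce a particular region. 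Until the piercing claim is made precise and proved, the proposal is a plan rather than a proof.

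The paper takes a different route. Rather than deleting hyperplanes and invoking the deep-cell result on the residual arrangement, it stays with $A$ and examines the boundary $\partial C$ of a cell $C$ of maximum open regression depth $k$. Either some point of $\partial C$ still has open regression depth $k$ (and then regression depth $\ge k+1$, done), or not---in which case one finds $d+1$ directions from an interior point each crossing exactly $k$ hyperplanes, and a ray-incidence count on the \emph{opposite} rays forces $k>\lceil|A|/(d+1)\rceil$. The divisible case is then handled by a further structural step: the hyperplanes partition into $d+1$ blocks according to which opposite ray they miss, and a suitable vertex of $\partial C$ (lying on $d$ hyperplanes from $d$ distinct blocks) is shown to have regression depth $\ge k+d-1$, a contradiction. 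So the paper's counting is over rays from a single point, not over choices of deleted hyperplanes.

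Your remark about $d=1$ is well taken: with two points on a line the maximum regression depth is $1$, not $\lfloor 2/2\rfloor+1=2$. The paper's own argument shares this defect---in the final step the vertex has regression depth $\ge k+d-1=k$ when $d=1$, which is no contradiction---so the lemma as stated should really carry the hypothesis $d\ge 2$.
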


\begin{proof}
Consider a point $q$ in a cell $C$ of maximum open regression depth $k$, and let $\partial C$ be the boundary of the cell $C$. If there is a point on $\partial C$ with open regression depth $k$, then this point has regression depth $k+1$, and the claim follows. So assume that the open regression depth is strictly smaller everywhere on $\partial C$. Then we again find $d+1$ directions such that the rays emanating from $q$ with these directions intersect exactly $k$ hyperplanes. Looking at the opposite directions, the rays thus intersect exactly $|A|-k$ hyperplanes, and as $q$ lies in the interior of a cell every hyperplane intersects at most $d$ rays. Analogous to the proof of Lemma \ref{lem:intersections} we thus get $k>\lceil\frac{|A|}{d+1}\rceil$. This proves the claim for all cases where $d+1$ does not divide $|A|$. If $d+1$ divides $|A|$, note that as soon as one of the hyperplanes only intersects $d-1$ of the considered rays, then we get $k>\lceil\frac{|A|+1}{d+1}\rceil$, and the claim follows again. So, assume that each hyperplane intersects exactly $d$ rays. This gives a partition of the set of hyperplanes into $d+1$ parts, each of size $\frac{|A|}{d+1}$ defined by the ray they do not intersects. The boundary $\partial C$ inherits this partition, and each of the parts is contractible. In particular, $\partial C$ contains a vertex $q$ that is the intersection of $d$ hyperplanes of $d$ different parts. Now every ray emanating from $q$ must intersect all hyperplanes of some part, but also lies on at least $d-1$ other hyperplanes, showing that the regression depth of $q$ is at least $k+d-1$, which is a contradiction to the assumption that the open regression depth is strictly smaller everywhere on $\partial C$.
\end{proof}

Using the above insights, we can also conclude the contractability of many regions of regression depth.

\begin{lemma}\label{lem:centerpoint_region_contractible}
Let $k\leq\lceil\frac{|A|}{d+1}\rceil$. Then the region $R$ of points $p$ whose regression depth is $\text{RD}(A,q)\geq k$ is contractible.
\end{lemma}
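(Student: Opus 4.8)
The plan is to reduce the contractibility of the regression-depth region $R$ to the contractibility of the corresponding open-regression-depth region, which we have essentially already established via Lemmas \ref{lem:homology_cell}, \ref{lem:contractible} and the argument in Lemma \ref{lem:open_to_regression}. Recall from Section \ref{sec:Helly} that regression depth is, for each fixed direction $\ell$, governed by the superlevel sets $R_A(k,\ell)$, and that $\text{RD}(A,q)\ge k$ exactly when $q$ lies in the intersection of the closed versions of these sets over all (finitely many) relevant directions, where now the ray is taken to be closed (so that hyperplanes through $q$ are counted). First I would record the closed analogues of Lemmas \ref{lem:contractible} and \ref{lem:intersections}: for each direction $\ell_i$, the closed superlevel set $\overline{R}_A(k,\ell_i)$ is still ``upward closed'' along any direction in the dual cone, so the union of any $d$ of them is contractible by the same cone-retraction argument, and the Mayer--Vietoris induction of Lemma \ref{lem:homology_cell} goes through verbatim to show that the intersection $R=\bigcap_i \overline{R}_A(k,\ell_i)$ is a homology cell whenever it is non-empty.

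The one genuinely new point — and the step I expect to be the main obstacle — is non-emptiness of $R$ in the closed setting for the full range $k\le\lceil\frac{|A|}{d+1}\rceil$, since the open version of Lemma \ref{lem:intersections} only gave $k\le\lceil\frac{|A|-d}{d+1}\rceil$. Here I would lean directly on Lemma \ref{lem:open_to_regression} and, more importantly, on the contractibility argument hidden inside its proof. Concretely: suppose $R=\emptyset$. Then, as in Lemma \ref{lem:intersections}, the orthogonal half-spaces $h_i^+$ have empty common intersection, which forces $m=d+1$ and produces a point $q$ whose $d+1$ opposite closed rays each meet strictly more than $|A|-k$ hyperplanes. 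Counting incidences with the refinement used in Lemma \ref{lem:open_to_regression} — every hyperplane not through $q$ meets at most $d$ of the rays, at most $d$ hyperplanes pass through $q$, and when $d+1\mid|A|$ one uses the partition-into-$d+1$-parts trick together with the contractibility of each part on $\partial C$ to extract a vertex forcing an even larger count — yields $k>\lceil\frac{|A|}{d+1}\rceil$, contradicting the hypothesis. Thus $R\neq\emptyset$.

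Having both non-emptiness and the homology-cell property, the reduced-homology-implies-contractible remark from Section \ref{sec:Helly} (Whitehead plus Hurewicz, using that any obstruction to contracting a loop is realized by a cycle among the faces of the arrangement) upgrades ``homology cell'' to ``contractible,'' completing the proof. The only care needed is that the perturbation/face bookkeeping used to define open regression depth is not required here: regression depth itself is combinatorial on the faces of $A$, so $R$ is a genuine subcomplex-like region and the tameness hypotheses of the algebraic-topology step are satisfied. I would write the proof as: (1) express $R$ as an intersection of the closed upward-closed sets $\overline{R}_A(k,\ell_i)$; (2) invoke Lemma \ref{lem:homology_cell} with the closed cone-retraction to get that $R$ is a homology cell or empty; (3) rule out emptiness by the incidence count above, borrowed from Lemma \ref{lem:open_to_regression}; (4) conclude contractibility.
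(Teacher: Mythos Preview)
Your approach is workable but genuinely different from the paper's. The paper does not rerun the Mayer--Vietoris machinery on closed superlevel sets at all; instead it reduces everything to the open-regression-depth analysis already in hand. Its argument splits into two cases. If some point has open regression depth $\ge k$, then the regression-depth region $R$ is simply the closure of the (non-empty) open-regression-depth region at level $k$, hence contractible by Lemma~\ref{lem:contractible}. If no point has open regression depth $k$, then by the analysis inside the proof of Lemma~\ref{lem:open_to_regression} the set $R$ consists of lower-dimensional faces incident only to cells of \emph{maximum} open regression depth; since no cell is completely surrounded by deeper faces, one deformation retracts each maximal-depth cell onto those faces, and contractibility of $R$ follows from contractibility of the maximal open-depth region. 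So the paper's proof is a short two-case geometric argument that piggybacks on the open theory, rather than a repeat of the homological computation.

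Two remarks on your route. First, the step you flag as the ``main obstacle'' (non-emptiness of $R$ for $k\le\lceil|A|/(d+1)\rceil$) is actually immediate: Lemma~\ref{lem:open_to_regression} already produces a point with regression depth $\ge\lfloor|A|/(d+1)\rfloor+1\ge\lceil|A|/(d+1)\rceil\ge k$, so $R\neq\emptyset$ with no further incidence count needed. Second, Lemma~\ref{lem:homology_cell} is stated and proved for \emph{open} subsets of $\mathbb{R}^d$, because the Mayer--Vietoris sequence in singular homology is not automatic for arbitrary closed sets. Your claim that the induction ``goes through verbatim'' for the $\overline{R}_A(k,\ell_i)$ therefore needs a sentence of justification: these sets (and all the cap--cup sets formed from them) are polyhedral, hence CW, so Mayer--Vietoris applies. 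Without that remark the transfer to closed sets is a gap; with it, your argument is a correct but longer alternative to the paper's proof.
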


\begin{proof}
If there is a point with open regression depth $k$, then $R$ is just the closure of the region of points with open regression depth at least $k$, which is contractible by Lemma \ref{lem:contractible}. Otherwise, by the proof of Lemma \ref{lem:open_to_regression}, $R$ is the union of faces incident only to cells of maximum open regression depth. As no cell is completely surrounded by deeper faces there is a contraction from a cell of maximum open regression depth to the deeper faces incident to it. Thus, as the region of maximum open regression depth is contractible, so is $R$.
\end{proof}

\section{A third lower bound: weighted arrangements}\label{sec:weighted}

In this section we give yet another proof for the existence of points with large regression depth. The proof we give here works for (and actually requires) the more general case of \emph{weighted arrangements} of hyperplanes. A weighted arrangement of hyperplanes is a tuple $(A,w)$ consisting of a finite arrangement $A$ of hyperplanes and a weight function $w:A\rightarrow\mathbb{R}_{\geq 0}$ which assigns to each hyperplane a weight. By a slight abuse of notation we will often just write $A$ for a weighted arrangement. For a subarrangement $A'\subseteq A$ we have $w'(h)\leq w(h)$, where $w'$ is the weight function on $A'$, and we write $w'(A'):=\sum_{h\in A'}w'(h)$. We say that $A'\subsetneq A$ is a strict subset of $A$ if the underlying hyperplane arrangement of $A'$ is a strict subset of that of $A$. The definition of regression depth extends to weighted arrangements: for any ray $r$ emanating from a query point $q$, let $A(r)$ be the hyperplanes intersected by $r$. Then, the regression depth $\text{RD}(A,q)$ of $q$ is the minimum of $w(A(r))$ taken over all rays emanating from $q$.
This definition is similar to, but more restrictive than a measure-theoretic generalization of regression depth considered by Mizera \cite{Mizera}.

Our proof also works for more general families of depth measures on weighted hyperplane arrangements. We extend the definition of super-additive depth measures above to weighted hyperplane arrangements as follows:

\begin{enumerate}
    \item[(i)] for all $A\in A^{\mathbb{R}^d}$ and $q\in\mathbb{R}^d$ and any hyperplane $h$ we have $|\rho(A,q)-\rho(A\cup\{h\},q)|\leq w(h)$,
    \item[(ii)] for all $A\in A^{\mathbb{R}^d}$ we have $\rho(A,q)=0$ if $q$ is in an unbounded cell of $A$,
    \item[(iii)] for all $A\in A^{\mathbb{R}^d}$ we have $\rho(A,q)\geq \min\{w(h)\mid h\in A\}$ if $q$ is in a bounded cell or if $q$ lies on a hyperplane of $A$,
    \item[(iv)] for any disjoint subsets $A_1,A_2\subseteq A$ and $q\in\mathbb{R}^d$ we have $\rho(A,q)\geq \rho(A_1,q)+\rho(A_2,q)$.
\end{enumerate}

Note that any hyperplane arrangement can be considered as a weighted hyperplane arrangement by assigning weight 1 to each hyperplane. On the other hand, each depth measure for hyperplane arrangement can be extended to a depth measure on weighted hyperplanes: using the fact that $\mathbb{Q}$ is dense in $\mathbb{R}$, we can place multiple hyperplanes in the same position and the normalize to get a weighted arrangement.

For a weighted arrangement of hyperplanes $A$ and a depth measure $\rho$ denote by $R_{\rho}^A(\alpha):=\{q\in\mathbb{R}^d\mid \rho(A,q)\geq \alpha\}$ the $\alpha$-\emph{depth region}. The \emph{median region}, which is the deepest non-empty depth region, is denoted by $M_{\rho}^A$.

\begin{theorem}\label{thm:weighted_centerpoint}
Let $A$ be a weighted arrangement of hyperplanes in $\mathbb{R}^d$ and let $\rho$ be a super-additive depth measure on weighted hyperplanes whose depth regions are compact and contractible. Then there exists a point $q\in\mathbb{R}^d$ for which $\rho(A,q)\geq \frac{w(A)}{d+1}$.
\end{theorem}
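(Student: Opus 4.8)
The plan is to argue by contradiction and extract a topological obstruction, in the spirit of Mizera's convex-analytic proof of the classical centerpoint theorem but with the convexity replaced by a vector-bundle argument that survives the passage to weights and to an arbitrary super-additive $\rho$. First, some reductions. Since $\rho$ is combinatorial it takes only finitely many values, and since its depth regions are compact, the nested family $R_{\rho}^{A}(\alpha)$ has a deepest non-empty member, the median region $M=M_{\rho}^{A}$; by hypothesis it is compact and contractible, and by axiom (ii) it — and every non-empty depth region — lies in a fixed closed ball $B$ chosen large enough that $\rho(A,\cdot)\equiv 0$ off $B$. It suffices to prove that the value $\alpha^{*}$ attained on $M$ is at least $\frac{w(A)}{d+1}$, so assume for contradiction that $\alpha^{*}<\frac{w(A)}{d+1}$, i.e.\ that no point of $\mathbb{R}^d$ is $\frac{w(A)}{d+1}$-deep.

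The heart of the proof is to build a continuous ``witness map'' $\sigma\colon B\to S^{d-1}$ that records, at each $q$, a direction certifying $\rho(A,q)<\frac{w(A)}{d+1}$. For a direction $u$ write $A^{+}_{u}(q)\subseteq A$ for the sub-arrangement crossed by the open ray from $q$ in direction $u$ and $A^{-}_{u}(q):=A\setminus A^{+}_{u}(q)$ for the rest; applying axiom (i) to the hyperplanes of $A^{+}_{u}(q)$ one at a time gives $\rho(A,q)\le w(A^{+}_{u}(q))+\rho(A^{-}_{u}(q),q)$, which is the mechanism behind Lemma~\ref{lem:upper_bound}, while super-additivity (iv) together with the lower-bound axiom (iii) controls how deep $q$ sits inside sub-arrangements. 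The hypothesis that \emph{all} depth regions of $\rho$ — including those of the sub-arrangements $A^{-}_{u}(q)$ — are compact and contractible is what is used to show that, when $\rho(A,q)<\frac{w(A)}{d+1}$, the set $D(q)\subseteq S^{d-1}$ of ``bad'' directions (those for which the peeled-off weight is $<\frac{w(A)}{d+1}$, witnessing the deficit) is non-empty, a proper subset of the sphere, and varies continuously with $q$; one then takes $\sigma(q)$ to be a canonical point of $D(q)$, e.g.\ its spherical barycenter. The $(d+1)$ in the conclusion is exactly the reason $D(q)$ is always a proper subset: for generic $u$ one has $w(A^{+}_{u}(q))+w(A^{+}_{-u}(q))=w(A)$, so $D(q)=S^{d-1}$ would force $w(A)<\frac{2w(A)}{d+1}$; equivalently, $d+1$ ``bad half-space-like regions'' cannot cover $\mathbb{R}^d$ without their weights summing to at least $w(A)$, the same Radon-type count driving Lemma~\ref{lem:centerpoint_region_contractible} and the proof of Theorem~\ref{thm:tverberg}.

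Granting such a map, the contradiction is immediate. For $q=Rv\in\partial B$ with $R$ large and $v\in S^{d-1}$, the hyperplanes crossed by the ray from $q$ in direction $u$ are asymptotically those in the halfspace $\{\langle u,\cdot\rangle>0\}$ relative to the (limiting) normals $v_{h}$, all of which lie in the closed hemisphere $\{\langle v,\cdot\rangle\le 0\}$; hence $D(Rv)$ is concentrated around $v$, its barycenter tends to $v$, and $\sigma|_{\partial B}\colon\partial B\to S^{d-1}$ is homotopic to the identity, in particular of degree $\pm1$ and not null-homotopic. But $\sigma$ is defined and continuous on the contractible ball $B$, so $\sigma|_{\partial B}$ must be null-homotopic — impossible. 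Equivalently, and this is the form that also handles continuous \emph{families} of weighted arrangements over a parameter space $T$ (where the relevant bundle is genuinely non-trivial), the bad-direction data assembles into a nowhere-zero section of a vector bundle — over $B$, over $\mathbb{R}\mathrm{P}^{d-1}$ after quotienting the antipodal ambiguity, or over the one-point compactification $S^{d}$ — whose (relative) Euler class is a generator, and no such section exists; so some point must be $\frac{w(A)}{d+1}$-deep.

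The step I expect to be the real obstacle is the construction of $\sigma$ and, above all, its continuity: on a \emph{finite} arrangement the quantity $w(A^{+}_{u}(q))$ jumps as $q$ crosses a hyperplane, so $D(q)$ is only piecewise-constant, and one must either smooth the arrangement by a perturbation — much as open regression depth is introduced in Section~\ref{sec:Helly} — or argue cellularly on the face complex, in both cases verifying that $D(q)$ never degenerates (never empty at an interior point, never all of $S^{d-1}$) and that the resulting section/bundle carries a non-zero relative Euler class. The weighted setting is genuinely needed and not a cosmetic generalization: the $\frac{w(A)}{d+1}$ bound emerges from the additive bookkeeping of the partial weights $w(A^{+}_{u}(q))$ in the covering count, and — unlike the classical unweighted point case, where minimizing a convex function and inspecting its subgradient suffices — the non-convexity of the $\rho$-depth regions forces the obstruction to be controlled at the level of vector bundles over the relevant parameter space.
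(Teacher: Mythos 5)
There is a genuine gap, and it is more than the continuity issue you flag yourself. Your witness map is built from the crossed weight $w(A^{+}_{u}(q))$, so a ``bad direction'' at $q$ certifies that the \emph{regression depth} of $q$ is below $\frac{w(A)}{d+1}$; it says nothing about $\rho(A,q)$. The peeling inequality from axiom (i) only gives $\rho(A,q)\le w(A^{+}_{u}(q))+\rho(A^{-}_{u}(q),q)$, i.e.\ an upper bound on $\rho$, so from $\rho(A,q)<\frac{w(A)}{d+1}$ you cannot conclude that $D(q)\neq\emptyset$: there may well be points where $\mathrm{RD}$ is large but $\rho$ is small (the only pointwise lower bound the axioms give is $\rho\ge\mathrm{HTvD}\ge\frac{1}{d}\mathrm{RD}$, which loses a factor $d$). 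Your assertion that compactness and contractibility of the depth regions of the sub-arrangements $A^{-}_{u}(q)$ forces $D(q)\neq\emptyset$ is exactly the missing step, and no mechanism for it is offered; as it stands the degree argument over $\partial B$ would at best reprove the centerpoint theorem for (weighted) regression depth, not Theorem~\ref{thm:weighted_centerpoint} for an arbitrary super-additive $\rho$. On top of that, the construction of $\sigma$ is not carried out: $D(q)$ need not be contained in a hemisphere or contractible, so a ``spherical barycenter'' is not canonically defined, and the piecewise-constant jumps you acknowledge are not resolved.

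For comparison, the paper does not argue pointwise with directions at all. It proceeds by induction on $|A|$: if $|A|\le d+1$, axiom (iii) applied to a hyperplane of maximum weight gives a point with $\rho\ge\frac{w(A)}{d+1}$; if $|A|\ge d+2$, Lemma~\ref{lem:partition} produces a partition of $A$ into two strict subarrangements $A_1,A_2$ whose median regions intersect, and super-additivity (iv) adds the two inductive bounds at a common point, giving $\rho(A,q)\ge\frac{w_1(A_1)+w_2(A_2)}{d+1}=\frac{w(A)}{d+1}$. The compact-contractible hypothesis is used precisely in Lemma~\ref{lem:partition}: the median regions $M(b)$ of the subarrangements $A(b)$, $b\in\partial\Delta$, form a compact contractible multisection of a bundle over $\partial\Delta/\pm\cong S^{|A|-2}$ modulo antipodality, and Lemma~\ref{lem:multisections} together with Borsuk--Ulam forces $M(b)\cap M(-b)\neq\emptyset$ for some $b$. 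So the topological obstruction lives over the space of partitions of $A$, not over $\mathbb{R}^d$ or $S^{d-1}$, and it is this combination with axioms (iii) and (iv) that makes the bound hold for every super-additive $\rho$, which is where your sketch breaks down.
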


Before we prove Theorem \ref{thm:weighted_centerpoint}, we give some lemmata that we will need in the proof.
The first lemma concerns a generalization of a section in a vector bundle. Let $\pi: E\rightarrow B$ be a real vector bundle over a compact manifold $B$. Following \cite{Zivaljevic} we say that $\phi:B\rightarrow E$ is a \emph{multisection} if for every $x\in B$ we have that $\phi(x)\subseteq F_x:=\pi^{-1}(x)$. We further say that $\phi$ is \emph{contractible} if it is contractible in each fiber, that is, for every $x\in B$ the set $\phi(x)$ is contractible. Finally, we say that $\phi$ is compact if $\Gamma(\phi):=\{(x,v)\mid v\in\phi(x)\}\subseteq B\times E$ is compact.
For any multisection $\phi$, denote by $Z(\phi)$ its intersection with the zero section.

\begin{lemma}\label{lem:multisections}
Let $\pi: E\rightarrow B$ be a real vector bundle over a compact manifold $B$. Let $\phi$ be a compact contractible multisection. Then there is a section $s$ with $Z(s)=Z(\phi)$. In particular, if $\pi$ has no nowhere zero section, then $\phi$ must intersect the zero section.
\end{lemma}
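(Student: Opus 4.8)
The plan is to reduce a compact contractible multisection to an honest section without changing the zero set, using a partition-of-unity argument. First I would choose a finite good cover $\{U_\alpha\}$ of the compact base $B$ by sets over which the bundle is trivial, say $\pi^{-1}(U_\alpha)\cong U_\alpha\times\mathds{R}^n$, together with a subordinate partition of unity $\{\lambda_\alpha\}$. Over a single trivializing chart $U_\alpha$, the multisection $\phi$ becomes a set-valued map $x\mapsto \phi_\alpha(x)\subseteq\mathds{R}^n$ with each $\phi_\alpha(x)$ contractible (in particular non-empty) and with $\Gamma(\phi)$ compact, hence $\phi_\alpha$ has compact graph. The key local step is to select, over each $U_\alpha$, a continuous map $s_\alpha:U_\alpha\to\mathds{R}^n$ with $s_\alpha(x)\in\phi_\alpha(x)$ for all $x$. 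This is a continuous selection problem: a set-valued map with compact contractible (hence, in a good cover, one can arrange them to be convex after a fiberwise deformation, or directly apply Michael's selection theorem since contractible compact sets in $\mathds{R}^n$ are, after thickening, $\mathrm{UV}^\infty$ and the map is lower/upper semicontinuous thanks to compactness of the graph) admits a continuous selection. Then I would glue: define $s(x):=\sum_\alpha \lambda_\alpha(x)\, s_\alpha(x)$, interpreting the sum fiberwise using the vector space structure of $F_x$. Because each $s_\alpha(x)$ lies in the fiber $F_x$ and the fibers are genuine vector spaces, this convex combination is a well-defined continuous section $s$ of $E$.

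The second point to check is that $Z(s)=Z(\phi)$. If $x\in Z(\phi)$, then $0\in\phi(x)$; I need to arrange the local selections so that $s_\alpha(x)=0$ whenever $0\in\phi_\alpha(x)$, after which $s(x)=\sum_\alpha\lambda_\alpha(x)\cdot 0=0$. Conversely, if $x\notin Z(\phi)$ then $\phi(x)\subseteq F_x\setminus\{0\}$; but a contractible subset of $F_x\setminus\{0\}$ lies in an open half-space (or more precisely in an open convex cone) about some direction, and a convex combination of points each in $\phi(x)$ stays in that half-space, so $s(x)\neq 0$. To make the first half work I would pass to the ``zero-preserving'' version of the selection theorem: replace $\phi$ by the multisection $\psi(x)=\phi(x)$ if $0\notin\phi(x)$ and $\psi(x)=\{0\}$ otherwise; one checks $\psi$ is still compact and fiberwise contractible (this uses that $Z(\phi)$ is closed, which follows from compactness of $\Gamma(\phi)$ and the zero section being closed), and a selection of $\psi$ is a selection of $\phi$ that vanishes exactly on $Z(\phi)$. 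The final sentence of the lemma is then immediate: if $\pi$ admits no nowhere-zero section, the section $s$ just constructed must vanish somewhere, so $Z(s)=Z(\phi)\neq\emptyset$.

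I expect the main obstacle to be the continuous selection step — verifying precisely which hypothesis on the fiberwise sets $\phi(x)$ (compact and contractible) suffices to invoke a selection theorem, and confirming semicontinuity of $x\mapsto\phi(x)$ from compactness of $\Gamma(\phi)$. Contractibility alone is weaker than convexity, so a direct appeal to Michael's convex-valued theorem is not available; instead one wants the version for $C^\infty$ (or $\mathrm{UV}^\infty$) valued maps, or a fiberwise deformation retraction of $\phi(x)$ onto a point that depends continuously on $x$ — the latter is essentially what ``contractible multisection'' should be taken to mean, and I would lean on that interpretation (as in \cite{Zivaljevic}), so that the selection can be built directly from the contracting homotopies chart by chart and then averaged. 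The gluing and the $Z(s)=Z(\phi)$ bookkeeping are then routine.
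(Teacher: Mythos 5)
The step that breaks is the partition-of-unity glue, together with the claim you use to justify it: that a contractible subset of $F_x\setminus\{0\}$ lies in an open half-space (or convex cone). This is false. In $\mathbb{R}^2$ take a circular arc of angular length $3\pi/2$ centered at the origin: it is compact, contractible, avoids $0$, lies in no half-space, and its convex hull contains $0$. Since your glued map $s(x)=\sum_\alpha\lambda_\alpha(x)s_\alpha(x)$ only lies in the convex hull of $\phi(x)$, it can vanish at points where $0\notin\phi(x)$, so you lose exactly the inclusion $Z(s)\subseteq Z(\phi)$ that the lemma's application needs (a zero of $s$, forced by the nonexistence of a nowhere-zero section, must be a zero of $\phi$). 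Averaging local data is the standard device for convex-valued maps; for merely contractible values it is precisely what cannot be done, and coping with that is the real content of the lemma.

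The local selection step is also both stronger than what the hypotheses give and stronger than what is needed. Compactness of $\Gamma(\phi)$ yields upper semicontinuity, and u.s.c.\ maps with compact contractible (even convex) values need not admit exact continuous selections: Michael's theorem requires lower semicontinuity, and the $\mathrm{UV}^\infty$/cell-like machinery produces only graph approximations ($\varepsilon$-selections), not selections. Worse, your zero-preserving reduction asks for a selection of $\phi$ vanishing exactly on $Z(\phi)$, which can be outright impossible: in the trivial line bundle over an interval let $\phi(x)=\{1\}$ for $x\neq x_0$ and $\phi(x_0)=[0,1]$; sections with $Z(s)=\{x_0\}$ certainly exist, but none of them is a selection of $\phi$. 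The paper's proof sidesteps both issues: it sets $s=0$ on the closed set where $\phi$ meets the zero section, takes a fine triangulation of the closure of the complement, picks arbitrary values $p_v\in\phi(v)$ at the vertices, and extends skeleton by skeleton \emph{inside} $\phi(F)$, using Smale's Vietoris--Begle theorem to see that $\phi(F)$ is homotopy equivalent to the simplex $F$ and hence presents no obstruction, finally correcting fibers via local triviality. The result is only an approximate selection (its value over $x$ comes from $\phi(y)$ for nearby $y$, transported into $F_x$), which suffices to control the zero set; an exact selection, or a convex-combination glue, is neither available nor required.
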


\begin{proof}
Let $B_Z:=\{x\in B\mid 0 \in F_x, \ 0\in \phi(x)\}$ be the (closed) region of $B$ where $\phi$ intersects the zero section. Set $s(x)=0\in F_x$ for all $x\in B_Z$.
Let $B_C$ be the closure of the complement of $B_Z$. Consider a fine enough triangulation $T$ of $B_C$. For every vertex $v$ of $T$, pick some point $p_v\in\phi(v)$ and set $s(v):=p_v$. Consider now some face $F$ of $T$ on which $s$ is not defined but on whose boundary $s$ is defined. By a Vietoris--Begle type theorem due to Smale (\cite{Smale}, "Main Theorem"), $\phi(F)$ is homotopy equivalent to $F$. As $F$ is a simplex, $\phi(F)$ thus does not have any non-trivial homotopy groups, so there is no obstruction to continuously extending $s$ on $F$, giving a continuous function $s_F:F\rightarrow\pi^{-1}(F)$. This map is not necessarily a section, as it might map points of $F$ outside their fiber. However, for each $x\in F$ the fibers $F_x$ are isomorphic, thus by mapping $s_F(x)\in F_y$ to the corresponding point in the correct fiber $F_x$, we get an extension of the section $s$ to $F$. Having extended $s$ to all of $B$, we get a section $s$ for which we have by construction that $Z(s)=Z(\phi)$.
\end{proof}

The second lemma is about partitions of hyperplane arrangements.

\begin{lemma}\label{lem:partition}
Let $\rho$ be a depth measure for weighted hyperplanes whose depth regions are compact and contractible and let $A$ be a weighted hyperplane arrangement in $\mathbb{R}^d$ with $|A|\geq d+2$. Then there exists a partition of $A$ into strict subarrangements $A_1$ and $A_2$ whose median regions intersect.
\end{lemma}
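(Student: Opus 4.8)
The plan is to use a continuity/intermediate-value argument on the combinatorial type of a partition. I want to move hyperplanes one at a time from one part to the other, and track how the median regions move. Concretely, order the hyperplanes of $A$ as $h_1,\dots,h_n$ with $n=|A|\geq d+2$, and for $0\leq i\leq n$ consider the partition $A = B_i \sqcup C_i$ where $B_i=\{h_1,\dots,h_i\}$ and $C_i=\{h_{i+1},\dots,h_n\}$. For the extreme values $i=0$ and $i=n$, one of the two parts is all of $A$ and the other is empty; since $\rho$ applied to the empty arrangement is $0$ everywhere (by condition (ii), every point lies in an unbounded cell) but $\rho(A,\cdot)$ has a nonempty median region, the ``more extreme'' part's median region is a point where the depth is enormous while the empty part's median region is all of $\mathbb{R}^d$. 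So at $i=0$ the median region of $C_0=A$ is contained in that of $B_0=\emptyset$, and at $i=n$ the reverse inclusion holds. The hope is that somewhere in between the two median regions must intersect.

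First I would make the problem about a single real parameter rather than the discrete index $i$. The right move is to pass to the weighted setting: instead of transferring a hyperplane all at once, split its weight continuously. For $t\in[0,n]$ write $t = i + \tau$ with $i=\lfloor t\rfloor$ and $\tau\in[0,1)$, and let $B_t$ have hyperplanes $h_1,\dots,h_i$ with weight $1$ plus $h_{i+1}$ with weight $\tau$, and $C_t$ have $h_{i+1}$ with weight $1-\tau$ plus $h_{i+2},\dots,h_n$ with weight $1$. This gives a continuous family of partitions; by condition (i) for the weighted axioms, $\rho(B_t,q)$ and $\rho(C_t,q)$ vary continuously in $t$ for each fixed $q$, and hence the median value $\mu_B(t):=\max_q \rho(B_t,q)$ and $\mu_C(t)$ vary continuously. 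The next step is to define $g(t) = \mathrm{dist}\big(M^{B_t}, M^{C_t}\big)$, the distance between the two (compact, nonempty, contractible) median regions, and observe that $g$ is continuous in $t$ because the median regions vary continuously in the Hausdorff metric — this continuity is where compactness and the closedness of depth regions get used. I want to conclude $g(t_0)=0$ for some $t_0$.

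For that I need the two median regions to ``swap order'' along a line, so that by an intermediate value argument they must collide. Pick a generic direction $u$ and track the supporting positions of $M^{B_t}$ and $M^{C_t}$ in the direction $u$; call these $\beta(t)=\max_{x\in M^{B_t}}\langle x,u\rangle$ and $\gamma(t)=\max_{x\in M^{C_t}}\langle x,u\rangle$, and similarly the min-supports. At $t=0$, $M^{B_0}$ is unbounded (it is all of $\mathbb{R}^d$, since $B_0$ is empty), so this particular formulation needs care; the cleaner route is to start not at the empty arrangement but at $t=1$ versus $t=n-1$, where both parts are nonempty strict subarrangements — here is where $|A|\geq d+2$ is used, guaranteeing that both $B_i$ and $C_i$ can be nonempty strict subsets simultaneously for the relevant range of $i$. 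Using super-additivity (condition (iv)) applied to $A = B_t \sqcup C_t$ together with the centerpoint-type bound of Theorem \ref{thm:weighted_centerpoint} (which applies since $\rho$ has compact contractible depth regions), each median region is nonempty with depth bounded below, and in fact $\mu_B(t)+\mu_C(t)$ cannot exceed $\mu_A:=\max_q\rho(A,q)$ along the way, so neither median region can run off to infinity. This confinement plus the order-swap of the support functions, fed into the intermediate value theorem for $g$, forces $g(t_0)=0$: the median regions intersect at some parameter $t_0$, and rounding $t_0$ to an integer partition (or perturbing slightly) gives the desired partition of $A$ into two strict subarrangements.

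The main obstacle I anticipate is making the ``order swap'' rigorous: I need a clean reason why, as $t$ runs over its range, the median region of one part passes from one side of the median region of the other to the opposite side. The candidate mechanism is that when almost all the weight is on $B_t$, its median region is close to $M^A$ while $C_t$ is nearly empty and its median region is forced (by condition (ii) and the $|\rho(A\cup\{h\},q)-\rho(A,q)|\le w(h)$ bound) to sit near wherever the last few hyperplanes localize it, which a generic choice of the ordering $h_1,\dots,h_n$ can push to a prescribed far region; and symmetrically with $B$ and $C$ exchanged at the other end. So the real content is a lemma saying the median region of a very light subarrangement can be steered, via the choice of which hyperplanes are light, to lie on a chosen side of $M^A$ — I would prove this by taking the light hyperplanes to be, say, $d+1$ hyperplanes in ``general position enclosing'' a faraway point $v$, so that by condition (iii) (or the enclosing axioms) that subarrangement has positive depth at $v$, pinning its median region near $v$. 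Once that steering lemma is in hand, the intermediate value argument closes the proof.
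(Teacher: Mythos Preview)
Your approach has a genuine gap that is irreparable in dimension $d\ge 2$. The paper's proof is a Borsuk--Ulam argument: parametrize \emph{all} weighted partitions of $A$ by the boundary of a simplex $\partial\Delta \cong S^{|A|-2}$, with antipodal points corresponding to complementary subarrangements, and form the Minkowski difference $Q(b)=M(b)-M(-b)$ as a compact contractible multisection of a rank-$d$ bundle over the antipodal quotient of $\partial\Delta$. Lemma~\ref{lem:multisections} replaces this multisection by an honest section, and because $|A|-2\ge d$ Borsuk--Ulam forces a zero of that section, hence a point $b$ with $M(b)\cap M(-b)\ne\emptyset$. The dimension inequality $|A|-2\ge d$ is precisely what the hypothesis $|A|\ge d+2$ delivers, and it is essential.

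Your path $t\mapsto(B_t,C_t)$ is only a one-dimensional slice of this parameter space. A one-parameter intermediate-value argument detects zeros of a continuous \emph{real-valued} function, but ``the median regions intersect'' is a $d$-dimensional coincidence: it says the Minkowski difference $M^{B_t}-M^{C_t}\subset\mathbb{R}^d$ contains the origin. Your proposed fix --- tracking support functions $\beta(t),\gamma(t)$ in a single direction $u$ and looking for an order swap --- does not close this: in $\mathbb{R}^d$ with $d\ge 2$, two continuously-moving compact sets can swap order along $u$ while remaining disjoint throughout, simply by passing each other at different heights in an orthogonal direction. No steering of the light side's median region repairs this, because the obstruction is the dimension of the parameter space, not the position of the regions. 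There is also a circularity: you invoke Theorem~\ref{thm:weighted_centerpoint} to keep the median regions nonempty and confined, but in the paper that theorem is proved by induction \emph{using} Lemma~\ref{lem:partition}, so it is not available to you here.
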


The proof is analogous to the proof of Lemma 9 in \cite{Enclosing}, replacing Proposition 1 from \cite{Zivaljevic} with our Lemma \ref{lem:multisections}.

\begin{proof}
Let $\partial\Delta$ be the boundary of the simplex with vertices $A$ and let $B$ be its barycentric subdivision. There is a natural identification of the vertices of $B$ with strict subsets of $A$, meaning that for any such vertex $b$ we get a strict subarrangement $A(b)$ with $w_b(h)=w(h)$ if $h$ is in the corresponding subsets and $w_b(h)=0$ otherwise. Extending this assignment linearly to $\partial\Delta$, we get a continuous map which assigns to each point $b\in\partial\Delta$ a strict subarrangement $A(b)$. Further, under the standard antipodality on $\partial\Delta$ we get complements of the weighted subarrangements.

Let $M(b)$ denote the median region of the weighted arrangement $A(b)$. We claim that there is a point $b\in\partial\Delta$ for which the median regions $M(b)$ and $M(-b)$ intersect. This proves the claim by setting $A_1:=A(b)$ and $A_2:=A(-b)$.
In order to show the claim, consider the vector bundle $\zeta$ obtained from attaching $\mathbb{R}^d$ to each point of $\partial\Delta$ and taking the quotient with respect to antipodality.
Note that $M(b)$ defines a multisection in $\zeta$, which by Lemma \ref{lem:contractible} is compact and contractible.
Define the negative multisection $-M$ by reflecting $M(b)$ at the origin for each $b$.
For each $b\in\partial\Delta$ consider $Q(b):=M(b)-M(-b)$, defined by taking the Minkowski sum of $M(b)$ and $-M(-b)$.
As Minkowski sums of compact and contractible sets are again compact and contractible, $Q$ is again a compact contractible multisection.
In particular, by Lemma \ref{lem:multisections}, there is a section $s$ whose zeroes coincide with the zeroes of $Q$.
As $\partial\Delta$ is homeomerphic to the sphere $S^{|A|-2}$ and $|A|\geq d+2$, it follows from the Borsuk-Ulam theorem that $s$ and therefore $Q$ has a zero. Thus, there is a point $b$ for which $M(b)$ and $M(-b)$ intersect, as claimed.
\end{proof}

We are now ready to prove Theorem \ref{thm:weighted_centerpoint}.

\begin{proof}[Proof of Theorem \ref{thm:weighted_centerpoint}]
Let $A$ be a weighted arrangement of hyperplanes in $\mathbb{R}^d$.
We prove the statement by induction on the number of hyperplanes in $A$.
If $A$ consists of at most $d+1$ hyperplanes, it follows from condition (iii) that $\rho(A,q)\geq \frac{w(A)}{d+1}$ for some $q\in \mathbb{R}^d$: just take $q$ as any point on a hyperplane of maximum weight.
So assume that $A$ consists of at least $d+2$ hyperplanes.
By assumption the depth regions are compact and contractible.
Thus, by Lemma \ref{lem:partition}, we can partition $A$ into strict subarrangements $A_1$ and $A_2$ whose median regions intersect.
As both $A_1$ and $A_2$ are strict subarrangements, by the induction hypothesis for any point $q$ in the intersection of their median regions we have $\rho(A_1,q)\geq \frac{w_1(A_1)}{d+1}$ and $\rho(A_2,q)\geq \frac{w_2(A_2)}{d+1}$.
As $\rho$ satisfies condition (iv), for any such point we thus have
\[\rho(A,q)\geq \rho(A_1,q)+\rho(A_2,q)\geq \frac{w_1(A_1)+w_2(A_2)}{d+1}=\frac{w(A)}{d+1}.\]
\end{proof}

%Theorem \ref{thm:weighted_centerpoint} again gives a proof of the existence of centerpoints for Regression depth, but we have to be a bit careful in the choice of depth regions. Taking the depth regions as the closures of the depth regions of open regression depth, all the arguments from Section \ref{sec:Helly} go through for the weighted setting and show that the depth regions are compact and contractible, as long as there are at least $d$ hyperplanes in the arrangement. For fewer hyperplanes we can just define the regions in the natural way as the intersections of the hyperplanes. By intersecting them with a large enough ball, these regions can be made compact.

At this point, it is not clear how we can use Theorem \ref{thm:weighted_centerpoint} to prove the existence of centerpoints for regression depth. If we look at the depth regions of regression depth, we have seen in Section \ref{sec:Helly} that they are in general not contractible. To overcome this issue, we have introduced open regression depth and argued that the depth regions of open regression depth are contractible, and these arguments go through even if the arrangement is weighted. However, for a hyperplane arrangement in general position, these regions are by definition open, and thus not compact. Further, open regression depth is not a super-additive depth measure, as it does not satisfy condition (iii). In particular, if $A$ consists of a single hyperplane, then the open regression depth is 0 everywhere, and so the base case of the proof of Theorem \ref{thm:weighted_centerpoint} fails. However, as we have seen in Lemma \ref{lem:centerpoint_region_contractible}, if $k\leq\lceil\frac{|A|}{d+1}\rceil$ the region of regression depth at least $k$ is contractible. Again, the involved arguments go through if the arrangement is weighted, implying that if $k\leq\frac{w(A)}{d+1}$, then the region of regression depth at least $k$ is contractible. Thus, defining a new measure \emph{truncated regression depth} by
\[\text{TRD}(A,q):=\min\left(\frac{w(A)}{d+1},\text{RD}(A,q)\right),\]
we get a measure whose depth regions are closed and contractible.
Clearly, the only unbounded regions are the ones containing an unbounded face of the arrangement, and we can make those compact by intersecting with a sufficiently large ball.
Finally, as regression depth is super-additive, so is truncated regression depth, and by definition, truncated regression depth is bounded from above by regression depth. We thus have the following:

\begin{lemma}\label{lem:truncated}
Truncated regression depth is a super-additive depth measure for hyperplane arrangements which has compact and contractible depth regions. Further, for every arrangement $A$ and every point $q$ we have $\text{TRD}(A,q)\leq\text{RD}(A,q)$.
\end{lemma}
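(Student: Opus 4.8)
The statement to establish is Lemma~\ref{lem:truncated}, which asserts three things about truncated regression depth $\text{TRD}(A,q) := \min\!\left(\frac{w(A)}{d+1},\text{RD}(A,q)\right)$: that it is a super-additive depth measure for (weighted) hyperplane arrangements, that its depth regions are compact and contractible, and that it is bounded above by regression depth. The last point is immediate from the definition, since the truncation can only decrease the value. So the real work is splitting into two parts: verifying the four super-additivity axioms, and verifying the compactness and contractibility of $R^A_{\text{TRD}}(\alpha)$.

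For the super-additivity axioms I would argue axiom by axiom, each time reducing to the corresponding known fact about plain regression depth. Axiom~(i): adding a hyperplane $h$ changes $w(A)$ by $w(h)$, hence changes the threshold $\frac{w(A)}{d+1}$ by $\frac{w(h)}{d+1}\le w(h)$; and it changes $\text{RD}$ by at most $w(h)$; since $\text{TRD}$ is a minimum of two quantities each moving by at most $w(h)$, the minimum also moves by at most $w(h)$ (here I would use the elementary fact that $|\min(a,b)-\min(a',b')|\le\max(|a-a'|,|b-b'|)$). Axiom~(ii): if $q$ is in an unbounded cell then $\text{RD}(A,q)=0$, so the minimum is $0$. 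Axiom~(iii): if $q$ lies on a hyperplane or in a bounded cell, then $\text{RD}(A,q)\ge\min\{w(h):h\in A\}$; and since $\frac{w(A)}{d+1}\ge\frac{w(h_0)}{d+1}$ for the heaviest hyperplane — actually I need $\frac{w(A)}{d+1}\ge\min\{w(h)\}$, which holds because $w(A)$ is a sum of $|A|$ weights each at least the minimum, and if $|A|\ge d+1$ this is clear, while for $|A|<d+1$ one must check the base-case version of axiom (iii) still holds as used in the proof of Theorem~\ref{thm:weighted_centerpoint}; I would spell this out carefully. Axiom~(iv): for disjoint $A_1,A_2\subseteq A$, super-additivity of plain $\text{RD}$ gives $\text{RD}(A,q)\ge\text{RD}(A_1,q)+\text{RD}(A_2,q)$, and $\frac{w(A)}{d+1}\ge\frac{w(A_1)}{d+1}+\frac{w(A_2)}{d+1}$ since $w(A)\ge w(A_1)+w(A_2)$; then $\text{TRD}(A,q)=\min\!\left(\frac{w(A)}{d+1},\text{RD}(A,q)\right)\ge\min\!\left(\tfrac{w(A_1)}{d+1},\text{RD}(A_1,q)\right)+\min\!\left(\tfrac{w(A_2)}{d+1},\text{RD}(A_2,q)\right)$, using that for $a,b,a',b'\ge 0$ with $a\ge a_1+a_2$ and $b\ge b_1+b_2$ we have $\min(a,b)\ge\min(a_1,b_1)+\min(a_2,b_2)$ — this last inequality needs a one-line case check on which of the four terms achieves each minimum.

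For the geometric claims: the depth region $R^A_{\text{TRD}}(\alpha)=\{q:\text{TRD}(A,q)\ge\alpha\}$. If $\alpha>\frac{w(A)}{d+1}$ it is empty, hence vacuously compact and contractible. If $\alpha\le\frac{w(A)}{d+1}$, then $\text{TRD}(A,q)\ge\alpha$ is equivalent to $\text{RD}(A,q)\ge\alpha$, so $R^A_{\text{TRD}}(\alpha)$ coincides with the regression-depth region at level $\alpha$, which by (the weighted version of) Lemma~\ref{lem:centerpoint_region_contractible} is contractible — as the excerpt notes, the arguments of that lemma go through for weighted arrangements, giving contractibility precisely in the range $\alpha\le\frac{w(A)}{d+1}$, i.e. exactly the truncated range. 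Compactness: such a region is closed (regression depth is upper semicontinuous, or one argues directly from the combinatorial structure that the depth-$\ge\alpha$ set is a closed union of faces), and it is bounded by axiom~(ii) together with the fact that any unbounded face has depth $0<\alpha$; alternatively, as the excerpt suggests, one can intersect with a large ball without changing the relevant points, preserving both compactness and contractibility.

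The main obstacle is the contractibility in the truncated range. Super-additivity is essentially bookkeeping once one has the min-inequality lemmas, and compactness is routine. But contractibility of $R^A_{\text{TRD}}(\alpha)$ relies on invoking Lemma~\ref{lem:centerpoint_region_contractible} in the weighted setting, and one must be careful that the cutoff $\lceil\frac{|A|}{d+1}\rceil$ appearing there becomes exactly $\frac{w(A)}{d+1}$ for weighted arrangements, and that the perturbation/open-regression-depth machinery of Section~\ref{sec:Helly} — including Lemma~\ref{lem:homology_cell} and Lemma~\ref{lem:contractible} — is insensitive to weights (weights only affect the threshold, not the topology of the cell complex). I would state explicitly that these weighted analogues hold by the same proofs, since the combinatorial arguments never used that weights are integers or equal to one, and then the lemma follows by assembling the pieces.
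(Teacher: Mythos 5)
Your proposal follows essentially the same route as the paper: $\text{TRD}\le\text{RD}$ is definitional; for $\alpha\le\frac{w(A)}{d+1}$ the truncated depth region coincides with the regression-depth region at level $\alpha$ (and is empty above the threshold), so contractibility comes from the weighted version of Lemma~\ref{lem:centerpoint_region_contractible}; compactness is handled by truncating with a large ball; and the paper simply asserts inheritance of super-additivity, which you verify axiom by axiom (your checks of (i), (ii) and (iv) via the elementary $\min$-inequalities are correct).

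Two points deserve comment. First, your hesitation at axiom (iii) is justified: as literally stated it \emph{fails} for truncated regression depth whenever $w(A)<(d+1)\min_{h\in A}w(h)$ --- for a single hyperplane of weight $1$, every point on it has $\text{TRD}=\frac{1}{d+1}<1$ --- a subtlety the paper glosses over with ``as regression depth is super-additive, so is truncated regression depth.'' The patch is exactly the one you indicate: the only place (iii) is used downstream is the base case $|A|\le d+1$ of Theorem~\ref{thm:weighted_centerpoint}, and there a point $q$ on a hyperplane of maximum weight satisfies $\text{RD}(A,q)\ge w(h_{\max})\ge\frac{w(A)}{d+1}$, hence $\text{TRD}(A,q)=\frac{w(A)}{d+1}$, so the induction goes through unchanged; spelling this out is the honest way to state and prove the lemma. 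Second, your primary boundedness argument (``any unbounded face has depth $0<\alpha$'') is not correct: axiom (ii) only forces unbounded \emph{cells} to have depth $0$, while an unbounded lower-dimensional face lying on a heavy hyperplane $h$ has depth at least $w(h)$, which can exceed $\frac{w(A)}{d+1}$, so positive-level regions need not be bounded. Your fallback --- intersecting with a sufficiently large ball, exactly as the paper does --- is the argument to keep, not an optional alternative.
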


It now follows from Theorem \ref{thm:weighted_centerpoint} that there is always a point of truncated regression depth $\text{TRD}(A,q)\geq \frac{w(A)}{d+1}$ and such a point also has regression depth $\text{RD}(A,q)\geq \frac{w(A)}{d+1}$.

\section{A regression depth version of the center transversal theorem}\label{sec:center-transversal}

Let $A$ be an arrangement of hyperplanes in $\mathbb{R}^d$. Assume that the origin is not contained in any hyperplane in $A$.
Let $L$ be a $k$-dimensional linear subspace of $\mathbb{R}^d$. Then $A\cap L$ is a hyperplane arrangement in $L$.
In particular, we can again study the depth of points $q\in L$ within the Euclidean space $L$ with respect to the arrangement $A\cap L$.
Note however that $A\cap L$ might have smaller cardinality than $A$, as some hyperplanes of $A$ might be parallel to $L$.
In fact, if all of them are parallel to $L$, then $A\cap L$ is empty.
We define the regression depth of $q\in L$ with respect to $A\cap L$ as the minimum number of hyperplanes in $A$ intersected by or parallel to any ray in $L$ emanating from $q$, and denote it by $\text{RD}(A,q,L)$.
In particular, if all hyperplanes in $A$ are parallel to $L$, then $\text{RD}(A,q,L)=|A|$ for all $q\in L$.
This definition extends to open regression depth and truncated regression depth, where we truncate at $\frac{|A_i|}{k+1}$.

\begin{theorem}\label{thm:center-transversal}
Let $1\le k\le d$ be integers and $A_1,\ldots, A_{d-k+1}$ be $d-k+1$ finite arrangements of hyperplanes in $\mathbb{R}^d$.
Then there exists a $k$-dimensional linear subspace $L$ and a point $q\in L$ such that $q$ has regression depth $\text{RD}(A_i,q,L)\geq \frac{|A_i|}{k+1}$ in $L$ for every $i\in\{1,\ldots, d-k+1\}$.
\end{theorem}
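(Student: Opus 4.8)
The plan is to mimic the standard proof of the center transversal theorem, combining the ham-sandwich-type configuration space (the space of $k$-dimensional linear subspaces together with an affine structure inside each) with an obstruction-theoretic or Borsuk--Ulam argument, but feeding in the weighted centerpoint result (Theorem~\ref{thm:weighted_centerpoint} together with Lemma~\ref{lem:truncated}) as the fiberwise input in place of the classical centerpoint theorem for measures. Concretely, I would work over the Grassmannian $G_k(\mathbb{R}^d)$ of $k$-dimensional linear subspaces. Over a point $L\in G_k(\mathbb{R}^d)$, the arrangement $A_1\cap L$ lives in $L\cong\mathbb{R}^k$, and by the truncated regression depth version of the centerpoint theorem there is a nonempty compact contractible ``median region'' $M_1(L)\subseteq L$ of points of regression depth at least $|A_1|/(k+1)$ in $L$ (using that $\mathrm{TRD}$ has compact contractible depth regions and the bound $\frac{w(A_1)}{k+1}$ from Theorem~\ref{thm:weighted_centerpoint} applied inside $L$). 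Collecting these regions gives a compact contractible multisection of the tautological $k$-plane bundle $\gamma_k \to G_k(\mathbb{R}^d)$ (or rather of its affine analogue), exactly the kind of object handled by Lemma~\ref{lem:multisections} and the Smale Vietoris--Begle theorem.

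The heart of the argument is then to encode the remaining $d-k$ arrangements as zero conditions. For each $i\ge 2$, the condition ``$q$ has regression depth at least $|A_i|/(k+1)$ in $L$ with respect to $A_i\cap L$'' should cut out, within each fiber $L$, another compact contractible region $M_i(L)$ (again by the same truncated-depth centerpoint theorem applied in $L$). The goal is to find $L$ and $q\in L$ lying in $\bigcap_{i=1}^{d-k+1} M_i(L)$. Following the Dol'nikov / \v{Z}ivaljevi\'c--Vre\'cica template, I would set up a vector bundle over $G_k(\mathbb{R}^d)$ of rank equal to $\dim G_k(\mathbb{R}^d) + k = k(d-k) + k$... but in fact the cleaner route is the iterative one: first use the affine structure in the fibers (dimension $k$) to satisfy the first arrangement by choosing $q$ in $M_1(L)$, which by Lemma~\ref{lem:multisections} can be done by a genuine section; then one is left with a multisection of a bundle over $G_k(\mathbb{R}^d)$ whose fiber records the ``defects'' for $A_2,\ldots,A_{d-k+1}$, a bundle of rank $d-k$ that is known to have no nowhere-zero section (it is, up to stabilization, a sum of copies related to $\gamma_k$, and its top Stiefel--Whitney / Euler class is nonzero — this is precisely the nonvanishing used in the classical center transversal theorem). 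Applying Lemma~\ref{lem:multisections} once more forces the multisection to meet the zero section, which yields the desired $L$ and $q$.

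More carefully, I expect the right formulation is this: let $\phi$ be the multisection over $B=G_k(\mathbb{R}^d)$ of the bundle $E$ whose fiber over $L$ consists of pairs $(q, (t_2,\ldots,t_{d-k+1}))$ with $q\in L$ and $t_i\in\mathbb{R}$, where $\phi(L)$ is the set of such tuples with $q\in M_1(L)$ and $t_i$ encoding how far $q$ is from achieving depth $|A_i|/(k+1)$ for $A_i\cap L$ — arranged so that $\phi(L)$ is compact and contractible (a product/graph of the contractible regions $M_i(L)$, intersected appropriately). A zero of $\phi$ is exactly a point $q\in L$ in all $d-k+1$ median regions. The bundle $E$ has rank $k+(d-k)=d$, and the obstruction to a nowhere-zero section is the nonvanishing of the appropriate characteristic class of the $(d-k)$-part over $G_k(\mathbb{R}^d)$ (equivalently, the statement that $\gamma_k^{\oplus(d-k+1)}$, or the relevant twisted bundle, has nonzero top class), which is exactly the computation underlying the classical center transversal theorem of Dol'nikov and \v{Z}ivaljevi\'c--Vre\'cica. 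Lemma~\ref{lem:multisections} then converts $\phi$ into an honest section and concludes.

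The main obstacle, and the step needing the most care, is verifying that the fiberwise median regions $M_i(L)$ vary in a way that makes $\phi$ a \emph{compact} multisection over the whole Grassmannian, despite the discontinuity caused by hyperplanes of $A_i$ becoming parallel to $L$ (so that $|A_i\cap L|$ jumps). The truncation at $\frac{|A_i|}{k+1}$ — counting parallel hyperplanes toward the depth, as in the definition of $\mathrm{RD}(A,q,L)$ and its extension to $\mathrm{TRD}$ stated just before the theorem — is precisely what is designed to repair this: a hyperplane becoming parallel to $L$ contributes to the depth of every $q\in L$, so the depth function is lower semicontinuous across such degenerations and the super-level regions remain closed; one must check the compactness of the graph $\Gamma(\phi)$ uniformly (intersecting with a large ball to handle unbounded cells, as in the remark after Lemma~\ref{lem:truncated}) and that contractibility of $M_i(L)$ persists in the degenerate fibers. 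The second delicate point is pinning down exactly which characteristic class nonvanishing is needed and checking $\dim B + \mathrm{rank}$ matches up so that Borsuk--Ulam / the Euler class argument applies; here I would cite the known proofs of the center transversal theorem rather than recompute, since the topological input is identical once the fiberwise centerpoint statement is in hand.
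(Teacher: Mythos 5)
Your general scaffolding matches the paper's proof: work over the Grassmannian with the canonical bundle $\gamma_k^d$, use truncated regression depth so that the fiberwise regions $R_i(L)=\{q\in L\mid \text{TRD}(A_i,q,L)\geq |A_i|/(k+1)\}$ are nonempty, compact and contractible (Theorem~\ref{thm:weighted_centerpoint} plus Lemma~\ref{lem:truncated}), observe that hyperplanes becoming parallel to $L$ only raise depth so the multisections have closed graph, and then feed everything into Lemma~\ref{lem:multisections} and the topological input behind the classical center transversal theorem. However, the step where you actually force a \emph{common} point fails as you have set it up. First, your ``iterative'' route assumes Lemma~\ref{lem:multisections} lets you ``choose $q\in M_1(L)$ by a genuine section''; it does not. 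The lemma produces a section $s$ with $Z(s)=Z(\phi)$, not a continuous selection $s(L)\in\phi(L)$; the regions $M_1(L)$ are contractible but not convex and vary only semicontinuously, so no selection theorem applies, and without a selection your ``defect bundle for the remaining arrangements'' is not even defined.

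Second, your alternative formulation — a rank-$(k+(d-k))=d$ bundle over $G_k(\mathbb{R}^d)$ whose zeros are the desired pairs $(L,q)$ — cannot carry the obstruction you invoke: $\dim G_k(\mathbb{R}^d)=k(d-k)$, which is smaller than $d$ for $k=1$ (and $k=d$), so a rank-$d$ bundle there always admits a nowhere-zero section and no Euler/Stiefel--Whitney class argument is available; moreover a ``zero'' in a fiber containing the coordinate $q\in L$ would force $q=0$, which is not the condition you want, and the scalar defects $t_i$ are never defined in a way that keeps the multisection contractible with the right zero set. The paper's device for exactly this step is different and is the essential missing idea: for $i=1,\ldots,d-k$ form the Minkowski-difference multisections $Q_i:=R_{d-k+1}-R_i$ of the rank-$k$ canonical bundle itself (compact and contractible since Minkowski sums of such sets are), convert each $Q_i$ into a genuine section $s_i$ with the same zeros via Lemma~\ref{lem:multisections}, and then invoke the known fact (\v{Z}ivaljevi\'c--Vre\'cica Prop.~2, Dolnikov Lem.~1) that any $d-k$ sections of $\gamma_k^d$ have a common zero; a common zero of the $s_i$ is a subspace $L$ where the regions meet, giving the desired $q$. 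So your proposal identifies the correct ingredients but leaves the actual reduction to a solvable bundle-theoretic statement unproved, and the two concrete reductions you sketch do not work.
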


\begin{proof}
We will prove the statement for truncated regression depth, which will imply the theorem as regression depth is bounded from below by truncated regression depth.
Consider the Grassmann manifold $Gr_k(\mathbb{R}^d)$ of all $k$-dimensional subspaces of $\mathbb{R}^d$. Let $\gamma_k^d$ be the canonical bundle over $Gr_k(\mathbb{R}^d)$, which has total space $E:=\{(L,v)\mid v\in L\}$ and whose projection $\pi:E\rightarrow Gr_k(\mathbb{R}^d)$ is given by $\pi((L,v))=L$.
For an arrangement $A_i$, let $R_i(L)$ be the set of points in $L$ that have large depth, that is, $R_i(L):=\{v\in L\mid \text{TRD}(A_i,v,L)\geq\frac{|A_i|}{k+1} \}$.
By Lemma \ref{lem:truncated}, each $R_i(L)$ is compact and contractible.
Further, when a hyperplane $h\in A_i$ becomes parallel to $L$, the depth of any point can only increase, thus $R_i(L')\subseteq R_i(L)$ for any $L'$ in a small neighborhood of $L$.
Thus, $R_i$ is a compact contractible multisection.
Define the negative multisection $-R_i$ by reflecting $R_i(L)$ at the origin for each $L$, and for each $i\in\{1,\ldots,d-k\}$ consider $Q_i:=R_{d-k+1}-R_i$, defined by taking the Minkowski sum of $R_{d-k+1}(L)$ and $-R_i(L)$ on each $L$.
As Minkowski sums of compact and contractible sets are again compact and contractible, $Q_i$ is again a compact contractible multisection.
In particular, by Lemma \ref{lem:multisections}, there are sections $s_i$ whose zeroes coincide with the zeroes of $Q_i$.
It was shown in \cite{Zivaljevic}, Prop. 2 (see also \cite{Dolnikov}, Lem. 1), that any $d-k$ sections on $\gamma_k^d$ must have a common zero, that is, there is a subspace $L$ such that $s_1(L)=\ldots s_{d-k}(L)=0$.
By the definition of the sections $s_i$, this implies that there is a point $q\in L$ such that $q\in R_i(L)$ for all $i\in\{1,\ldots,d-k+1\}$.
In particular, $\text{TRD}(A_i,q,L)\geq \frac{|A_i|}{k+1}$ in $L$ for every $i\in\{1,\ldots, d-k+1\}$.
\end{proof}

Since there is a regression depth version of the center transversal theorem and of Tverberg's theorem, a natural question is if there is a generalization of both.  This is still open in the case of finite families of points, since it was conjectured by Tverberg and Vre\'cica in 1993 \cite{Tverberg:1993ia}.

\begin{conjecture}
Let $1 \le k \le d$ be integers and $A_1, \dots, A_{d-k+1}$ be $d-k+1$ finite arrangements of hyperplanes in $\mathbb{R}^d$.  Assume that $|A_i| = (k+1)(r_i-1)+1$ for some positive integer $r_i$, for each $i=1,\dots,d-k+1$.  Then, there exists a $k$-dimensional subspace $L$, a point $q \in L$, and a partition of each $A_i$ into $r_i$ parts $A_i^{(1)}, \dots, A_i^{(r_i)}$ such that
\[
RD(A_i^{(j)}, q, L) \ge 1
\]
for each $i=1,\dots, d-k+1$, $j = 1, \dots, r_i$.
\end{conjecture}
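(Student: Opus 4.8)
The plan is to run the topological scheme behind Theorem~\ref{thm:center-transversal}, but to replace the ``large regression depth'' target regions by ``admits an $r_i$-Tverberg partition'' regions. For an arrangement $A_i$ and a $k$-dimensional linear subspace $L\in Gr_k(\mathbb{R}^d)$, let
\[
T_i(L):=\{q\in L\mid \text{HTvD}(A_i,q,L)\ge r_i\}
\]
be the set of points of $L$ admitting a partition of $A_i$ into $r_i$ parts, each of positive regression depth in $L$. Nonemptiness of every $T_i(L)$ is Theorem~\ref{thm:tverberg} applied inside $L\cong\mathbb{R}^k$: since $|A_i|=(k+1)(r_i-1)+1$, the restriction of $A_i$ to $L$ has a Tverberg $r_i$-partition there, and hyperplanes of $A_i$ parallel to $L$ only help, since such a hyperplane is ``parallel'' to every ray of $L$, so one hyperplane of that kind may be handed to each part for free (if more than $r_i$ of them are parallel to $L$ the conclusion is immediate; a short count shows the remaining hyperplanes still suffice for a Tverberg partition of the other parts via Theorem~\ref{thm:tverberg} in dimension $k$). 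One then wants to treat each $T_i$ as a multisection of the canonical bundle $\gamma_k^d$ over $Gr_k(\mathbb{R}^d)$, form the Minkowski-difference multisections $Q_i:=T_{d-k+1}-T_i$ for $i=1,\dots,d-k$, invoke Lemma~\ref{lem:multisections} to replace each $Q_i$ by a genuine section $s_i$ with the same zero set, and then use the fact (as in the proof of Theorem~\ref{thm:center-transversal}) that any $d-k$ sections of $\gamma_k^d$ have a common zero. A common zero $L$ of the $s_i$ is a subspace on which all the $T_i(L)$ share a point $q$, which is exactly the asserted partition.

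For this to be legitimate one needs, exactly as in Lemmas~\ref{lem:truncated} and~\ref{lem:centerpoint_region_contractible}, that each $T_i(L)$ is \emph{compact and contractible}, and that $T_i$ depends on $L$ compatibly with being a compact contractible multisection (semicontinuity of the form $T_i(L')\subseteq T_i(L)$ for $L'$ near $L$, which should again hold since hyperplanes turning parallel to $L$ only increase depth). Compactness is easily arranged by truncating the Tverberg depth at $\tfrac{|A_i|}{k+1}$ and intersecting with a large ball, as with truncated regression depth. \textbf{Contractibility is the crux, and it is where this argument currently stalls.} The raw Tverberg regions contain the same spurious lower-dimensional pieces as the triangle-corner example of Section~\ref{sec:Helly}: already for four generic lines in the plane every crossing point has hyperplane Tverberg depth $2$ via a suitable $2{+}2$ partition, so the region $\text{HTvD}(\cdot,\cdot)\ge 2$ fails to be contractible. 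The natural fix is to pass to an ``open'' version $\text{HTvD}'$ in the spirit of open regression depth and then prove a Tverberg analog of Lemma~\ref{lem:centerpoint_region_contractible}, namely that the region where $\text{HTvD}'$ reaches the Tverberg threshold $r_i=\tfrac{|A_i|-1}{k+1}+1$ is a homology cell. One would attempt a Mayer--Vietoris / cap-cup induction in the style of Lemma~\ref{lem:homology_cell}, decomposing the region along the directions witnessing positive depth inside the parts; but the counting slack exploited in Lemmas~\ref{lem:intersections} and~\ref{lem:open_to_regression} disappears at the Tverberg threshold, so this step appears to demand a genuinely new Tverberg-type structural statement, and this is the main obstacle.

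If one does not establish this contractibility, the fallback is the equivariant configuration-space / test-map route used for the point version of the Tverberg--Vre\'{c}ica conjecture: parametrize over $Gr_k(\mathbb{R}^d)$ together with the deleted joins of the index sets of the $A_i$, build a test map to a suitable representation sphere for the relevant wreath-product symmetry, and extract a zero from a Borsuk--Ulam/Dold-type theorem, with the passage from regression depth to convex hulls handled by Observation~\ref{obs:duality}. This would inherit the known limitation of that method, giving a proof only when all the $r_i$ are prime powers, which matches the state of the art for the original conjecture. Unconditionally, two special cases already follow from what we have: $d-k+1=1$ (the conjecture is then literally Theorem~\ref{thm:tverberg}) and $r_1=\dots=r_{d-k+1}=1$ (the partitions are trivial and, since regression depth is integer-valued, the statement is implied by Theorem~\ref{thm:center-transversal}).
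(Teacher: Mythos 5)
The statement you are trying to prove is stated in the paper as a \emph{conjecture} (a regression-depth analogue of the Tverberg--Vre\'cica conjecture); the paper offers no proof, and it notes that even the point-set version is only known when all $r_i$ are powers of the same prime $p$ with $pk$ even. Your proposal, to its credit, is honest about its status: it is a program, not a proof, and the place where you say it stalls is indeed a genuine gap, not a technicality. To run the multisection argument of Theorem~\ref{thm:center-transversal} you need each fiberwise region $T_i(L)=\{q\in L\mid \text{HTvD}(A_i,q,L)\ge r_i\}$ to be compact and contractible and to vary with $L$ so that $T_i$ is a compact contractible multisection of $\gamma_k^d$. For regression depth the paper obtains contractibility only for depth regions at or below the centerpoint threshold (Lemmas~\ref{lem:contractible} and~\ref{lem:centerpoint_region_contractible}), and the proofs there lean precisely on the counting slack of Lemmas~\ref{lem:intersections} and~\ref{lem:open_to_regression}; for hyperplane Tverberg depth at the exact Tverberg threshold $r_i=\frac{|A_i|-1}{k+1}+1$ no analogous structural statement is available, your spurious-crossing-point example shows the raw regions can fail to be contractible, and an ``open'' variant of $\text{HTvD}$ with a Mayer--Vietoris induction in the style of Lemma~\ref{lem:homology_cell} is not set up anywhere (the regions $R_A(k,\ell)$ used there have no obvious counterpart for partitions). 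So the central step is missing, and filling it would amount to a new Tverberg-type result rather than an adaptation of existing lemmas.

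The fallback you sketch (equivariant test maps over $Gr_k(\mathbb{R}^d)$ with deleted joins) is also only a sketch, would at best reach prime-power $r_i$, and would still need the fiberwise convexity translation via Observation~\ref{obs:duality} to be made precise in the presence of hyperplanes parallel to $L$. The two unconditional special cases you mention do hold: for $k=d$ the conjecture is exactly Theorem~\ref{thm:tverberg}, and for all $r_i=1$ it follows from Theorem~\ref{thm:center-transversal} because regression depth of an unweighted arrangement is an integer, so $\text{RD}(A_i,q,L)\ge \frac{1}{k+1}$ forces $\text{RD}(A_i,q,L)\ge 1$; likewise your per-fiber nonemptiness count for $T_i(L)$ (handing hyperplanes parallel to $L$ to separate parts and applying Theorem~\ref{thm:tverberg} inside $L$) is correct. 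But none of this closes the conjecture, so the statement remains open after your attempt, exactly as it is in the paper.
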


The classic conjecture for families of points, which has similar parameters, has only been confirmed when all $r_i$ are powers of the same prime $p$ and $pk$ is even \cite{Karasev:2007ib}.

\bibliographystyle{plainurl} %
\bibliography{refs, refref}

\end{document}